\spnewtheorem{notation}[theorem]{Notation}{\bfseries}{}
\def\orcidID#1{\href{http://orcid.org/#1}{\protect\raisebox{-1.25pt}{\protect\includegraphics{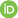}}}}
\newcommand{\customqed}{\hfill $\blacksquare$}
\newcommand{\qmarks}[1]{``#1''}
\DeclareTextFontCommand{\myemph}{\bfseries\em}
\newcolumntype{Y}{>{\centering\arraybackslash}X}
\newenvironment{centeritemize*}[1][]
  {\par\centering\begin{itemize*}[itemjoin=\qquad,#1]}
  {\end{itemize*}\par}
\colorlet{linkcolor}{red!60!black} 
\newcommand{\replabel}{\label} 
\NewDocumentCommand{\repeattheorem}{m}%
 {%
  \group_begin:
  
  \renewcommand{\replabel}[1]{\tag{\ref{##1}}}
  \prop_item:Nn \g_reptheorem_prop { #1 }
  \endtheorem
  \group_end:
 }%
\NewDocumentEnvironment{reptheorem}{m+b}
 {%
  \prop_gput:Nnn \g_reptheorem_prop { #1 } { \theorem #2 \endtheorem }
  \theorem#2\unskip\label{#1}\endtheorem%
 }{}%
\NewDocumentCommand{\repeatlemma}{m}
 {
  \group_begin:
  \renewcommand{\thelemma}{\ref{#1}}
  \renewcommand{\replabel}[1]{\tag{\ref{##1}}}
  \prop_item:Nn \g_replemma_prop { #1 }
  \endlemma
  \group_end:
 }
\NewDocumentEnvironment{replemma}{m+b}
 {
  \prop_gput:Nnn \g_replemma_prop { #1 } { \lemma #2 \endlemma }
  \lemma#2\unskip\label{#1}\endlemma
 }{}
\DeclarePairedDelimiter\set{\{}{\}}
\DeclarePairedDelimiterX\setvbar[2]{\{}{\}}{#1 \nonscript\;\delimsize \vert \nonscript\; #2}
\DeclarePairedDelimiterX\setcolon[2]{\{}{\}}{#1 : #2}
\newcommand{\mycirc}{\cdot}
\newcommand{\inv}{^{\raisebox{.2ex}{$\scriptscriptstyle-1$}}}
\DeclarePairedDelimiter{\Card}{\lvert}{\rvert} 
 \newcommand{\xmapsfrom}[2][]{%
    \ext@arrow3095\leftarrowfill@{#1}{#2}\mapsfromchar
}
\newcommand\noloc{%
	\nobreak
	\mspace{6mu plus 1mu}
	{:}
	\nonscript\mkern-\thinmuskip
	\mathpunct{}
	\mspace{2mu}
}
\DeclareMathOperator{\ima}{im}
\DeclareMathOperator{\id}{id}
\let\emptyset\varnothing
\newcommand\rsetminus{\mathbin{\mathpalette\rsetminusaux\relax}}
\newcommand\rsetminusaux[2]{\mspace{-4mu}
  \raisebox{\rsmraise{#1}\depth}{\rotatebox[origin=c]{-20}{$#1\smallsetminus$}}
 \mspace{-4mu}
}
\newcommand\rsmraise[1]{%
  \ifx#1\displaystyle .8\else
    \ifx#1\textstyle .8\else
      \ifx#1\scriptstyle .6\else
        .45%
      \fi
    \fi
  \fi}
\renewcommand{\setminus}{\rsetminus}
\renewcommand{\leq}{\leqslant}
\renewcommand{\geq}{\geqslant}
\newcommand*\meet{\wedge}
\newcommand*\join{\vee}
\newcommand*\bigjoin{\bigvee}
\newcommand{\tss}{\textsuperscript}
\newcommand{\bb}[1]{\mathbb{#1}}
\newcommand{\N}{\ensuremath{\bb{N}}\xspace}
\newcommand{\Q}{\ensuremath{\bb{Q}}\xspace}
\newcommand{\R}{\ensuremath{\bb{R}}\xspace}
\newcommand\restrict[1]{\raisebox{-.6ex}{$|$}_{#1}}
\let\isom\cong
\newcommand{\defeq}{\vcentcolon=}
\newcommand{\eqdef}{=\vcentcolon}
\newcommand{\dittotikz}{%
	\tikz{
		\draw [line width=0.12ex] (-0.2ex,0) -- +(0,0.8ex)
		(0.2ex,0) -- +(0,0.8ex);
		\draw [line width=0.08ex] (-0.6ex,0.4ex) -- +(-1.5em,0)
		(0.6ex,0.4ex) -- +(1.5em,0);
	}%
}
\newcommand{\true}{\ensuremath{\mathsf{True}}\xspace}
\newcommand{\false}{\ensuremath{\mathsf{False}}\xspace}
\let\lnottemp\lnot
\renewcommand{\lnot}{\lnottemp \hspace*{0.1em}}
\let\oldexists\exists
\let\exists\relax
\newcommand{\exists}{\hspace*{0em}\oldexists\hspace*{0.07em}}
\let\oldforall\forall
\let\forall\relax 
\newcommand{\forall}{\hspace*{0em}\oldforall\hspace*{0.07em}}
\gdef\scalefactor{#1}\begin{center}\proofSkipAmount \leavevmode}%
\scalebox{\scalefactor}{\DisplayProof}\proofSkipAmount \end{center} }
\newcommand{\cat}[1]{\mathsf{#1}} 
\newcommand{\Set}{\cat{Set}}
\newcommand{\Graph}{\cat{Graph}}
\newcommand{\ReflGraph}{\cat{ReflGraph}}
\newcommand{\opcat}[1]{#1^{\mathrm{op}}}
\newcommand{\inclusion}{{\mathsf{incl}}}
\newcommand{\HeytingAlgebra}{\cat{HeytAlg}}
\newcommand{\FunctorCat}[2]{{#1}^{#2}}
\newcommand{\Presheaf}[1]{\FunctorCat{\Set}{\opcat{#1}}}
\newcommand{\PresheafHat}[1]{\hat{#1}}
\newcommand{\indexcategory}{\cat{C}}
\newcommand{\PresheafDefault}{\Presheaf{\indexcategory}}
\newcommand{\PresheafHatDefault}{\PresheafHat{\indexcategory}}
\newcommand{\sSet}{\cat{sSet}}
\newcommand{\semisSet}{\cat{sSet}_+}
\newcommand{\ndimsSet}[1][n]{\cat{sSet}^{\leq #1}}
\newcommand{\ndimsemisSet}[1][n]{\cat{sSet}^{\leq #1}_+}
\newcommand{\semi}{_{+}}
\newcommand{\leqn}[1][n]{^{\leq #1}}
\newcommand{\degen}{\mathsf{degen}}
\newcommand{\ithface}[1][i]{\hat{#1}}
\newcommand{\ynhollow}[1][n]{{y(#1)^\partial}} 
\newcommand{\Sheaf}{\mathsf{Sh}}
\newcommand{\Separated}{\mathsf{Sep}}
\newcommand{\labels}{\mathcal{L}}
\newcommand{\FuzzySet}[1]{\cat{FuzzySet(#1)}}
\newcommand{\FuzzySetDefault}{\FuzzySet{\labels}}
\newcommand{\adjunction}[4]{%
    #1\colon #2%
    \mathrel{\vcenter{%
        \offinterlineskip\m@th
        \ialign{%
            \hfil$##$\hfil\cr
            \longrightarrow\cr
            \noalign{\kern-.3ex}
            {\scriptscriptstyle\bot}\cr
            \longleftarrow\cr
        }%
    }}%
    #3 \noloc #4%
}
\DeclareMathOperator{\Hom}{Hom}
\DeclareMathOperator{\MonoClass}{Mono}
\newcommand{\mono}{\rightarrowtail}
\newcommand{\Sub}{\mathsf{Sub}}
\newcommand{\pullbackalong}[1]{{#1}^\leftarrow}
\newcommand{\charac}[1]{\chi^{#1}}
\newcommand{\idrefl}{
    \begin{tikzcd}[sep=small, ampersand replacement=\&]
        \id_0 \ar[loop, refl]
    \end{tikzcd}
}
\newcommand{\strefl}{
    \begin{tikzcd}[sep=small, ampersand replacement=\&]
        s \ar[loop, refl] \& t \ar[loop, refl]
    \end{tikzcd}
}
\newcommand{\stotrefl}{
    \begin{tikzcd}[sep=small, ampersand replacement=\&]
        s \ar[loop, refl] \ar[r] \& t \ar[loop, refl]
    \end{tikzcd}
}
\newcommand{\BiColGraph}{\cat{BiColGraph}}
\colorlet{fcolor}{blue}
\colorlet{scolor}{red}
\definecolor{fcolor}{rgb}{0,0,1}
\definecolor{scolor}{rgb}{1,0,0}
\definecolor{amber}{rgb}{1.0, 0.75, 0.0}
\newcommand{\stot}{1} 
\newcommand{\stotp}{1'}
\newcommand{\refl}{\mathsf{refl}}
\newcommand{\topology}{\tau}
\newcommand{\topologybar}{\overline}
\DeclareSymbolFont{largesymbolsstix}{LS2}{stixex}{m}{n}
\DeclareMathDelimiter{\lParen}{\mathopen}{largesymbolsstix}{"DE}{largesymbolsstix}{"02}
\DeclareMathDelimiter{\rParen}{\mathclose}{largesymbolsstix}{"DF}{largesymbolsstix}{"03}
\DeclareMathDelimiter{\lBrack}{\mathopen}{largesymbolsstix}{"E0}{largesymbolsstix}{"06}
\DeclareMathDelimiter{\rBrack}{\mathclose}{largesymbolsstix}{"E1}{largesymbolsstix}{"07}
\DeclareMathDelimiter{\lBrace}{\mathopen}{largesymbolsstix}{"E8}{largesymbolsstix}{"0E}
\DeclareMathDelimiter{\rBrace}{\mathclose}{largesymbolsstix}{"E9}{largesymbolsstix}{"0F}
\DeclareMathDelimiter{\lbrbrak}{\mathopen} {largesymbolsstix}{"EE}{largesymbolsstix}{"14}
\DeclareMathDelimiter{\rbrbrak}{\mathclose}{largesymbolsstix}{"EF}{largesymbolsstix}{"15}
\DeclareFontFamily{OMX}{MnSymbolE}{}
\DeclareFontShape{OMX}{MnSymbolE}{m}{n}{
    <-6>  MnSymbolE5
   <6-7>  MnSymbolE6
   <7-8>  MnSymbolE7
   <8-9>  MnSymbolE8
   <9-10> MnSymbolE9
  <10-12> MnSymbolE10
  <12->   MnSymbolE12}{}
\DeclareFontShape{OMX}{MnSymbolE}{b}{n}{
    <-6>  MnSymbolE-Bold5
   <6-7>  MnSymbolE-Bold6
   <7-8>  MnSymbolE-Bold7
   <8-9>  MnSymbolE-Bold8
   <9-10> MnSymbolE-Bold9
  <10-12> MnSymbolE-Bold10
  <12->   MnSymbolE-Bold12}{}
\DeclareSymbolFont{largesymbolsmnsymbol}  {OMX}{MnSymbolE}{m}{n}
\DeclareMathDelimiter{\langlebar}{\mathopen}{largesymbolsmnsymbol}{'152}{largesymbolsmnsymbol}{'152}
\DeclareMathDelimiter{\ranglebar}{\mathclose}{largesymbolsmnsymbol}{'157}{largesymbolsmnsymbol}{'157}
\newcommand{\doublearrowtail@inner}[2]{%
  \vcenter{\offinterlineskip
    \halign{%
      ##\cr
      $\m@th#1\rightarrowtail$\cr
      \makebox[\widthof{$\m@th#1\rightarrowtail$}][s]{%
        $\m@th#1\leftarrowtail$%
      }\cr
    }%
  }%
}
\DeclareRobustCommand{\doublearrowtail}{%
  \mathrel{\mathpalette\doublearrowtail@inner\relax}%
}
\newcommand{\rightleftmapsto@inner}[2]{
  \vcenter{\offinterlineskip
    \halign{
      ##\cr
      $\m@th#1\mapsto$\cr
      \makebox[\widthof{$\m@th#1\mapsto$}][s]{
        $\m@th#1\mapsfrom$
      }\cr
    }
  }
}
\DeclareRobustCommand{\rightleftmapsto}{
  \mathrel{\mathpalette\rightleftmapsto@inner\relax}
}
\newcommand{\mapstofrom}{\mathpalette\@mapstofrom\relax}
\newcommand*{\@mapstofrom}[2]{%
   \dimen@\fontdimen8
       \ifx#1\displaystyle\textfont\else
       \ifx#1\textstyle\textfont\else
       \ifx#1\scriptstyle\scriptfont\else
       \scriptscriptfont\fi\fi\fi 3
   \mathrel{%
      \vcenter{%
         \vbox{%
            \baselineskip\z@skip
            \lineskip\z@
            \ialign{##\cr$#1\mapstochar\varrightarrow$\cr
            \noalign{\kern\dimen@}%
            $#1\varleftarrow\mapsfromchar$\cr}%
         }%
      }%
   }%
}
\DeclareFontFamily{U} {MnSymbolA}{}
\DeclareFontShape{U}{MnSymbolA}{m}{n}{
  <-6> MnSymbolA5
  <6-7> MnSymbolA6
  <7-8> MnSymbolA7
  <8-9> MnSymbolA8
  <9-10> MnSymbolA9
  <10-12> MnSymbolA10
  <12-> MnSymbolA12}{}
\DeclareFontShape{U}{MnSymbolA}{b}{n}{
  <-6> MnSymbolA-Bold5
  <6-7> MnSymbolA-Bold6
  <7-8> MnSymbolA-Bold7
  <8-9> MnSymbolA-Bold8
  <9-10> MnSymbolA-Bold9
  <10-12> MnSymbolA-Bold10
  <12-> MnSymbolA-Bold12}{}
\DeclareSymbolFont{MnSyA} {U} {MnSymbolA}{m}{n}
\DeclareMathSymbol{\rightmapsto}{\mathrel}{MnSyA}{40}
\DeclareMathSymbol{\leftmapsto}{\mathrel}{MnSyA}{42}
\newcommand{\pbpostrong}{PBPO$^{+}$\xspace}
\newenvironment{mysidepicture}[4]{
\nointerlineskip\noindent
\begin{tikzpicture}[overlay]%
  \node at (\textwidth,0) [anchor=north east,rectangle,inner sep=0,outer sep=0,xshift=#2,yshift=#3] {#4};%
\end{tikzpicture}%
\begin{adjustwidth}{0cm}{#1}%
}{%
\end{adjustwidth}%
}
\tikzset{vertex/.style={circle,fill=black,minimum size=0.8mm,inner sep=0mm,outer sep=0.5mm}}
\tikzset{metaedge/.style={amber,very thick}}
\tikzset{graphborder/.style={rectangle, rounded corners=2mm, draw=black, minimum size=4mm}}
\newcommand{\graphnode}[6][]{
  \node [graphborder,#1,outer sep=1mm] (#2) {\graphnodeinner{#3}{#4}{#5}{#6}};
}
\newcommand{\graphnodeinner}[4]{%
  \begin{tikzpicture}
    \useasboundingbox (-#1/2,-#2/2) rectangle (#1/2,#2/2);
    \begin{scope}[#3]
    #4
    \end{scope}
  \end{tikzpicture}%
}
\newcommand{\edge}[3][]{\draw[#1] (#2) to (#3);}%
\newcommand{\labelnorth}[1]{\node[anchor=south] at (current bounding box.north) {${#1}$};}%
\newcommand{\labelsouth}[1]{\node[anchor=north] at (current bounding box.south) {${#1}$};}%
\tikzset{refl/.style={loop, densely dotted, distance=.8em, in=110, out=70}}
\tikzset{downrefl/.style={loop, densely dotted, distance=.8em, in=290, out=250}}
\begin{document}

\title{Characterisation of Lawvere-Tierney Topologies on Simplicial Sets, Bicolored Graphs, and Fuzzy~Sets}
\titlerunning{Characterisation of LT-Topologies on Simplicial Sets}

\author{
    Alo\"is Rosset \inst{1} \orcidID{0000-0002-7841-2318} \and
    Helle Hvid Hansen \inst{2} \orcidID{0000-0001-7061-1219} \and
    J\"org Endrullis \inst{1} \orcidID{0000-0002-2554-8270}
}
\authorrunning{A.~Rosset, H.H.~Hansen, J.~Endrullis}

\institute{
    Vrije Universiteit Amsterdam, Amsterdam, Netherlands \\
    \email{\{a.rosset, j.endrullis\}@vu.nl}
    \and
    University of Groningen, Groningen, Netherlands \\
    \email{h.h.hansen@rug.nl}
}

\maketitle

\begin{abstract}
    Simplicial sets generalise many categories of graphs.
    In this paper, we give a complete characterisation of the Lawvere-Tierney topologies on (semi-)simplicial sets, on bicoloured graphs, and on fuzzy sets. 
    We apply our results to establish that `partially simple' simplicial sets and `partially simple' graphs form quasitoposes.
    \keywords{Lawvere-Tierney topology \and Simplicial set \and Quasitopos \and Graph rewriting}
\end{abstract}

\section{Introduction}

Simplicial sets are a generalization of graphs that allow for higher-dimensional objects such as triangles ($2$-dimensional), tetrahedra ($3$-dimensional), and so forth \cite{Friedman_2012_simplicial_sets,Riehl_2011_leisurely_intro_simplicial_sets}.
They subsume, for instance, multigraphs, reflexive graphs, symmetric graphs, undirected graphs, certain forms of hypergraphs, etc.
Simplicial sets play an important role in computer science and mathematics as geometrical and topological spaces are often visualised and analysed via triangulation.
Moreover, simplicial sets form the basis of simplicial homology \cite{Eilenberg_1950_simplicial_sets_singular_homology}.

The category of simplicial sets is a presheaf category and therefore a topos~\cite{MacLane_Moerdijk_1994,Rosset_Overbeek_Endrullis_2023_Fuzzy_presheaves_are_quasitoposes}.
Toposes are extensively studied in mathematics, as they offer a unifying framework between topology, algebraic geometry, logic and set theory.
Sometimes toposes are too restrictive, as certain categories of graphs, such as simple graphs, are quasitoposes and not toposes.
\emph{Quasitoposes}~\cite{Wyler_1991} generalise toposes, and they share many of the desirable properties. 
Quasitoposes have applications in computer science.
In particular, they provide a natural setting for non-linear algebraic graph transformation \cite{Behr_Harmer_Krivine_2021_Concurrency_theorems_quasitoposes} and the study of properties such as termination and confluence~\cite{Overbeek_Endrullis_Rosset_2023_PBPO+_Quasitopos,Overbeek_2023_Termination_of_graph_transformation}.
Graph transformation also involves the transformation of higher-order structures such as triangles or tetrahedra, see further~\cite{Brown_Patterson_Hanks_Fairbanks_2022_conference,Plump_1995_On_termination,plump2018modular}.
Quasitoposes are also employed in the study of intuitionistic logic~\cite[Chapter 3]{Wyler_1991}.

Determining whether a category is a quasitopos can be very challenging.
For this purpose, it can be helpful if the category under consideration can be obtained from existing (quasi)toposes via some transformation that preserve quasitoposes.
Unfortunately, only few such transformations are known.
One well-known technique is Artin Glueing \cite{Carboni_Johnstone_1995_Artin_glueing,Johnstone_Lack_Sobocinski_2007_Quasitoposes_quasiadhesive_artin_glueing}.
However, it is not clear that Artin glueing is applicable in the setting of simplicial sets.
In this paper, we employ a different technique: Given a \emph{Lawvere-Tierney topology} $j$ on a (quasi)topos, the subcategories $\Separated_j$ and $\Sheaf_j$ of $j$-separated elements and $j$-sheaves, respectively, are again quasitoposes, as proven in \cite[Thm.~10.1]{Johnstone_1979_On_a_topological_topos} and \cite[Thm.~45.5]{Wyler_1991}.

The contributions of the paper are as follows.
We provide a complete characterisation of the Lawvere-Tierney topologies on 
(i) ($n$-dimensional) {(semi-)} simplicial sets in \cref{sec:top-simplicial}, 
(ii) bicolored graphs in \cref{sec:bicolour}, and 
(iii) fuzzy sets in \cref{sec:fuzzy_case}.
We briefly summarise the characterisations of topologies formulated in terms of their corresponding closure operators \cite[21.1]{McLarty_1992_Elementary_categories_elementary_toposes}.
\begin{enumerate}[label=(\roman*)]
    \item 
    Topologies on simplicial sets have a binary choice for each dimension $k\in\N$: the closure of the topology can either add all possible elements of dimension $k$ or not. 
	Whenever elements of dimension $k$ are added, then the separated elements are $k$-simple simplicial sets, i.e., there are no parallel elements of dimension $k$. 
	This generalises the notion of simple graphs.
	We refer to the separated elements as \emph{partially simple simplicial sets}.

    \item 
    Topologies on bicoloured graphs ensure that \emph{partially simple graphs} form a quasitopos.
	A partially simple graph is a graph with two types of edges. One type allows parallel edges, and the other does not. 

    \item 
    Topologies on $\FuzzySetDefault$ are either trivial or are in bijection with \emph{nuclei} $\phi \colon \labels \to \labels$.
\end{enumerate}

\section{Preliminaries}

We assume the reader is familiar with basic notions of category theory
\cite{Awodey_2006, MacLane_1971}.

\subsection{Graph theory}

In this paper, a \emph{graph} is synonymous with a \emph{directed multigraph}, as is usually the case in the graph rewriting literature.
In directed multigraphs, edges are directed, and parallel edges and loops are allowed.

\begin{definition}
	\label{def:graph}
	$\Graph$ denotes the category of graphs and graph homomorphisms.
	\begin{itemize}[topsep=2pt, itemsep=1pt]
		\item 
		A \myemph{graph} $A$ consists of a set $A(V)$ of vertices, a set $A(E)$ of edges, and source and target functions $A(s),A(t) \colon A(E) \to A(V)$.

		\item 
		A \myemph{graph homomorphism} $f \colon A \to B$ between graphs $A,B$ is a pair  of functions ${f_V \colon A(V) \to B(V)}$ and $f_E \colon A(E) \to B(E)$ that respect sources and targets, i.e., 
		\(
			(B(s),B(t)) \cdot f_E = (f_V \times f_V) \cdot (A(s),A(t)).
		\)
	\end{itemize}
\end{definition}

\newpage

\begin{definition}
	\label{def:reflexive_graph}
	$\ReflGraph$ denotes the category of reflexive graphs and their homomorphisms.
	\begin{itemize}[topsep=2pt, itemsep=1pt]
		\item 
		A \myemph{reflexive graph} consists of a graph $A$ and a function ${A(\refl) \colon A(V) \to A(E)}$ that satisfies $A(s) \cdot A(\refl) = A(t) \cdot A(\refl) = \id_{A(V)}$.
		In other words, $\refl$ points out for each vertex $v \in A(V)$ a loop on $v$.
		We call the loops in the image of $\refl$ \emph{reflexive loops} and denote them with dotted arrows, e.g. \smash{\(
			\begin{tikzcd}
				v 
				\arrow[loop, "\refl(v)"', densely dotted, distance=.8em, in=200, out=160]
			\end{tikzcd}
			\)},
		to distinguish them from other edges and loops that are not in the image of $\refl$.
			
		\item 
		A \myemph{reflexive graph homomorphism} $f \colon A \to B$ is a graph homomorphism that preserves reflexive loops,  i.e., $f_E \cdot A(\refl) = B(\refl) \cdot f_V$.
	\end{itemize}
\end{definition}

		

\subsection{Category theory}

\begin{definition}
    \label{def:presheaf}
    A \myemph{presheaf} on a category $\indexcategory$ is a functor $F \colon \opcat{\indexcategory} \to \Set$.
    A morphism of presheaves is a natural transformation between functors.
    The category of presheaves on $\indexcategory$ is usually denoted $\smash{\PresheafDefault}$ or $\PresheafHatDefault$. 
\end{definition}

\begin{example}
    \label{ex:graph_reflexive_graph_are_presheaf_categories}
    $\Graph$ and $\ReflGraph$ are two well-known examples of presheaves.
    Here are their respective index categories:
    \begin{equation*}
        \opcat{\indexcategory} = \quad
        \begin{tikzcd}
            0 & 1
            \arrow["t", shift left=1, from=1-2, to=1-1]
            \arrow["s"', shift right=1, from=1-2, to=1-1] 
        \end{tikzcd}
        ,\quad \text{and} \quad
        \opcat{\indexcategory} = \quad
        \begin{tikzcd}
            0 & 1
            \arrow["{\refl}"{description}, from=1-1, to=1-2]
            \arrow["t", shift left=2.5, from=1-2, to=1-1]
            \arrow["s"', shift right=2.5, from=1-2, to=1-1] 
        \end{tikzcd}
        \text{ where }
        \left\{
        \begin{aligned}
            s \cdot \refl &= \id_{V} \\
            t \cdot \refl &= \id_V
        \end{aligned}
        \right.
    \end{equation*}
\end{example}

Our goal is to show that certain subcategories of presheaf categories are quasitoposes.
We thus recall what a quasitopos is, along with the required notion of (strong) subobject classifiers which we will use throughout the paper.

\medskip
\noindent\begin{minipage}{.78\linewidth}
	\begin{definition}[{\cite[Def.~8.15]{Awodey_2006}}]
		\label{def:subobject}
		Monomorphisms into an object $C$ form a class denoted $\MonoClass(C)$.
		There is a preorder on this class: $m \leq m'$ if there exists $u$ such that $m = m'u$. 
		The equivalence relation $m \simeq m' \defeq (m \leq m' \; \meet \; m \geq m')$ is equivalent to having $m = m'u$ for some isomorphism $u$.
		The \myemph{subobjects} of $C$ are the equivalence classes of monomorphisms.
		If each representative $m \colon A \mono C$ of a subobject of $C$ is a strong monomorphism, then the subobject is called a \emph{strong subobject}.
	\end{definition}
\end{minipage}
\hfill
\begin{minipage}{.2\linewidth}
	\begin{center}
		\begin{tikzcd}[column sep=2mm, row sep=5mm,ampersand replacement=\&]
			A \&\& B \\
			\& C \& 
			\ar[from=1-1, to=2-2, tail, "m"']
			\ar[from=1-1, to=1-3, dotted, "u"]
			\ar[from=1-3, to=2-2, tail, "m'"]
		\end{tikzcd}
	\end{center}
\end{minipage}
\medskip

\begin{example}
	In $\Set$, $\Graph$, and more generally any presheaf category $\PresheafDefault$, a subobject $A' \mono A$ can always be considered w.l.o.g.~to be of the form $A' \subseteq A$.
\end{example}

\medskip
\noindent\begin{minipage}{.78\linewidth}
	\begin{definition}[{\cite[Def.~14.1]{Wyler_1991}}]
		\label{def:subobject_classifier}
		Let $\cat{C}$ be a category with finite limits, and $\mathcal{M}$ be a class of monomorphisms.
		An \myemph{$\mathcal{M}$-subobject classifier} in $\cat{C}$ is a monomorphism $\true \colon 1 \mono \Omega$, where $1$ denotes the terminal object, such that for every monomorphism $m \colon A \mono B$ in $\mathcal{M}$, there exists a unique $\charac{A} \colon B \to \Omega$, also denoted $\charac{m}$, called \myemph{characteristic function}, such that
		the diagram on the right is a pullback square.
		The object $\Omega$ is called the \myemph{classifying object}.
		When unspecified, $\mathcal{M} = \MonoClass$ is the class of all monomorphisms.
	\end{definition}
\end{minipage}
\hfill
\begin{minipage}{.2\linewidth}
	\begin{center}
		\begin{tikzcd}[column sep=5mm, row sep=5mm,ampersand replacement=\&]
			A \& 1 \\
			B \& \Omega
			\ar[from=1-1, to=1-2, "!"]
			\ar[from=1-1, to=2-1, tail, "m"']
			\ar[from=1-2, to=2-2, "\true"]
			\ar[from=2-1, to=2-2, "\charac{A}"']
			\ar[from=1-1, to=2-2, "\lrcorner"{anchor=center, pos=0.125}, draw=none]
		\end{tikzcd}
	\end{center}
\end{minipage}
\medskip

\begin{example}
    \label{ex:classifying_object_Omega_in_Set_and_Graph}
    Here are some examples of classifying object.
    \begin{itemize}[topsep=0pt]
        \item 
        In $\Set$, we have $\Omega = \set{0,1}$ and $\true \colon \set{\cdot} \to \Omega \colon \cdot \mapsto 1$.
        Given a subset $A \subseteq B$, the function $\charac{A} \colon B \to \Omega$ is the usual \emph{characteristic function} that sends every element of $A$ to $1$ and the rest to $0$.
    \end{itemize}
    \begin{mysidepicture}{3.5cm}{-4ex}{-2ex}{%
        \begin{tikzcd}[ampersand replacement=\&, scale cd=.9]
            0
                \arrow[loop, distance=1em, in=205, out=155, "0"']
                \arrow[r, "t"', bend right=15]
            \& 1
                \arrow[loop, distance=1em, in=295, out=245, "{(s, t)}"']
                \arrow[loop, distance=1em, in=115, out=65, "s \to t"']
                \arrow[l, "s"', bend right=15]
        \end{tikzcd}
    }
    \begin{itemize}[topsep=0pt]
        \item
        In $\Graph$~\cite{Vigna_2003_Simplegraph_quasitopos}, the terminal object is
        \(
            \set{ 
            \begin{tikzcd}
                \cdot \arrow[loop, distance=.8em, in=200, out=160]
            \end{tikzcd}
            }
        \), 
        the classifying object $\Omega$ is the graph shown on the right, and
        \smash{\(
            \true \colon
            \begin{tikzcd}
                \cdot 
                \arrow[loop, distance=.8em, in=200, out=160]
            \end{tikzcd}
            \mapsto 
            \begin{tikzcd}
                1
                \arrow[loop, distance=.8em, in=20, out=-20, "s \to t"'] 
            \end{tikzcd}
        \)}.
        Given a graph $A$ and a subgraph $A' \subseteq A$,  the characteristic function $\chi_{A'}$ sends vertices in $A'$ to $1$ and edges in $A'$ to $s \to t$.
        Vertices
    \end{itemize}
    \end{mysidepicture}
    \begin{itemize}[topsep=-2ex]
        \item []
        not in $A'$ are sent to $0$; and edges not in $A'$ are sent to $(s,t)$ if both endpoints are in $A'$, to $0$ if neither endpoint is in $A'$, and to $s$ or $t$ if only the source or target is in $A'$, respectively.
    \end{itemize}
\end{example}

The definitions of (elementary) \emph{toposes} and \emph{quasitoposes} are standard in the literature, see {e.g.~\cite[Def.~9]{Johnstone_Lack_Sobocinski_2007_Quasitoposes_quasiadhesive_artin_glueing} or \cite[Def.~19.3]{Wyler_1991}}.
For general references, see also~\cite{Johnstone_elephant_2002,MacLane_Moerdijk_1994,McLarty_1992_Elementary_categories_elementary_toposes}.

\begin{definition}
    \label{def:quasitopos}
    A \myemph{topos} is a category $\cat{E}$ satisfying items $1$-$3$ below. 
    Consequently, a topos also satisfies $4$-$6$. 
    A \myemph{quasitopos} is a category satisfying $1,3$-$6$.
    \begin{multicols}{2}
    \begin{enumerate}[leftmargin=.5cm, topsep=2pt, noitemsep]
        \item $\cat{E}$ admits all finite limits,
        \item $\cat{E}$ has a subobject classifier, and 
        \item $\cat{E}$ is cartesian closed.
        \item $\cat{E}$ admits every finite colimit,
        \item $\cat{E}$ has a strong-subobject classifier,
        \item $\cat{E}$ is locally cartesian closed.
    \end{enumerate}
    \end{multicols}
\end{definition}

The Yoneda embedding is central to the subobject classifier in presheaf categories.
Here is its definition.

\begin{definition}
    \label{def:contravariant_hom_functor}
    \label{def:yoneda_embedding}
    Let $\indexcategory$ be a locally small category.
    \begin{itemize}[topsep=2pt, label=$\bullet$]
        \item 
        The \myemph{contravariant $\Hom$-functor}
        ${\indexcategory(-,c) \colon \opcat{\indexcategory} \to \Set}$ is defined 
        on objects $b \in \opcat{\indexcategory}$ as the set of $\indexcategory$-morphisms $\indexcategory(b,c)$, and
        on $\opcat{\indexcategory}$-morphisms $\opcat{f}\colon a \rightarrow b$ (i.e., $f \colon b \to a$ in $\indexcategory$)
        as the precomposition by $f$:
        \[
             \indexcategory(b,c) \xrightarrow{\indexcategory(\opcat{f},c) = - \circ f} \indexcategory(a,c)\colon
             \;
             (b \xrightarrow{g} c)
             \ \xmapsto \ \ 
             (a \xrightarrow{f} b \xrightarrow{g} c).
        \]

        \item 
        The \myemph{Yoneda embedding} $y \colon \indexcategory \to \PresheafDefault$ acts as follows.
        For an object $c \in \indexcategory$, $y(c \in \indexcategory) = \indexcategory(-,c)$ is the contravariant $\Hom$-functor.
        For a morphism $g\colon b \to c$ in $\indexcategory$, $y(g) : \indexcategory(-,b) \to \indexcategory(-,c)$ is the natural transformation that postcomposes by $g$,
        i.e., for every $a \in \indexcategory$:
        \[
            \indexcategory(a,b) \xrightarrow{y(g)_a \defeq g \circ -} \indexcategory(a,c)\colon
            \;
            (a \xrightarrow{f} b)
            \ \xmapsto \ \ 
            (a \xrightarrow{f} b \xrightarrow{g} c).
        \]
    \end{itemize}
\end{definition}

The Yoneda embedding is known to give the \qmarks{building blocks} or \qmarks{generic figures} \cite{Reyes_2004_Generic_figures} of a presheaf category $\PresheafDefault$.

\begin{example}
    \label{ex:yoneda_building_block_in_graph}
    We describe the image of the Yoneda embedding in $\Graph$ and $\ReflGraph$.
    \begin{itemize}
        \item 
        In $\Graph$:
        \(
            y(0) = 
            {%
                \newcommand{\vx}{\node (vx) at (0mm,0mm) [vertex] {};}%
                \begin{tikzpicture}[node distance=15mm,baseline=-1mm,loop/.style={->,densely dotted,distance=3mm}]
                    \graphnode[]{gx}{6mm}{2mm}{}{\vx}
                \end{tikzpicture}%
            }
        \)
        and 
        \(
            y(1) =
            {%
                \newcommand{\va}{\node (va) at (-2mm,0mm) [vertex] {};}%
                \newcommand{\vb}{\node (vb) at (2mm,0mm) [vertex] {};}%
                \begin{tikzpicture}[node distance=15mm,baseline=-1mm,loop/.style={->,densely dotted,distance=3mm}]
                    \begin{scope}[local bounding box=a]
                        \graphnode[]{eab}{6mm}{2mm}{}{\va\vb\edge[->]{va}{vb}}
                    \end{scope}
                \end{tikzpicture}%
            }
        \).

        \item
        In $\ReflGraph$:
        \(
            y(0) = 
            {%
                \newcommand{\vx}{\node (vx) at (0mm,0mm) [vertex] {};}%
                \begin{tikzpicture}[node distance=15mm,baseline=-1mm,loop/.style={->,densely dotted,distance=3mm}]
                    \graphnode[]{gx}{6mm}{2mm}{}{\vx \draw [loop] (vx) to[out=180-35,in=180+35] (vx);}
                \end{tikzpicture}%
            }
        \)
        and
        \(
            y(1) =
            {%
                \newcommand{\va}{\node (va) at (-2mm,0mm) [vertex] {};}%
                \newcommand{\vb}{\node (vb) at (2mm,0mm) [vertex] {};}%
                \begin{tikzpicture}[node distance=15mm,baseline=-1mm,loop/.style={->,densely dotted,distance=3mm}]
                    \begin{scope}[local bounding box=a]
                        \graphnode[]{eab}{10mm}{2mm}{}{\va\vb\edge[->]{va}{vb}
                            \draw [loop] (va) to[out=180-35,in=180+35] (va);
                            \draw [loop] (vb) to[out=  0-35,in=  0+35] (vb);}
                    \end{scope}
                \end{tikzpicture}%
            }
        \).
    \end{itemize}
    
\end{example}

\subsection{Topologies}
\label{subsec:topologies}

Toposes admit enough structure to have an internal logic, whose principles are those of \emph{intuitionistic logic}~\cite{Goldblatt_2006_Topoi,McLarty_1992_Elementary_categories_elementary_toposes}, and the same observation can be made about quasitoposes~\cite[Chapter 3]{Wyler_1991}.
The \emph{truth values} are the global elements $1 \to \Omega$ of the classifying object $\Omega$.
For instance, $\true \colon 1 \to \Omega$ is already part of the definition of subobject classifier (\cref{def:subobject_classifier}).
The other standard logical connectives $\false \colon 1 \to \Omega$, negation $\lnot \colon \Omega \to \Omega$, conjunction, implication and disjunction $\meet, \Rightarrow, \join \colon \Omega \times \Omega \to \Omega$ can all be defined as characteristic functions of certain morphisms, see e.g.~\cite[Section 6.6]{Goldblatt_2006_Topoi}.

We give some details about these connective in the case of presheaf categories $\PresheafDefault$.
The terminal object $1$ is given by $1(c) = \set{\cdot}$ for all $c \in \cat{C}$.
When $\indexcategory$ is small, there is a functor that returns the set of all subobjects 
$\Sub_{\PresheafHatDefault}(B) \defeq \set{\text{subobjects } m \colon A \mono B}$.
For presheaf categories, the type of $\Sub$ is $\smash{\opcat{\PresheafHatDefault}} \to \HeytingAlgebra$, because the set of subobjects of a presheaf always form a Heyting algebra, see e.g.~\cite[I.8.~Prop.~5]{MacLane_Moerdijk_1994} or \cite[1.4.13]{Johnstone_elephant_2002}.
The classifying object is the presheaf $\Omega \defeq \Sub_{\PresheafHatDefault} (y(-)) \colon \opcat{\indexcategory} \to \Set$, and we have $\true_c \colon 1(c) \to \Omega(c) \colon \cdot \mapsto y(c)$ \cite[I.3, I.4]{MacLane_Moerdijk_1994}.
For $c \in \indexcategory$, $\Omega(c)$ is thus the Heyting algebra of all subpresheaves of $y(c)$, which are also called \textit{sieves}.
The connectives $\meet, \join \colon \Omega \times \Omega \to \Omega$ are then given pointwise as the intersection, respectively union, of two subpresheaves.
For $\lnot$ and $\Rightarrow$, see \cite[I.8.~Prop.~5]{MacLane_Moerdijk_1994}.

\begin{example}
    \label{ex:Hasse_diagram_Omega_in_Graph}
    We illustrate the classifying object $\Omega$ and its Heyting algebra structure in $\Graph$.
    Each Heyting algebra $(\labels, \leq)$ is visualised as a horizontal Hasse diagram, where the order increases to the right.\\[2mm]
    \[
        \begin{array}{rcl}
			\Omega(0) = \Sub(y(0)) &=&
			{%
				\newcommand{\vx}{\node (vx) at (0mm,0mm) [vertex] {};}%
				\begin{tikzpicture}[node distance=15mm,baseline=-1mm, scale=1, every node/.style={scale=1}]
					\graphnode[]{empty}{6mm}{2mm}{}{\labelnorth{0}}
					\graphnode[right of=empty]{gx}{6mm}{2mm}{}{\vx \labelnorth{1}}
					\begin{scope}[metaedge]
						\draw (empty) to (gx);
					\end{scope}
				\end{tikzpicture}%
			}
			\\[6mm]
			\Omega(1) = \Sub(y(1)) &=&
			{%
				\newcommand{\va}{\node (va) at (-2mm,0mm) [vertex] {};}%
				\newcommand{\vb}{\node (vb) at (2mm,0mm) [vertex] {};}%
				\begin{tikzpicture}[node distance=15mm,baseline=-1mm, scale=1, every node/.style={scale=1}]
					\begin{scope}[local bounding box=a]
						\graphnode[]{empty}{6mm}{2mm}{}{\labelnorth{0}}
						\graphnode[right of=empty,yshift=3mm]{ga}{6mm}{2mm}{}{\va\labelnorth{s}}
						\graphnode[right of=empty,yshift=-3mm]{gb}{6mm}{2mm}{}{\vb\labelsouth{t}}
						\graphnode[right of=ga,yshift=-3mm]{gab}{6mm}{2mm}{}{\va\vb\labelnorth{(s,t)}}
						\graphnode[right of=gab]{eab}{6mm}{2mm}{}{\va\vb\edge[->]{va}{vb}\labelnorth{s\to t}}
						\begin{scope}[metaedge]
							\draw (empty) to (ga);
							\draw (empty) to (gb);
							\draw (ga) to (gab);
							\draw (gb) to (gab);
							\draw (gab) to (eab);
						\end{scope}
					\end{scope}
				\end{tikzpicture}%
			}
			\\[7mm]
		\end{array}
    \]
\end{example}

The next definition recalls the notion of Lawvere-Tierney topologies (LT-topologies), also referred to as \emph{(local) operators} or \emph{(local) modalities}.
For references, see e.g.~\cite{Lawvere_1970_LT_topologies} \cite[V.1]{MacLane_Moerdijk_1994} or \cite[A.4.4]{Johnstone_elephant_2002}.
In logical terms, LT-topologies address \qmarks{local truth} within the internal logic of a topos.

\begin{definition} 
    \label{def:topology_on_topos}
    A (Lawvere-Tierney) \myemph{topology} on a topos $\cat{E}$ is an $\cat{E}$-morphism $j \colon \Omega \to \Omega$ satisfying \\*[\medskipamount]
    \begin{enumerate*}[label={(\arabic*)}, itemjoin={,\hspace*{15mm}}, topsep=0pt]
        \item
        \label{it:topology_true}
        $j \circ \true = \true$

        \item 
        \label{it:topology_idempotent}
        $j \circ j = j$

        \item 
        \label{it:topology_meet}
        $j \circ \land = \land \circ (j \times j).$
    \end{enumerate*}
\end{definition}

\begin{remark}
	\label{lem:topology_monotone}
	In every presheaf topos, $\Omega$ is a Heyting algebra, hence any LT-topology $j$ can also be called an idempotent bounded meet-semilattice homomorphism.
	In particular, it follows directly from \ref{it:topology_meet} that $j$ is monotone:
	\begin{enumerate}[label=$(4)$, topsep=0pt]
		\item \label{it:topology_monotone}
		$a \leq b \implies j(a) \leq j(b)$.
	\end{enumerate}
\end{remark}

There is a correspondence between LT-topologies and closure operators.
We recall the definition of closure operators and then state the correspondence.

\begin{definition}[{\cite[21.1]{McLarty_1992_Elementary_categories_elementary_toposes}}]
    \label{def:closure_operator}
    A \myemph{closure operator} on a (quasi)topos $\cat{E}$ is a family of mappings $\topology_A \colon \MonoClass(A) \to \MonoClass(A)$ for each object $A$ satisfying 
    \ref{def:topology_on_quasitopos_1_increasing}-\ref{def:topology_on_quasitopos_5_strong} 
    below.
    I.e., every mono $m \colon A' \rightarrowtail A$ is sent to another mono $\topology_A(m) \colon \topologybar{A'} \rightarrowtail A$ with same codomain.
    When no confusion can arise, we sometimes write $\topologybar{m}$ or $\topologybar{A'}$ instead of $\topology_A(m)$.
    \begin{enumerate}[label=(\roman*), noitemsep, topsep=0pt]
        \item
        \label{def:topology_on_quasitopos_1_increasing}
        $\topology_A $ is increasing\footnote{Also known as extensive or inflationary~\cite{McLarty_1992_Elementary_categories_elementary_toposes}.}: $m \leq \topologybar{m}$,
        
        \item
        \label{def:topology_on_quasitopos_2_idempotent}
        $\topology_A $ is idempotent: $\topologybar{\topologybar{m}} \simeq \topologybar{m}$,
    
        \item 
        \label{def:topology_on_quasitopos_3_monotone}
        $\topology_A $ is monotone: $m_1 \leq m_2$ implies $\topologybar{m_1} \leq \topologybar{m_2}$,
    
        \item 
        \label{def:topology_on_quasitopos_4_stable_under_PB}
        $\topology_A $ is stable under pullbacks: for all $f : B \to A$, $\topologybar{\pullbackalong{f} m} \simeq \pullbackalong{f}(\topologybar{m})$,
    
        \item 
        \label{def:topology_on_quasitopos_5_strong}
        $\topology_A $ preserves strongness: $m$ strong implies $\topologybar{m}$ strong.
    \end{enumerate}
\end{definition}

\begin{lemma}[{\cite[Thm 21.1]{McLarty_1992_Elementary_categories_elementary_toposes}, \cite[Thm 45.5]{Wyler_1991}}]
    \label{lem:equivalence_LT-topologies_and_closure_operators}
    In a topos $\cat{E}$, there is a bijection between LT-topologies and closure operators.
    Given an LT-topology $j$ and $A' \mono A$, the closure $\overline{A'} \mono A$ is defined by $\charac{\overline{A'}} \defeq j \circ \charac{A'}$.
    Given a closure operator $\tau$, the corresponding LT-topology is defined by $j \defeq \charac{\tau(\true)} \colon \Omega \to \Omega$.
\end{lemma}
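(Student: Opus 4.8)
The engine of the statement is the defining property of the subobject classifier (\cref{def:subobject_classifier}): pulling $\true$ back along a morphism $g\colon A\to\Omega$ yields the unique subobject $\pullbackalong{g}\true\mono A$ whose characteristic map is $g$, so $g\mapsto\pullbackalong{g}\true$ is a bijection between $\cat{E}(A,\Omega)$ and the subobjects of $A$. I will use three consequences of this: (a) characteristic maps pull back by precomposition, $\charac{\pullbackalong{f} m}=\charac{m}\circ f$; (b) the subobject order corresponds to the pointwise order on characteristic maps; and (c) the subobject $\true\colon 1\mono\Omega$ has characteristic map $\id_\Omega$. I also use that in a topos every monomorphism is strong. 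The plan is to show that $\Phi\colon j\mapsto\tau_j$, with $\charac{\tau_j(A')}\defeq j\circ\charac{A'}$, lands among closure operators, that $\Psi\colon\tau\mapsto\charac{\tau(\true)}$ lands among LT-topologies, and that $\Phi$ and $\Psi$ are mutually inverse.

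First I would verify that $\Phi(j)$ satisfies the five axioms of \cref{def:closure_operator}. Stability \ref{def:topology_on_quasitopos_4_stable_under_PB} is immediate from (a), since both $\charac{\overline{\pullbackalong{f} A'}}$ and $\charac{\pullbackalong{f}\overline{A'}}$ equal $j\circ\charac{A'}\circ f$; preservation of strongness \ref{def:topology_on_quasitopos_5_strong} is automatic in a topos. For increasingness \ref{def:topology_on_quasitopos_1_increasing}, restricting the characteristic map $j\circ\charac{A'}$ of $\overline{A'}$ along $A'\mono A$ gives $j\circ\true=\true$ by \ref{it:topology_true}, so $A'$ factors through $\overline{A'}$. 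Idempotence \ref{def:topology_on_quasitopos_2_idempotent} is the identity $j\circ(j\circ\charac{A'})=j\circ\charac{A'}$, i.e.\ \ref{it:topology_idempotent}; and monotonicity \ref{def:topology_on_quasitopos_3_monotone} is precisely the monotonicity of $j$ from \cref{lem:topology_monotone} read through (b). Note that only monotonicity of $j$ is needed here, not the full meet axiom.

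Mutual inverseness is then two short computations. For $\Psi\circ\Phi=\id$, consequence (c) gives $\charac{\tau_j(\true)}=j\circ\charac{\true}=j\circ\id_\Omega=j$. For $\Phi\circ\Psi=\id$, writing $g\defeq\charac{A'}$ and applying stability of the given $\tau$ to $g$ and the generic subobject yields $\tau(A')=\tau(\pullbackalong{g}\true)\simeq\pullbackalong{g}(\tau(\true))$, whose characteristic map is $\charac{\tau(\true)}\circ g=\Psi(\tau)\circ\charac{A'}$; since the latter is by definition the characteristic map of $\Phi(\Psi(\tau))(A')$, we conclude $\Phi(\Psi(\tau))=\tau$. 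This single use of stability is exactly what forces a closure operator to be completely determined by its value $\tau(\true)$ on the generic subobject.

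It remains to confirm that $\Psi(\tau)=\charac{\tau(\true)}$ is a genuine LT-topology. Axioms \ref{it:topology_true} and \ref{it:topology_idempotent} fall out by pulling the generic subobject back along $\true$ (using increasingness, and that pulling $\true\colon1\mono\Omega$ back along $\true\colon1\to\Omega$ gives $\id_1$) and along $\Psi(\tau)$ itself (using idempotence of $\tau$). The hard part will be the meet axiom \ref{it:topology_meet}, which unwinds, via (a) and (b), to the assertion that $\tau$ preserves binary intersections: $\overline{A'\cap B'}\simeq\overline{A'}\cap\overline{B'}$. One inclusion is monotonicity, but the reverse is the crux and genuinely requires stability (it fails for a merely increasing, idempotent, monotone closure, such as topological closure). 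My plan for it is a density argument: stability along a mono shows that closure inside a subobject equals the ambient closure intersected with that subobject, whence $A'$ is dense in $\overline{A'}$; pulling this density back along $B'\cap\overline{A'}\hookrightarrow\overline{A'}$ shows that $A'\cap B'$ is dense in $B'\cap\overline{A'}$, giving $B'\cap\overline{A'}\leq\overline{A'\cap B'}$; applying $\tau$ relative to $\overline{A'}$ and invoking monotonicity and idempotence then yields $\overline{A'}\cap\overline{B'}\leq\overline{A'\cap B'}$. With meet-preservation established, \ref{it:topology_meet} follows and the bijection is complete.
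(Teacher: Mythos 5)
The paper itself gives no proof of this lemma: it is imported directly from McLarty (Thm.~21.1) and Wyler (Thm.~45.5), so there is no in-paper argument to compare yours against; what follows assesses your proof on its own merits. Your proof is correct and is in essence the standard textbook argument. The two assignments $\Phi$ and $\Psi$ are the right ones; the verification that $\Phi(j)$ satisfies axioms \ref{def:topology_on_quasitopos_1_increasing}--\ref{def:topology_on_quasitopos_5_strong} of \cref{def:closure_operator} is sound (stability from $\charac{\pullbackalong{f}m}=\charac{m}\circ f$, strongness being free in a topos, increasingness from $j\circ\true=\true$, idempotence from $j\circ j=j$, monotonicity via \cref{lem:topology_monotone}); and both inverse computations are exactly right. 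Your structural observation that $\Phi\circ\Psi=\id$ uses \emph{only} pullback-stability of $\tau$ is important: it is what makes it legitimate to invoke the identity $\charac{\tau(m)}=\Psi(\tau)\circ\charac{m}$ later, when checking the three topology axioms for $\Psi(\tau)$, without circularity. You also correctly isolated the crux: axiom \ref{it:topology_meet} for $\Psi(\tau)$ unwinds to intersection-preservation of $\tau$, which is not formal (it does fail for merely increasing, idempotent, monotone closures) and genuinely needs stability. Your density argument closes this correctly: stability along the mono $\overline{A'}\mono A$ gives that $A'$ is dense in $\overline{A'}$; density pulls back along $B'\cap\overline{A'}\mono\overline{A'}$, and one further application of stability along $B'\cap\overline{A'}\mono A$ yields $B'\cap\overline{A'}\leq\overline{A'\cap B'}$; closing relative to $\overline{A'}$ (stability plus monotonicity and idempotence) then turns this into $\overline{A'}\cap\overline{B'}\leq\overline{A'\cap B'}$. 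I checked that each of these steps is a legitimate instance of axiom \ref{def:topology_on_quasitopos_4_stable_under_PB}. Two minor caveats only: in a general topos your fact (b) must be read as the internal order on $\Omega$ rather than a literal pointwise order (the paper's \cref{lem:topology_monotone} is phrased for presheaf toposes), and the monotonicity of $j$ you invoke for axiom \ref{def:topology_on_quasitopos_3_monotone} is itself derived from the meet axiom, so your parenthetical \qmarks{only monotonicity is needed, not the full meet axiom} is true of that local step but is not a genuine weakening of hypotheses.
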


The strongness-preservation requirement \ref{def:topology_on_quasitopos_5_strong} is superfluous in a topos, because every mono is strong, but we state it to pave the way for the quasitopos setting in \cref{sec:fuzzy_case}.
Intuitively, the closure $\topologybar{A'}$ contains those elements of $A$ that are \qmarks{locally} in $A'$.

In the next two examples, we give the list of all topologies on $\Set$ and $\Graph$, identifying them with their corresponding closure operators.
We already employ a notation that we will continue in subsequent sections: in a topology $j^w$, the superscript $w$ is a binary string that indicates which elements are added by the closure operator corresponding to $j^w$.

\begin{example}
    \label{ex:topologies_on_set}
    There are exactly two topologies on $\Set$.
    As seen in \cref{ex:classifying_object_Omega_in_Set_and_Graph}, $\Omega = \set{0,1}$ and $\true=1$.
    By \ref{it:topology_true}, we must have $j(1) = 1$.
    Therefore, there is a choice for $j(0) \in \set{0,1} = \Omega$ which gives us the two topologies:
    \begin{enumerate}[topsep=0pt, noitemsep]
        \item 
        The \emph{discrete} topology $j^0 = \id_\Omega\colon 0 \mapsto 0$, which adds nothing.
        
        \item 
        The \emph{trivial} topology $j^1 = \true_\Omega\colon 0 \mapsto 1$, which adds all elements.\footnote{
            The notation $\true_X$ is common to represent the morphism $X \smash{\xrightarrow{!}} 1 \smash{\xrightarrow{\true}} \Omega$.
        }
    \end{enumerate}
\end{example}

In fact, the discrete and trivial topologies exist in every topos.

\begin{example}
    \label{ex:topologies_on_graphs}
    There are exactly $4$ topologies on $\Graph$~\cite{Vigna_2003_Simplegraph_quasitopos}.
    The binary string $w=w_0 w_1$ in the superscript of the topologies is there to indicate with $w_0$ whether vertices are added, and with $w_1$ whether the edges are added.
    Recall its classifying object $\Omega$ in \cref{ex:classifying_object_Omega_in_Set_and_Graph} and its Heyting alegbra structure in \cref{ex:Hasse_diagram_Omega_in_Graph}.
    For each topology, we describe the closure $\topologybar{A'}$ of a subgraph $A' \subseteq A$.
    \begin{enumerate}[topsep=0pt, noitemsep]
        \item 
        The \emph{discrete} topology, $j^{00} = \id_{\Omega}$ adds nothing, $\topologybar{A'} = A'$
        
        \item 
        The topology $j^{10}$ adds to $A'$ all vertices of $A$, i.e., $\topologybar{A'}(0) = A(0)$.
        
        \item 
        The \emph{double negation} topology, $j^{01} = \lnot \lnot$ adds all edges whose source and target are already in the subgraph:
        \begin{equation}
            \label{eq:closure_double_negation_topology_on_Graph}
            \overline{A'}(E) = A(E) \cap (A'(V) \times A'(V)).
        \end{equation}
        
        \item The \emph{trivial} topology, $j^{11} = \true_\Omega$ adds all vertices and all edges, $\topologybar{A'} = A$.
    \end{enumerate}
\end{example}

he notions of density and separatedness from standard topology, can also be defined for LT-topologies.
Recall that $\Q$ is called dense in $\R$ because closing it under the standard metric topology gives the entire space $\R$.
More precisely, for every real number $r \in R$ and every open interval around $r$, there exists a rational number $q \in Q$ in the open interval.
In standard topology, recall also that every separated space $B$ (a.k.a.\ Hausdorff space) has the property that functions $f \colon A \to B$ are fully determined by their restriction to any dense subset of their domain.

\begin{definition}[{\cite[p.~221]{MacLane_Moerdijk_1994}}]
    \label{def:dense_subobject}
    Given a topology $j$ on a topos, a subobject $A' \stackrel{}{\mono} A$ is \myemph{$j$-dense} if $\topologybar{A' \subseteq A} = (A \subseteq A)$.
\end{definition}

\begin{example}
    \label{ex:dense_subobjects}
    Here are two obivous examples of dense subobjects.
    \begin{itemize}[topsep=0pt, noitemsep]
        \item 
        For the discrete topology, only $A \mono A$ is dense in itself.

        \item 
        For the trivial topology, every $A' \mono A$ is dense.
    \end{itemize}
\end{example}

\medskip
\noindent\begin{minipage}{.8\linewidth}
\begin{definition}[{\cite[p.~223]{MacLane_Moerdijk_1994}}]
    \label{def:separated_elements_complete_elements_sheaves}
    Let $\cat{E}$ be a topos, $j$ be a topology on $\cat{E}$, and $B$ be an object in $\cat{E}$.
    \begin{itemize}[topsep=0pt, noitemsep]
        \item $B$ is \myemph{$j$-separated } if for every $A$, for every $j$-dense subobject $m\colon A' \mono A$ and every morphism $f\colon A' \to B$, there exists at most one factorisation $g\colon A \to B$ of $f$ through $m$.
        \item $B$ is \myemph{$j$-complete} if such a factorisation always exists.
        \item $B$ is a \myemph{$j$-sheaf} if it is $j$-separated and $j$-complete.
    \end{itemize}
    We denote with $\Separated_j$ and $\Sheaf_j$ the full subcategories of $\cat{E}$ consisting of the $j$-separated objects, and the $j$-sheaves, respectively.
\end{definition}
\end{minipage}
\hfill
\begin{minipage}{.3\linewidth}
    \centering
    \begin{tikzcd}[ampersand replacement=\&]
        A' \& \\
        A \& B
        \ar[from=1-1, to=2-2, "f"]
        \ar[from=1-1, to=2-1, tail, "m"']
        \ar[from=2-1, to=2-2, dotted, "g"']
    \end{tikzcd}
\end{minipage}
\medskip

One of the principal motivations for studying LT-topologies is that it enables the identification of two subcategories that can be classified as quasitoposes.

\begin{lemma}[{\cite[Thm.~10.1]{Johnstone_1979_On_a_topological_topos}}]
    \label{lem:separated_elements_and_sheaves_form_quasitoposes}
    Let $\cat{E}$ be a topos and $j$ be a topology on $\cat{E}$.
    Then $\Separated_j$ and $\Sheaf_j$ are quasitoposes.
\end{lemma}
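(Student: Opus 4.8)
The plan is to treat $\Separated_j$ and $\Sheaf_j$ uniformly as reflective subcategories of the topos $\cat{E}$ and to verify the quasitopos axioms $1$ and $3$--$6$ of \cref{def:quasitopos} directly, letting the sheaf case piggyback on the classical fact that $\Sheaf_j$ is itself a topos. First I would recall that both full inclusions have left adjoints --- the separation functor $s \colon \cat{E} \to \Separated_j$ and the associated-sheaf functor $a \colon \cat{E} \to \Sheaf_j$ --- and that both reflectors are \emph{left exact}, since each is built from the ``plus'' construction, which preserves finite limits. Two axioms are then immediate from reflectivity alone: a reflective subcategory is closed under every limit existing in the ambient category, so finite limits (axiom $1$) are created by the inclusion and computed as in $\cat{E}$; and finite colimits (axiom $4$) are obtained by forming the colimit in $\cat{E}$ and applying the reflector.

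For cartesian closedness (axiom $3$) I would invoke the reflection theorem for exponential ideals: a reflective subcategory of a cartesian closed category whose reflector preserves finite products is an exponential ideal, and is therefore itself cartesian closed with exponentials computed as in the ambient category. As $\cat{E}$ is a topos it is cartesian closed, and as $s$ and $a$ are left exact they preserve finite products, so the criterion applies to both $\Separated_j$ and $\Sheaf_j$. Local cartesian closedness (axiom $6$) I would obtain by running the same argument slice-wise: for $C$ separated (resp.\ a sheaf) one checks that the slice $\Separated_j / C$ (resp.\ $\Sheaf_j / C$) is again the category of separated objects (resp.\ sheaves) for the topology that $j$ induces on the slice topos $\cat{E}/C$, so each slice is cartesian closed by the previous step.

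The substantive part is the strong-subobject classifier (axiom $5$). Here I would take $\Omega_j$ to be the equalizer of $j,\id \colon \Omega \to \Omega$, i.e.\ the object of $j$-closed truth values, and use that $\Omega_j$ is a $j$-sheaf, hence lies in both subcategories. For $\Sheaf_j$ the classical theorem does the work: $\Omega_j$ is an honest subobject classifier there, and since $\Sheaf_j$ is a topos every mono is strong, so $\Omega_j$ is a fortiori a strong-subobject classifier. For $\Separated_j$ I would first identify the strong monomorphisms with the $j$-closed monos: every mono between separated objects factors, as in $\cat{E}$, as a $j$-dense mono followed by a $j$-closed mono; by the separatedness condition (uniqueness of extensions along dense monos) the $j$-dense monos are epimorphisms in $\Separated_j$, while $j$-closed monos are right orthogonal to $j$-dense monos, so this is an (epi, strong-mono) factorisation and the strong monos are exactly the closed ones. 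Finally, by \cref{lem:equivalence_LT-topologies_and_closure_operators} the characteristic map of any $j$-closed subobject factors uniquely through $\Omega_j$, which exhibits $\Omega_j$ as the strong-subobject classifier of $\Separated_j$.

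The main obstacle I anticipate is precisely this last identification for $\Separated_j$: showing that strong monos coincide with $j$-closed monos and that $\Omega_j$ classifies exactly them, since this is the point at which $\Separated_j$ genuinely fails to be a topos (its general monos are unclassified) while still meeting the weaker quasitopos requirement. By contrast, the cartesian and locally cartesian closed parts, though they require care with the exponential-ideal property, are routine consequences of left exactness of the reflectors, and throughout the sheaf case reduces to the known result that $\Sheaf_j$ is a topos.
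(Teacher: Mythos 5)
First, a point of calibration: the paper does not prove this lemma at all --- it imports it verbatim from Johnstone's Thm.~10.1 (and Wyler's Thm.~45.5), so your proposal has to be measured against the standard literature proof rather than against anything in the appendix. In outline you do follow that standard route: reflectivity gives finite limits and colimits, Day's reflection theorem (product-preserving reflector $\Rightarrow$ exponential ideal) gives cartesian closedness, slicing gives local cartesian closedness, and $\Omega_j$ classifies the $j$-closed monos, which one identifies with the strong monos of $\Separated_j$.

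However, there is a genuine error at the foundation of your argument: the separation reflector $s \colon \cat{E} \to \Separated_j$ is \emph{not} left exact, and it is not ``built from the plus construction'' in the way sheafification is --- $sX$ is a \emph{quotient} of $X$ (the image of the unit $X \to aX$), and taking images destroys equalizers. In fact left exactness of $s$ is impossible in general: full reflective subcategories of a topos with left exact reflector are precisely the subtoposes $\Sheaf_{j'}$, so $\Separated_j$ would then itself be a topos, contradicting the fact used throughout this paper that simple graphs ($= \Separated_{\lnot \lnot}$ in $\Graph$, cf.~\cref{ex:topologies_on_graphs}) form a quasitopos that is not a topos. Concretely, in $\Graph$ the separation reflector identifies parallel edges: take $G$ with two parallel edges $e_1, e_2 \colon v \to w$ and let $f,g \colon y(1) \to G$ pick out $e_1$ and $e_2$; the equalizer of $f,g$ is the edgeless graph on two vertices, while the equalizer of $sf = sg$ is all of $y(1)$, so $s$ does not preserve equalizers. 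The proof is repairable, because what Day's theorem actually needs is only that $s$ preserve finite \emph{products} (equivalently, that $\Separated_j$ be an exponential ideal: $B$ separated implies $B^A$ separated for every $A$), and this is true --- but it must be proven directly, not deduced from a left exactness that fails. Separately, your identification of the strong monos of $\Separated_j$ with the $j$-closed monos is the right statement, but to show a closed mono is strong you need orthogonality against \emph{all} epimorphisms of $\Separated_j$ (these are the maps with $j$-dense image), whereas your argument only establishes orthogonality against the dense monos; that step needs to be completed before \cref{lem:equivalence_LT-topologies_and_closure_operators} can be invoked to finish the classification.
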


\begin{example}
    \label{ex:separated_elements_and_sheaves_in_Graph_and_discrete_trivial}
    Some examples of separated elements and sheaves in a topos:
    \begin{itemize}[topsep=0pt, noitemsep]
        \item 
        For the double negation topology on $\Graph$ (\cref{ex:topologies_on_graphs}), the separated elements are simple graphs, which therefore form a quasitopos \cite{Vigna_2003_Simplegraph_quasitopos}.
        
        \item 
        For the discrete topology, every object $B$ is a sheaf.

        \item 
        For the trivial topology, since the morphism from the initial object $0 \mono A$ is dense, a sheaf $B$ has thus exactly one morphism $A \to B$ for any $A$, meaning that $B$ is a terminal object.
        A separated object $B$ has at most one morphism $A \to B$ for any $A$, which means that $B$ is a \emph{subterminal objects}.
        Equivalently, subterminal objects can also be described as subobjects of $1$.
        In $\Set$, the sheaves are $\emptyset$ and $\set{\cdot}$.
        In $\Graph$, the sheaves are $\emptyset, \set{\cdot}$ and
        \(
            \set{ 
            \begin{tikzcd}
                \cdot \arrow[loop, distance=.8em, in=200, out=160]
            \end{tikzcd}
            }
        \).
    \end{itemize}
\end{example}

\section{Toplogies on Simplicial Sets}\label{sec:top-simplicial}

\subsection{Simplicial Sets}
\label{subsec:def_simplicial_sets}

As pointed out in \cref{ex:graph_reflexive_graph_are_presheaf_categories}, $\Graph$ and $\ReflGraph$ are presheaf categories $\PresheafDefault$.
\emph{Simplicial sets} are a generalisation of (reflexive) graphs obtained by extending the index category $\indexcategory$ with higher-dimensional elements beyond vertices ($0$-dimensional) and edges ($1$-dimensional): triangles ($2$-dimensonal), tetrahedra ($3$-dimensional), and so forth.
The generalisation of triangles to dimension $n$ is called an $n$-simplex, hence the name simplicial sets. 
The index category for simplicial sets looks as follows:
\[
    \opcat{\Delta} = \qquad
    \begin{tikzcd}[column sep=large]
        0 & 1 & 2 & \cdots
        \arrow["{s^0_0}"{description}, from=1-1, to=1-2]
        \arrow["d^1_0", shift left=3, from=1-2, to=1-1]
        \arrow["d^1_1"', shift right=3, from=1-2, to=1-1]
        \arrow["{s^1_0}"{description, pos=.7}, shift left=3, from=1-2, to=1-3]
        \arrow["{s^1_1}"{description, pos=.7}, shift right=3, from=1-2, to=1-3]
        \arrow["{d^2_0}"{description, pos=.7}, shift left=6, from=1-3, to=1-2]
        \arrow["{d^2_2}"{description, pos=.7}, shift right=6, from=1-3, to=1-2]
        \arrow["{d^2_1}"{description, pos=.7}, from=1-3, to=1-2]
        \arrow["\cdots"{description}, draw=none, from=1-3, to=1-4]
        \arrow["\cdots"{description}, shift left=3, draw=none, from=1-3, to=1-4]
        \arrow["\cdots"{description}, shift right=3, draw=none, from=1-3, to=1-4] 
\end{tikzcd}\]
The arrows $d^n_i$ are called \myemph{faces} and $s^n_i$ are called \myemph{degeneracies}, with the superscripts often omitted.
The functions $(d^1_1,d^1_0)$ are the source and target functions $(s,t)$.
The subscripts are in decreasing order, because a face map $d_i$ omits/projects away the vertex at position~$i$.%
\footnote{
    We count starting from zero, i.e., position $0$, $1$, etc.
}
Therefore, the source of a (directed) edge $(v_0,v_1)$ is $d_1(v_0,v_1) = v_0$.

Similarly, the triangles that we consider are ordered triples $(e_0, e_1, e_2)$ of edges that share endpoints in a certain way: 
$e_0 =(v_1,v_2)$, $e_1 = (v_0, v_2)$ and $e_2 = (v_0, v_1)$. 
Hence a triangle can also be described by the ordered triple of its vertices $(v_0,v_1,v_2)$.
Similarly as for source and target, the faces of a triangle are its three edges.
The sharing of endpoints of edges in a triangle is specified by the following equation:
\begin{equation}
	\label{eq:first_simplicial_identity}
	d^{n-1}_i d^n_j = d^{n-1}_{j-1} d^n_i \hspace*{21pt} \text{ if }i < j 
\end{equation}

Dually, a degeneracy map $s_i$ acts by duplicating the $i$\tss{th} vertex.
E.g.~$s_0(v_0) = (v_0,v_0)$ is a degenerate edge on vertex $v_0$, which we denoted $\refl(v_0)$ in $\ReflGraph$ and visualised by a reflexive loop, and $s_1 (v_0,v_1) = (v_0,v_1,v_1)$ is a degenerate triangle.%
\footnote{
	In geometry, a triangle is called \emph{degenerate} if its three points are colinear, i.e., on the same line.
}
Degenerate edges, triangles, etc., are often not depicted for practical reasons.
The degeneracy maps must satisfy the equations in~\eqref{eq:simplicial_identities}. 
\begin{equation}
    \label{eq:simplicial_identities}
    \begin{aligned}
        s_i s_j &= s_{j+1} s_i \text{ if }i \leq j \\
        d_i s_j &= 
        \begin{cases}
            s_{j-1} d_i &\text{ if }i < j \\
            \id &\text{ if } i=j \text{ or } i=j+1 \\
            s_j d_{i-1} &\text{ if } i > j+1
        \end{cases}
    \end{aligned}
\end{equation}
The equations in \eqref{eq:first_simplicial_identity} and \eqref{eq:simplicial_identities} are together called the \emph{simplicial identities}.
Here are some examples of the simplicial identities for a given triangle:

\[
    \begin{tikzcd}[ampersand replacement=\&, column sep=1.6em, row sep=1em]
        \&\& |[alias=b]| 1 \\
        |[alias=a]|  0 \\
        \&\& |[alias=c]| 2
        \arrow[""{name=0, anchor=center, inner sep=0}, from=2-1, to=3-3, "{(0,2)}"']
        \arrow[from=2-1, to=1-3, "{(0,1)}"]
        \arrow[from=1-3, to=3-3, "{(1,2)}"]
        \arrow[shorten <=2pt, "{(0,1,2)}" description, pos=.4, draw=none, from=0, to=1-3] 
    \end{tikzcd} 
    \qquad
    \begin{aligned}
        &d_1 d_2 (0,1,2) = d_1 (0,1) = 0 = d_1 (0,2) = d_1 d_1 (0,1,2) \\
        &s_0 s_0 (0) = s_0 (0,0) = (0,0,0) = s_1 (0,0) = s_1 s_0 (0) \\
        &d_0 s_1 (1,2) = d_0 (1,2,2) = (2,2) = s_0 (2) = s_0 d_0 (1,2)
    \end{aligned}
\]

The index category $\Delta$ is called the \myemph{simplex category}.
We have given an intuitive explanation of its opposite category $\opcat{\Delta}$ with a graph approach in mind.
For a detailed definition of $\Delta$, see e.g.~\cite{Friedman_2012_simplicial_sets,Riehl_2011_leisurely_intro_simplicial_sets}.
We will consider the following subcategories of $\Delta$:

\begin{itemize}[topsep=0pt, noitemsep]
    \item $\Delta\semi$ is the wide subcategory containing only the $d_i$ arrows.
    \item $\Delta\leqn$ is the full subcategory of objects $0, \ldots, n$.
    \item $\Delta\leqn\semi$ is the intersection between $\Delta\semi$ and $\Delta\leqn$.
\end{itemize}

\begin{remark}
	\label{rem:augmented_simplex_category}
	Certain references use the notation $\Delta\semi$ to denote instead the \emph{augmented} simplex category, which has an additional object $-1$ and a face morphism $0 \to -1$.
	Presheaves with the augmented simplex category as index category are called the \emph{augmented simplicial sets}.
\end{remark}

\begin{definition}
    We consider the following presheaf categories in this paper:
     \begin{enumerate}[(1), topsep=0pt, noitemsep]
        \item \myemph{Simplicial sets} are presheaves $\opcat{\Delta} \to \Set$.
        \item \myemph{Semi-simplicial sets} are presheaves $\opcat{\Delta\semi} \to \Set$.
        \item \myemph{$n$-dimensional simplicial sets} are presheaves $\opcat{\big(\Delta\leqn\big)} \to \Set$.
        \item \myemph{$n$-dimensional semi-simplicial sets} are presheaves $\smash{\opcat{\big(\Delta\leqn\semi\big)}} \to \Set$.
    \end{enumerate}
    Together with natural transformations as morphisms, they form the categories $(1)\ \sSet, (2)\ \semisSet, (3)\ \ndimsSet$ and $(4)\ \ndimsemisSet$.
\end{definition}

\begin{example}
    \label{ex:simplicial_sets}
    As already discussed, here are some examples of simplicial sets:
    \begin{itemize}[topsep=0pt, noitemsep]
        \item $\ndimsemisSet[0]=\ndimsSet[0] = \Set$ is the category of sets
        \item $\ndimsemisSet[1] = \Graph$ is the category of graphs, and
        \item $\ndimsSet[1] = \ReflGraph$ is the category of reflexive graphs.
    \end{itemize}
\end{example}

The next definition is the generalisation to all dimensions of the pair (source, target) for an edge.

\begin{definition}
    \label{def:incidence_tuple}
    Let $n \in \N$ and $\cat{D} \in \set{\ndimsemisSet, \ndimsSet, \semisSet, \sSet}$.
    For a simplicial set $X \in \cat{D}$, for $k \leq n$, and for a $k$-simplex $x \in X(k)$, we define the \myemph{incidence tuple} of $x$ as the $(k+1)$-tuple $\big(d_k(x), \ldots, d_0(x)\big)$.
\end{definition}

Take an arbitrary $n \in \Delta$.
The simplicial set $y(n) = \Delta(-,n) \in \sSet$ is the \emph{standard $n$-simplex}~\cite{Riehl_2011_leisurely_intro_simplicial_sets}.
The classifying object $\Omega(n)$ is given by the Heyting algebra of subobjects of $y(n)$, i.e., $\Omega(n) = \Sub(y(n))$ (as discussed in \cref{subsec:topologies}).

In \cref{ex:yoneda_building_block_in_graph}, we presented visualisations of $y(0)$ and $y(1)$, and in \cref{ex:Hasse_diagram_Omega_in_Graph} we provided visualisations of $\Omega(0)$ and $\Omega(1)$, including their Heyting algebra structures depicted through Hasse diagrams.
Here is a visualisation for the dimension $n=2$, where $y(2)$ is an oriented triangle, and $\Omega(2)$ is its subgraph Heyting algebra.
We omit the degenerate simplices as is usually the case.

\begin{equation}
    \begin{array}{rcl}
    y(2) &=&
    \begin{tikzcd}[
        ampersand replacement=\&,
        column sep=.8em, 
        row sep=.5em, 
        scale cd=.8,
        baseline={(a.base)},
        execute at end picture={
            \scoped[on background layer]
            \fill[rounded corners = 2mm, color=lightgray] (a.center) -- (b.center) -- (c.center) -- cycle;
        }
    ]
        \&\& |[alias=b]| \cdot \\
        |[alias=a]|  \cdot \\
        \&\& |[alias=c]| \cdot
        \arrow[""{name=0, anchor=center, inner sep=0}, from=2-1, to=3-3]
        \arrow[from=2-1, to=1-3]
        \arrow[from=1-3, to=3-3]
    \end{tikzcd} 
    \\
    \Omega(2) &=&
    {%
    \newcommand{\va}{\node (va) at (-60:2.5mm) [vertex] {};}%
    \newcommand{\vb}{\node (vb) at (60:2.5mm) [vertex] {};}%
    \newcommand{\vc}{\node (vc) at (180:2.5mm) [vertex] {};}%
    \begin{tikzpicture}[node distance=15mm,baseline=-1mm, scale=0.8, every node/.style={scale=0.8}]%
        \begin{scope}
            \graphnode[]{empty}{6mm}{6mm}{}{}
            \graphnode[right of=empty,yshift=10mm]{ga}{6mm}{6mm}{xshift=1mm}{\va}
            \graphnode[right of=empty,yshift=0mm]{gb}{6mm}{6mm}{xshift=1mm}{\vb}
            \graphnode[right of=empty,yshift=-10mm]{gc}{6mm}{6mm}{xshift=1mm}{\vc}
            \graphnode[right of=ga]{gab}{6mm}{6mm}{}{\va\vb}
            \graphnode[right of=gb]{gac}{6mm}{6mm}{}{\va\vc}
            \graphnode[right of=gc]{gbc}{6mm}{6mm}{}{\vb\vc}
            
            \begin{scope}[node distance=20mm]    
                \graphnode[right of=gab]{eab}{6mm}{6mm}{}{\va\vb\edge{va}{vb}}
                \graphnode[right of=gac]{eac}{6mm}{6mm}{}{\va\vc\edge{va}{vc}}
                \graphnode[right of=gbc]{ebc}{6mm}{6mm}{}{\vb\vc\edge{vb}{vc}}
                
                \graphnode[right of=eab]{eabx}{6mm}{6mm}{}{\va\vb\vc\edge{va}{vb}}
                \graphnode[right of=eac]{eacx}{6mm}{6mm}{}{\va\vb\vc\edge{va}{vc}}
                \graphnode[right of=ebc]{ebcx}{6mm}{6mm}{}{\va\vb\vc\edge{vb}{vc}}
            \end{scope}
            
            \graphnode[at=(eac),yshift=25mm]{gabc}{6mm}{6mm}{}{\va\vb\vc}
            
            \graphnode[right of=eabx]{hbc}{6mm}{6mm}{}{\va\vb\vc\edge{va}{vc}\edge{va}{vb}}
            \graphnode[right of=eacx]{hac}{6mm}{6mm}{}{\va\vb\vc\edge{va}{vb}\edge{vb}{vc}}
            \graphnode[right of=ebcx]{hab}{6mm}{6mm}{}{\va\vb\vc\edge{vc}{va}\edge{vb}{vc}}
            \graphnode[right of=hac]{triangle}{6mm}{6mm}{}{\va\vb\vc\edge{vc}{va}\edge{va}{vb}\edge{vb}{vc}}
            \graphnode[right of=triangle]{full}{6mm}{6mm}{}{
                \va\vb\vc
                \begin{pgfonlayer}{background}
                    \draw [draw=none,fill=black!25,rounded corners=0mm] (va.center) to (vb.center) to (vc.center) to cycle; 
                \end{pgfonlayer}
                \edge{vc}{va}\edge{va}{vb}\edge{vb}{vc}
            }
            
            \begin{scope}[metaedge]
                \draw (empty) to (ga);
                \draw (empty) to (gb);
                \draw (empty) to (gc);
                \draw (ga) to (gab); \draw (ga) to (gac);
                \draw (gb) to (gab); \draw (gb) to (gbc);
                \draw (gc) to (gac); \draw (gc) to (gbc);
                \draw (gab) to (eab);
                \draw (gac) to (eac);
                \draw (gbc) to (ebc);
                \draw (eab) to (eabx);
                \draw (eac) to (eacx);
                \draw (ebc) to (ebcx);
                \draw (eabx) to (hbc); \draw (eabx) to (hac);
                \draw (eacx) to (hab); \draw (eacx) to (hbc); 
                \draw (ebcx) to (hab); \draw (ebcx) to (hac);
                \draw (hab) to (triangle);
                \draw (hac) to (triangle);
                \draw (hbc) to (triangle);
                \draw (triangle) to (full);
                
                \begin{scope}[densely dotted]
                    \draw (gab.east) to (gabc.west);
                    \draw (gac.east) to (gabc.west);
                    \draw (gbc.east) to (gabc.west);
                    \draw (gabc.east) to (eabx.west);
                    \draw (gabc.east) to (eacx.west);
                    \draw (gabc.east) to (ebcx.west);
                \end{scope}
            \end{scope}
        \end{scope}
    \end{tikzpicture}%
    }
    \end{array}
\end{equation}

The goal of this section is to characterise all topologies on $\sSet$.
We first give in \cref{subsec:leqn_semi} a full characterisation of the topologies $j: \Omega \to \Omega$ on $\ndimsemisSet$.
Then, in \cref{subsec:leqn}, we observe which $j$'s from the previous subsection are topologies on $\ndimsSet$ when we also consider the degeneracies $s_i$.
Finally, in \cref{subsec:semi_and_general}, we explain how to remove the dimension restrictions and characterise topologies on $\semisSet$ and $\sSet$ using the two previous subsections.

\subsection{Topologies on \texorpdfstring{$\ndimsemisSet$}{n-dimensional semi-sSet} }
\label{subsec:leqn_semi}

In this subsection, we fix an integer $n \in \N$.
The goal of this subsection is to characterise all topologies $j : \Omega \to \Omega$ on $\ndimsemisSet$.
We first need some preliminary definitions and lemmas.

\begin{definition}
	\label{def:ith_face}
	For all $k,i \in \N$ such that $0 \leq k < k+1 \leq n$ and $0 \leq i \leq k+1$, the \myemph{$i$\tss{th} face} of $y(k+1)$ is the least subpresheaf $x$ of $y(k+1)$ such that 
	\[
		d^{k+1}_i 
		\in 
		x(k) 
		\subseteq y(k+1)(k) 
		= 
		\opcat{(\Delta\leqn\semi)}(k+1,k).
	\]
	We denote the $i$\tss{th} face of $y(k+1)$ by $\hat{i}_{k+1}$.
	The subscript is sometimes omitted if it is clear from context.
	Note that for all $0 \leq l \leq k+1$, $\hat{i}_{k+1}(l) \subseteq y(k+1)(l)=\opcat{(\Delta\leqn\semi)}(k+1,l)$.
	Concretely, the components of $\hat{i}_{k+1}$ are as follows:
	\begin{equation}
	\label{eq:i_hat}
	\begin{array}{lcl}
		\hat{i}_{k+1}(k+1)	&=& \emptyset \\
		\hat{i}_{k+1}(k)	&=& \set{d^{k+1}_i} \\
		\hat{i}_{k+1}(k-1)	&=& \setvbar{ d^{k}_{i_{k}} \cdot d^{k+1}_i  }{0 \leq i_{k} \leq k} \\
		\ldots \\
		\hat{i}_{k+1}(0)	&=& \setvbar{d^{1}_{i_{1}} \cdot \ldots \cdot d^{k}_{i_{k}} \cdot d^{k+1}_i}{0 \leq i_1 \leq 1, \ldots, 0 \leq i_{k} \leq k}.
	\end{array}
	\end{equation}
\end{definition}

\begin{example}
	Here are some examples of faces.
	\begin{itemize}
		\item 
		Take $n=k+1=1$, recall the Yoneda embedding for $\Graph$ from \cref{ex:yoneda_building_block_in_graph}.
		The notations correspond as follows: $0=V$, $1=E$, $d^1_1=s$, and $d^1_0=t$.
		\begin{align*}
			&\left\{
			\begin{aligned}
					y(0)(0) &= \set{\id_0} \\
					y(0)(1) &= \emptyset
				\end{aligned}
			\right.
			&
			y(0) &=
			{%
				\newcommand{\vx}{\node (vx) at (0mm,0mm) [vertex] {};}%
				\begin{tikzpicture}[node distance=15mm,baseline=-1mm,loop/.style={->,densely dotted,distance=3mm}]
					\graphnode[]{gx}{6mm}{2mm}{}{\vx}
				\end{tikzpicture}%
			}
			\\[2mm]
			&\left\{
			\begin{aligned}
				y(1)(0) &= \set{d^1_1, d^1_0} \\
				y(1)(1) &= \set{\id_1}
			\end{aligned}
			\right.
			&
			y(1) &=
			{%
				\newcommand{\va}{\node (va) at (-2mm,0mm) [vertex] {};}%
				\newcommand{\vb}{\node (vb) at (2mm,0mm) [vertex] {};}%
				\begin{tikzpicture}[node distance=15mm,baseline=-1mm,loop/.style={->,densely dotted,distance=3mm}]
					\begin{scope}[local bounding box=a]
						\graphnode[]{eab}{6mm}{2mm}{}{\va\vb\edge[->]{va}{vb}}
					\end{scope}
				\end{tikzpicture}%
			}
		\end{align*}
		The standard $1$-simplex $y(1)$ has two faces:
		\begin{itemize}[topsep=2pt, noitemsep]
			\item 
			\makebox[3.2cm][l]{the source vertex:}
			\(
				\hat{1}_1 =
				{%
					\newcommand{\va}{\node (va) at (-2mm,0mm) [vertex] {};}%
					\newcommand{\vb}{\node (vb) at (2mm,0mm) [vertex] {};}%
					\begin{tikzpicture}[node distance=15mm,baseline=-1mm,loop/.style={->,densely dotted,distance=3mm}]
						\begin{scope}[local bounding box=a]
							\graphnode[]{eab}{6mm}{2mm}{}{\va}
						\end{scope}
					\end{tikzpicture}%
				} 
				\subseteq
				{%
					\newcommand{\va}{\node (va) at (-2mm,0mm) [vertex] {};}%
					\newcommand{\vb}{\node (vb) at (2mm,0mm) [vertex] {};}%
					\begin{tikzpicture}[node distance=15mm,baseline=-1mm,loop/.style={->,densely dotted,distance=3mm}]
						\begin{scope}[local bounding box=a]
							\graphnode[]{eab}{6mm}{2mm}{}{\va\vb\edge[->]{va}{vb}}
						\end{scope}
					\end{tikzpicture}%
				}
			\)
			
			\item 
			\makebox[3.2cm][l]{the target vertex:}
			\(
				\hat{0}_1 =
				{%
					\newcommand{\va}{\node (va) at (-2mm,0mm) [vertex] {};}%
					\newcommand{\vb}{\node (vb) at (2mm,0mm) [vertex] {};}%
					\begin{tikzpicture}[node distance=15mm,baseline=-1mm,loop/.style={->,densely dotted,distance=3mm}]
						\begin{scope}[local bounding box=a]
							\graphnode[]{eab}{6mm}{2mm}{}{\vb}
						\end{scope}
					\end{tikzpicture}%
				} 
				\subseteq
				{%
					\newcommand{\va}{\node (va) at (-2mm,0mm) [vertex] {};}%
					\newcommand{\vb}{\node (vb) at (2mm,0mm) [vertex] {};}%
					\begin{tikzpicture}[node distance=15mm,baseline=-1mm,loop/.style={->,densely dotted,distance=3mm}]
						\begin{scope}[local bounding box=a]
							\graphnode[]{eab}{6mm}{2mm}{}{\va\vb\edge[->]{va}{vb}}
						\end{scope}
					\end{tikzpicture}%
				}
			\)
		\end{itemize}
		Notice that both faces $\hat{1}$ and $\hat{0}$ are isomorphic to
		\(
			y(0)
			=
			{%
				\newcommand{\vx}{\node (vx) at (0mm,0mm) [vertex] {};}%
				\begin{tikzpicture}[node distance=15mm,baseline=-1mm,loop/.style={->,densely dotted,distance=3mm}]
					\graphnode[]{gx}{6mm}{2mm}{}{\vx}
				\end{tikzpicture}%
			}.
		\)
		
		\item 
		For $n=k+1=2$, recall the Yoneda embedding for $2$-dimensional semi-simplicial sets from \cref{ex:yoneda_building_block_in_graph}.
		The notations correspond as follows: $0=V$, $1=E$, $2=T$, $d^2_2=f_1$, $d^2_1=f_2$, and $d^2_0=f_3$.
		\begin{align*}
			&\left\{
			\begin{aligned}
				y(2)(1) &= \set{d^2_2,d^2_1,d^2_0} \\
				y(2)(2) &= \set{\id_2}
			\end{aligned}
			\right.
			&
			y(2) &=
			{%
			\begin{tikzpicture}[baseline=-1mm, scale=0.9, every node/.style={scale=0.9}]
				\node[graphborder,outer sep=1mm] (gx) {
					\begin{tikzcd}[ampersand replacement=\&, column sep=1.6em, row sep=1em, execute at end picture={
							\scoped[on background layer]
							\fill[rounded corners = 2mm, color=lightgray] (a.center) -- (b.center) -- (c.center) -- cycle;
						}
						]
						\&\& |[alias=b]| \cdot \\
						|[alias=a]|  \cdot \\
						\&\& |[alias=c]| \cdot
						\arrow[""{name=0, anchor=center, inner sep=0}, "d^2_1"', from=2-1, to=3-3]
						\arrow[from=2-1, to=1-3, "d^2_2"]
						\arrow[from=1-3, to=3-3, "d^2_0"]
						\arrow[shorten <=2pt, background color=lightgray, pos=.4, draw=none, from=0, to=1-3] 
					\end{tikzcd} 
				};
			\end{tikzpicture}%
			}
		\end{align*}
		The graph $y(2)$ has three faces: 
		\begin{itemize}[topsep=2pt, noitemsep]
			\item $\hat{2}_2$, which contains $d^2_2$ and its endpoints,
			\item $\hat{1}_2$, which contains $d^2_1$ and its endpoints, and
			\item $\hat{0}_2$, which contains $d^2_0$ and its endpoints.
		\end{itemize}
		Each face is isomorphic to 
		\(
			y(1)
			=
			{%
				\newcommand{\va}{\node (va) at (-2mm,0mm) [vertex] {};}%
				\newcommand{\vb}{\node (vb) at (2mm,0mm) [vertex] {};}%
				\begin{tikzpicture}[node distance=15mm,baseline=-1mm,loop/.style={->,densely dotted,distance=3mm}]
					\begin{scope}[local bounding box=a]
						\graphnode[]{eab}{6mm}{2mm}{}{\va\vb\edge[->]{va}{vb}}
					\end{scope}
				\end{tikzpicture}%
			}.
		\)
	\end{itemize}
\end{example}

In both examples, we notice that each face $\hat{i}_{k+1}$ of $y(k+1)$ is isomorphic to $y(k)$, as objects in $\ndimsemisSet$.
Each face $\hat{i}_{k+1}$ of $y(k+1)$ is a subgraph of $y(k+1)$ and hence belongs to the subpresheaf Heyting algebra $\Omega(k+1)$.
Because of the functor $\Sub \colon \ndimsemisSet \to \cat{HeytAlg}$, 
the isomorphism $y(k) \isom \hat{i}_{k+1}$ in $\ndimsemisSet$ implies a Heyting algebra isomorphism $\Omega(k) \isom \Omega(k+1)_{\leq \hat{i}_{k+1}}$, where $\Omega(k+1)_{\leq \hat{i}_{k+1}} = \setvbar{x \in \Omega(k+1)}{x \leq \hat{i}_{k+1}}$.
We illustrate this for $k+1 \in \set{1,2}$ and $i=0$, where we highlight $\Omega(k+1)_{\leq \hat{0}}$ in red within $\Omega(k+1)$:

\[
\begin{array}{rcccl}
	\Omega(0) 
	&\isom& 
	\Omega(1)_{\leq \hat{0}}
	&\colon& 
	{%
		\newcommand{\va}{\node (va) at (-2mm,0mm) [vertex] {};}%
		\newcommand{\vb}{\node (vb) at (2mm,0mm) [vertex] {};}%
		\begin{tikzpicture}[node distance=15mm,baseline=0mm, scale=0.7, every node/.style={scale=0.7}]%
			\begin{scope}
				\graphnode[]{empty}{6mm}{2mm}{}{}
				\graphnode[right of=empty,yshift=3mm]{ga}{6mm}{2mm}{}{\va}
				\graphnode[right of=empty,yshift=-3mm]{gb}{6mm}{2mm}{}{\vb}
				\graphnode[right of=ga,yshift=-3mm]{gab}{6mm}{2mm}{}{\va\vb}
				\graphnode[right of=gab]{eab}{6mm}{2mm}{}{\va\vb\edge{va}{vb}}
				
				\begin{scope}[metaedge]
					\draw (empty) to (ga);
					\draw (empty) to (gb);
					\draw (ga) to (gab);
					\draw (gb) to (gab);
					\draw (gab) to (eab);
				\end{scope}
			\end{scope}

			\begin{pgfonlayer}{background}
				\draw [fill=red!20,draw=none,rounded corners=2mm] 
				(empty.south west) to
				(empty.south east) to
				(ga.south west) to
				(ga.south east) to
				(ga.north east) to
				(ga.north west) to
				(empty.north east) to
				(empty.north west) to
				cycle;
			\end{pgfonlayer}
		\end{tikzpicture}%
	}
	\\
	\Omega(1) 
	&\isom&
	\Omega(2)_{\leq \hat{0}}
	&\colon&
	{%
		\newcommand{\va}{\node (va) at (-60:2.5mm) [vertex] {};}%
		\newcommand{\vb}{\node (vb) at (60:2.5mm) [vertex] {};}%
		\newcommand{\vc}{\node (vc) at (180:2.5mm) [vertex] {};}%
		\begin{tikzpicture}[node distance=15mm,baseline=0mm, scale=0.6, every node/.style={scale=0.6}]%
			\begin{scope}
				\graphnode[]{empty}{6mm}{6mm}{}{}
				\graphnode[right of=empty,yshift=10mm]{ga}{6mm}{6mm}{xshift=1mm}{\va}
				\graphnode[right of=empty,yshift=0mm]{gb}{6mm}{6mm}{xshift=1mm}{\vb}
				\graphnode[right of=empty,yshift=-10mm]{gc}{6mm}{6mm}{xshift=1mm}{\vc}
				\graphnode[right of=ga]{gab}{6mm}{6mm}{}{\va\vb}
				\graphnode[right of=gb]{gac}{6mm}{6mm}{}{\va\vc}
				\graphnode[right of=gc]{gbc}{6mm}{6mm}{}{\vb\vc}
				
				\begin{scope}[node distance=20mm]    
					\graphnode[right of=gab]{eab}{6mm}{6mm}{}{\va\vb\edge{va}{vb}}
					\graphnode[right of=gac]{eac}{6mm}{6mm}{}{\va\vc\edge{va}{vc}}
					\graphnode[right of=gbc]{ebc}{6mm}{6mm}{}{\vb\vc\edge{vb}{vc}}
					
					\graphnode[right of=eab]{eabx}{6mm}{6mm}{}{\va\vb\vc\edge{va}{vb}}
					\graphnode[right of=eac]{eacx}{6mm}{6mm}{}{\va\vb\vc\edge{va}{vc}}
					\graphnode[right of=ebc]{ebcx}{6mm}{6mm}{}{\va\vb\vc\edge{vb}{vc}}
				\end{scope}
				
				\graphnode[at=(eac),yshift=25mm]{gabc}{6mm}{6mm}{}{\va\vb\vc}
				
				\graphnode[right of=eabx]{hbc}{6mm}{6mm}{}{\va\vb\vc\edge{va}{vc}\edge{va}{vb}}
				\graphnode[right of=eacx]{hac}{6mm}{6mm}{}{\va\vb\vc\edge{va}{vb}\edge{vb}{vc}}
				\graphnode[right of=ebcx]{hab}{6mm}{6mm}{}{\va\vb\vc\edge{vc}{va}\edge{vb}{vc}}
				\graphnode[right of=hac]{triangle}{6mm}{6mm}{}{\va\vb\vc\edge{vc}{va}\edge{va}{vb}\edge{vb}{vc}}
				\graphnode[right of=triangle]{full}{6mm}{6mm}{}{
					\va\vb\vc
					\begin{pgfonlayer}{background}
						\draw [draw=none,fill=black!25,rounded corners=0mm] (va.center) to (vb.center) to (vc.center) to cycle; 
					\end{pgfonlayer}
					\edge{vc}{va}\edge{va}{vb}\edge{vb}{vc}
				}
				
				\begin{pgfonlayer}{background}
					\draw [fill=red!20,draw=none,rounded corners=2mm] 
					(empty.south west) to
					(gb.south east) to
					(gab.south west) to
					(eab.south east) to
					(eab.north east) to
					(ga.north west) to
					(empty.north west) to 
					cycle;
				\end{pgfonlayer}
				
				\begin{scope}[metaedge]
					\draw (empty) to (ga);
					\draw (empty) to (gb);
					\draw (empty) to (gc);
					\draw (ga) to (gab); \draw (ga) to (gac);
					\draw (gb) to (gab); \draw (gb) to (gbc);
					\draw (gc) to (gac); \draw (gc) to (gbc);
					\draw (gab) to (eab);
					\draw (gac) to (eac);
					\draw (gbc) to (ebc);
					\draw (eab) to (eabx);
					\draw (eac) to (eacx);
					\draw (ebc) to (ebcx);
					\draw (eabx) to (hbc); \draw (eabx) to (hac);
					\draw (eacx) to (hab); \draw (eacx) to (hbc); 
					\draw (ebcx) to (hab); \draw (ebcx) to (hac);
					\draw (hab) to (triangle);
					\draw (hac) to (triangle);
					\draw (hbc) to (triangle);
					\draw (triangle) to (full);
					
					\begin{scope}[densely dotted]
						\draw (gab.east) to (gabc.west);
						\draw (gac.east) to (gabc.west);
						\draw (gbc.east) to (gabc.west);
						\draw (gabc.east) to (eabx.west);
						\draw (gabc.east) to (eacx.west);
						\draw (gabc.east) to (ebcx.west);
					\end{scope}
				\end{scope}
			\end{scope}
		\end{tikzpicture}%
	} 
\end{array}
\]

The next lemma formalises this observation.

\begin{replemma}{lem:isom_Omega(n)_Omega(n+1)_leq_delta_i}
    For all $k,i \in \N$ such that $0 \leq k < k+1 \leq n$ and $0 \leq i \leq k+1$, we have 
	\[
		\Omega(k) \isom \Omega(k+1)_{\leq \ithface_{k+1}},
	\]
    where $\Omega(k+1)_{\leq \hat{i}_{k+1}} = \setvbar{x \in \Omega(k+1)}{x \leq \hat{i}_{k+1}}$.
    \customqed%
    \footnote{The symbol $\blacksquare$ denotes that the proof is in the appendix.
    }
\end{replemma}

Each face map $\Omega(d^{k+1}_i) \colon \Omega(k+1) \to \Omega(k)$, when composed with the isomorphism from  \cref{lem:isom_Omega(n)_Omega(n+1)_leq_delta_i}, becomes the intersection with $\ithface_{k+1}$, the $i$\tss{th} face of $y(k+1)$.

\noindent\begin{minipage}{.52\linewidth}
\begin{replemma}{cor:incidence_Omega(n+1)_becomes_intersection}
    For all $k,i \in \N$ such that $0 \leq k < k+1 \leq n$ and $0 \leq i \leq k+1$, the diagram on the right commutes.
    \customqed
\end{replemma}
\end{minipage}
\hfill
\begin{minipage}{.45\linewidth}
    \[
	\begin{tikzcd}[scale cd=.9, row sep=tiny, column sep=tiny]
	& \Omega(k+1) & \\
	\Omega(k)
	& & \Omega(k+1)_{\leq \ithface_{k+1}}
	\ar[from=1-2, to=2-1, "{\Omega(d^{k+1}_i)}"']
	\ar[from=2-1, to=2-3, leftrightarrow, "\isom"' {name=0}]
	\ar[from=1-2, to=2-3, "{-\meet\ithface_{k+1}}"]
	\ar[from=1-2, to=0, phantom, "\circlearrowleft" description]
	\end{tikzcd}
	\]
\end{minipage}

\begin{corollary}
    \label{lem:face_map_di_commutes_with_meet}
    For all $k,i \in \N$ such that $0 \leq k < k+1 \leq n$ and $0 \leq i \leq k+1$,
    the face map $\Omega(d^{k+1}_i) \colon \Omega(k+1) \to \Omega(k)$ commutes with $\land$, i.e., the following diagram commutes:
    \[\begin{tikzcd}
		\Omega(k+1) \times \Omega(k+1)
		\ar[r, "\meet"']
		\ar[d, "{\Omega(d^{k+1}_i) \times \Omega(d^{k+1}_i)}"']
		& \Omega(k+1)
		\ar[d, "{\Omega(d^{k+1}_i)}"]
		\\
		\Omega(k) \times \Omega(k)
		\ar[r, "\meet"]
		& \Omega(k)
	\end{tikzcd}\]
\end{corollary}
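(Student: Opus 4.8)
The plan is to transport the statement along the isomorphism supplied by \cref{lem:isom_Omega(n)_Omega(n+1)_leq_delta_i} and reduce it to a one-line lattice computation. Write $\theta \colon \Omega(k) \isom \Omega(k+1)_{\leq \ithface_{k+1}}$ for that isomorphism. By \cref{lem:isom_Omega(n)_Omega(n+1)_leq_delta_i} it is an isomorphism of Heyting algebras, so in particular it is a bijection preserving binary meets, and by \cref{cor:incidence_Omega(n+1)_becomes_intersection} precomposing the face map with $\theta$ turns it into intersection with $\ithface_{k+1}$, i.e.\ $\theta \circ \Omega(d^{k+1}_i) = (-\meet \ithface_{k+1})$.

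First I would observe that, since $\theta$ is injective, it suffices to establish the desired identity after applying $\theta$. Fix $x,y \in \Omega(k+1)$. Applying $\theta$ to the left-hand side $\Omega(d^{k+1}_i)(x \meet y)$ and using the displayed identity yields $(x \meet y) \meet \ithface_{k+1}$, while applying $\theta$ to the right-hand side $\Omega(d^{k+1}_i)(x) \meet \Omega(d^{k+1}_i)(y)$ and using that $\theta$ preserves meets yields $(x \meet \ithface_{k+1}) \meet (y \meet \ithface_{k+1})$.

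It then remains to verify the purely lattice-theoretic equality $(x \meet y) \meet \ithface_{k+1} = (x \meet \ithface_{k+1}) \meet (y \meet \ithface_{k+1})$ inside the Heyting algebra $\Omega(k+1)$. This is immediate from associativity, commutativity, and idempotency of $\meet$, since the right-hand side equals $x \meet y \meet \ithface_{k+1} \meet \ithface_{k+1} = x \meet y \meet \ithface_{k+1}$.

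The mathematical content is carried entirely by the two preceding lemmas, so I do not expect a genuine obstacle. The only points deserving a word of care are the bookkeeping that $\theta$ is a meet-homomorphism, which we read off directly from \cref{lem:isom_Omega(n)_Omega(n+1)_leq_delta_i}, and the observation that the meet of the principal down-set $\Omega(k+1)_{\leq \ithface_{k+1}}$ coincides with the ambient meet of $\Omega(k+1)$, so that the final computation genuinely takes place in the single lattice $\Omega(k+1)$.
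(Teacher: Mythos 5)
Your proof is correct and follows essentially the same route as the paper: the paper's own argument is exactly the one-liner "go along the isomorphism of the preceding lemma and observe $(x \meet \ithface) \meet (x' \meet \ithface) = (x \meet x') \meet \ithface$", which you have simply expanded with the (correct) bookkeeping about injectivity, meet-preservation of the Heyting algebra isomorphism, and the coincidence of meets in the principal down-set with the ambient meets.
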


\begin{proof}
	Given $x,x' \in \Omega(k+1)$, it suffices to go along the isomorphism of \cref{cor:incidence_Omega(n+1)_becomes_intersection} and to observe that $(x \meet \ithface) \meet (x' \meet \ithface) = (x \meet x') \meet \ithface$.
\end{proof}

\begin{definition}
	In $\ndimsemisSet$, for all $k \in \N$ such that $0 \leq k \leq n$,
	we define $\ynhollow[k]$ to be the subgraph of $y(k)$ which consists of all its faces 
	\(
		\ynhollow[k] \defeq \textstyle\bigjoin_{i=0, \ldots, k} \ithface_k.
	\)
\end{definition}

In short, $\ynhollow[k]$ is the \qmarks{hollow} version or \qmarks{boundary} of $y(k)$, i.e., the one element in the Heyting algebra $\Omega(k)$ just below $y(k)$.

\begin{example}
    Here are visualisations of $\ynhollow$ in lower dimensions.
    \bgroup
    \setlength\tabcolsep{4mm}
    \def\arraystretch{1.5}
    \begin{center}
        \begin{tabular}{c|c|c|c|c}
            $\ynhollow[0]$ & $\ynhollow[1]$ & $\ynhollow[2]$ & $\ynhollow[3]$ & $\ldots$  \\ \hline
            \makecell{empty\\set} $\emptyset$ 
            & \makecell{hollow\\edge} $\cdot \phantom{\to} \cdot$ 
            & \makecell{hollow\\triangle} 
            \(
                \begin{tikzcd}[ampersand replacement=\&, column sep=.8em, row sep=.5em, scale cd=.8]
            	\&\& \cdot \\
            	\cdot \\
            	\&\& \cdot
            	\arrow[""{name=0, anchor=center, inner sep=0}, from=2-1, to=3-3]
            	\arrow[from=2-1, to=1-3]
            	\arrow[from=1-3, to=3-3]
                \end{tikzcd}
            \)
            & \makecell{hollow\\tetrahedron} & $\ldots$
        \end{tabular}
    \end{center}
    \egroup
\end{example}

\begin{replemma}{lem:unique_incidence_except_yn_ynhollow}
    In $\ndimsemisSet$, for all $k \in \N$ such that $0 \leq k < k+1 \leq n$, the incidence tuple function 
    \[
        {(d_{k+1}, \ldots, d_0) \colon \Omega(k+1) \to \Omega(k)^{k+2}}
    \]
    is surjective.
    Additionally, the only two elements that prevent $(d_{k+1}, \ldots, d_0)$ from being injective are $y(k+1)$ and $\ynhollow[k+1]$, which both have $\big(y(k), \ldots, y(k)\big)$ as incidence tuple.
    \customqed
\end{replemma}

We now define our topologies on $\ndimsemisSet$.
We first define partial mappings $j^w$, and then show that they extend uniquely to topologies.

\begin{definition}
    \label{def:topologies_on_ndimsemisSet}
    For all binary words $w \in \set{0,1}^{n+1}$, we define a partial mapping $j^w \colon \Omega \to \Omega$ recursively as follows:
    \begin{itemize}
        \item 
        For $n=0$, $j^0$ and $j^1$ are the topologies on $\ndimsemisSet[0] = \Set$, see \cref{ex:topologies_on_set}. 

        \item 
        Suppose $j^w$ is defined for $w \in \set{0,1}^{n+1}$. 
        Define $j^{w0}$ and $j^{w1}$ by:
    	\begin{itemize}
    		\item 
    		For $k \leq n$: let $j^{w0}_k = j^{w1}_k \defeq j^w_k \colon \Omega(k) \to \Omega(k)$.
    		
    		\item
    		For $n+1$: define
    		\begin{equation}
    			\label{eq:topologies_on_ndimsSet_on_yn}
    			\hspace*{-10mm}
    			\left\{
    			\begin{array}{rcl}
    				\Omega(n+1) 	&\xrightarrow{j^{w0}_{n+1}}& \Omega(n+1) \\
    				y(n+1) 			&\mapsto& y(n+1) \\
    				\ynhollow[n+1] 	&\mapsto& \ynhollow[n+1]
    			\end{array}
    			\right.
    			\hspace*{10mm}
    			\left\{
    			\begin{array}{rcl}
    				\Omega(n+1)		&\xrightarrow{j^{w1}_{n+1}}& \Omega(n+1) \\
    				y(n+1) 			&\mapsto& y(n+1) \\
    				\ynhollow[n+1] 	&\mapsto& y(n+1)
    			\end{array}
    			\right.
    		\end{equation}
    	\end{itemize}
    \end{itemize}
\end{definition}

We have specified $j_{n+1} \in \set{j^{w0}_{n+1}, j^{w1}_{n+1}}$ on only two elements of $\Omega(n+1)$.
It remains to define $j_{n+1}(x)$ for $x < \ynhollow[n+1]$.
We prove that there is a unique way to extend $j$ to all of $\Omega(n+1)$ in \cref{lem:topologies_on_ndimsemisSet_unique_extension} and that it gives a well-defined topology in \cref{thm:topologies_on_ndimsemisSet}.

\begin{replemma}{lem:topologies_on_ndimsemisSet_unique_extension}
	Each $j^w \colon \Omega \to \Omega$ from \cref{def:topologies_on_ndimsemisSet} extends uniquely to a well-defined monotone and idempotent natural transformation as follows.
	The definition is recursive.
	For $n=0$, $j^0,j^1$ are already defined.
	Suppose $j^w$ is defined for $w \in \set{0,1}^{n+1}$.
	In order to define $j \in \set{j^{w0}, j^{w1}}$ on $x \in \Omega(n+1)$ with $x < \ynhollow[n+1]$, we look at the $(n+2)$-tuple $\vec{z} \in \Omega(n)^{n+2}$ defined as follows:
	\begin{equation}
		\replabel{eq:lem_topologies_on_ndimsemisSet_unique_extension_vec_z}
		\vec{z} \defeq (j_n \Omega(d_{n+1})(x), \ldots, j_n \Omega(d_0)(x)).
	\end{equation}
	\[
	\begin{tikzcd}[column sep=huge, ampersand replacement=\&]
		\Omega(n+1) 	\& \Omega(n+1) \\
		\Omega(n)^{n+2} \& \Omega(n)^{n+2}
		\ar[from=1-1, to=1-2, "j_{n+1}", dotted]
		\ar[from=1-1, to=2-1, "{(\Omega(d_{n+1}) , \, \ldots , \, \Omega(d_0))}"']
		\ar[from=1-2, to=2-2, "{(\Omega(d_{n+1}) , \, \ldots , \, \Omega(d_0))}"]
		\ar[from=2-1, to=2-2, "{(j_n)^{n+2}}"']
	\end{tikzcd}
	\]
	\begin{itemize}
    	\item 
    	If $\vec{z} \neq (y(n), \ldots, y(n))$, then by \cref{lem:unique_incidence_except_yn_ynhollow} there exists a unique $x' \in \Omega(n+1)$ with $\vec{z}$ as incidence tuple.
    	We define 
    	\(
    		j_{n+1} (x) \defeq x'.
    	\)
    	
    	\item 
    	If $\vec{z} = (y(n), \ldots, y(n))$, we define 
    	\begin{equation}
    		\replabel{eq:j_n+1_on_x_<_ynhollow}
    		j_{n+1} (x) =
    		\begin{cases}
    			\ynhollow[n+1] &\text{if } j=j^{w0}\\
    			y(n+1) &\text{if } j=j^{w1}
    		\end{cases}
    	\end{equation}
	\end{itemize}
    \customqed
\end{replemma}

\begin{reptheorem}{thm:topologies_on_ndimsemisSet}
    Each $j^w \colon \Omega \to \Omega$ from \cref{def:topologies_on_ndimsemisSet} 
    and extended as in \cref{lem:topologies_on_ndimsemisSet_unique_extension} is a topology on $\ndimsemisSet$.
    Moreover, all topologies on $\ndimsemisSet$ are of this form, implying that there are exactly $2^{n+1}$ of them.
    \customqed
\end{reptheorem}

\subsection{Topologies on \texorpdfstring{$\ndimsSet$}{n-dimensional sSet} }
\label{subsec:leqn}

Fix a natural number $n \in \N$.
In this subsection, we prove that we have an injection from the topologies on $\ndimsSet$ to the topologies $j^w$ on $\ndimsemisSet$, and that only the topologies $j^w$ with $w$ of the form $0^m 1^{n-m+1}$ are in the image of this injection.

From now on, let us specify the category at hand in the super- and subscript of the Yoneda embedding $y$ and of the classifying object $\Omega$, i.e.,
\begin{align*}
    y\leqn\semi\colon \Delta\leqn\semi &\to \ndimsemisSet,
    & \text{and} && \Omega\leqn\semi &= \Sub(y\leqn\semi(n)). \\
    y\leqn \colon \Delta\leqn &\to \ndimsSet,
    & \text{and} && \Omega\leqn &= \Sub(y\leqn(n)).
\end{align*}
To understand topologies on $\ndimsSet$, we first have to better understand what $y\leqn$ and $\Omega\leqn$ are compared to what we already know about $y\leqn\semi$ and $\Omega\leqn\semi$ from \cref{subsec:leqn_semi}.

For instance, $y\leqn(0)$ consists of a vertex $0$, together with the degenerate $k$-simplices $(0,0), (0,0,0), \ldots, (0, \ldots, 0)$ for $k=1, \ldots, n$.
Given $k \leq n$, we observe that in general, $y\leqn(k)$ is obtained from $y\leqn\semi(k)$ by adding all degenerate simplices.

\begin{example}
    \label{ex:Omega_ReflGraph_isom_Omega_Graph}
    For instance, in $\ReflGraph$ given $A' \subseteq A$, then a vertex $v \in A(0)$ is in the subgraph $A'(0)$ if and only if its reflexive loop is in the subgraph too $\refl(v) \in A'(1)$.
	This implies that subgraphs of $y\leqn[1]\semi(1)$ in $\ReflGraph$ are in $1$-$1$ correspondence with subgraphs of $y\leqn[1](1)$ in $\Graph$, and hence there is a Heyting algebra isomorphism between the subobject Heyting algebra in $\ReflGraph$ and the subobject Heyting algebra in $\Graph$:
    \[
	\begin{tikzcd}[
		ampersand replacement=\&,
		/tikz/column 1/.append style={anchor=base east},
		/tikz/column 2/.append style={anchor=base west},
		]
		\Omega\leqn[1]\semi(1)
		\ar[r, draw=none, "=" description]
		\ar[d, draw=none, "\isom" {description, rotate=90}]
		\&
		{%
			\newcommand{\va}{\node (va) at (-2mm,0mm) [vertex] {};}%
			\newcommand{\vb}{\node (vb) at (2mm,0mm) [vertex] {};}%
			\begin{tikzpicture}[node distance=15mm,baseline=-1mm, scale=0.7, every node/.style={scale=0.7}]
				\begin{scope}[local bounding box=a]
					\graphnode[]{empty}{6mm}{2mm}{}{}
					\graphnode[right of=empty,yshift=3mm]{ga}{6mm}{2mm}{}{\va}
					\graphnode[right of=empty,yshift=-3mm]{gb}{6mm}{2mm}{}{\vb}
					\graphnode[right of=ga,yshift=-3mm]{gab}{6mm}{2mm}{}{\va\vb}
					\graphnode[right of=gab]{eab}{6mm}{2mm}{}{\va\vb\edge{va}{vb}}
					\begin{scope}[metaedge]
						\draw (empty) to (ga);
						\draw (empty) to (gb);
						\draw (ga) to (gab);
						\draw (gb) to (gab);
						\draw (gab) to (eab);
					\end{scope}
				\end{scope}
			\end{tikzpicture}%
		}
		\\
		\Omega\leqn[1](1)
		\ar[r, draw=none, "=" description]
		\&
		{%
			\newcommand{\va}{\node (va) at (-2mm,0mm) [vertex] {};}%
			\newcommand{\vb}{\node (vb) at (2mm,0mm) [vertex] {};}%
			\begin{tikzpicture}[node distance=18mm,baseline=-1mm,loop/.style={->,densely dotted,distance=3mm}, scale=0.7, every node/.style={scale=0.7}]%
				\begin{scope}
					\graphnode[]{empty}{6mm}{2mm}{}{}
					\graphnode[right of=empty,xshift=-2mm,yshift=3mm]{ga}{10mm}{2mm}{}{
						\va
						\draw [loop] (va) to[out=180-35,in=180+35] (va);
					}
					\graphnode[right of=empty,xshift=-2mm,yshift=-3mm]{gb}{10mm}{2mm}{}{
						\vb
						\draw [loop] (vb) to[out=-35,in=35] (vb);
					}
					\graphnode[right of=ga,yshift=-3mm]{gab}{10mm}{2mm}{}{
						\va\vb
						\draw [loop] (va) to[out=180-35,in=180+35] (va);
						\draw [loop] (vb) to[out=-35,in=35] (vb);
					}
					\graphnode[right of=gab]{eab}{10mm}{2mm}{}{
						\va\vb\edge{va}{vb}
						\draw [loop] (va) to[out=180-35,in=180+35] (va);
						\draw [loop] (vb) to[out=-35,in=35] (vb);
					}
					
					\begin{scope}[metaedge]
						\draw (empty) to (ga);
						\draw (empty) to (gb);
						\draw (ga) to (gab);
						\draw (gb) to (gab);
						\draw (gab) to (eab);
					\end{scope}
				\end{scope}
			\end{tikzpicture}%
		}
	\end{tikzcd}
	\]
\end{example}

The above example suggests that there is a Heyting algebra isomorphism in general between $\Omega\leqn\semi$ and $\Omega\leqn$.

Because topologies in presheaf categories are bounded meet-semilattice morphisms (cf.~\cref{lem:topology_monotone}), to characterise topologies on $\ndimsSet$ using our characterisation of topologies on $\ndimsemisSet$ (\cref{thm:topologies_on_ndimsemisSet}), it suffices to establish a bounded meet-semilattice isomorphism between $\Omega\leqn\semi$ and $\Omega\leqn$, which we do in \cref{lem:F_k_between_semicase_and_generalcase}.

\begin{remark}
	\label{rem:alternate_proof_Kan_extension}
	We outline here a potential proof strategy for establishing a Heyting algebra isomorphism between $\Omega\leqn\semi$ and $\Omega\leqn$.
	While we did not succeed using this approach due to the complexity of the technical computations, we believe it may still be viable and therefore present it here.
	To proceed, we denote two specific inclusion functors as follows:
	\[
	\begin{array}{rcl}
		\inclusion \colon \Delta\leqn\semi 						&\mono& \Delta\leqn \\
		\opcat{\inclusion} \colon \opcat{(\Delta\leqn\semi)} 	&\mono& \opcat{(\Delta\leqn)}
	\end{array}\]
	Consider the precomposition functor $(\opcat{\inclusion})^* \colon \ndimsSet \to \ndimsemisSet$ induced by $\opcat{\inclusion}$.
	It is a forgetful functor: degenerate simplices are kept, but they are no longer regarded as degenerate,
	see e.g.~\cite[p.~322-323]{Rourke_Sanderson_1971_Delta_sets} or \cite[p.~22]{Fiore_2024_Logical_structure_inverse_functor}.\footnote{In earlier literature, semi-simplicial sets were referred to as \emph{semisimplicial (ss) complexes}  or \emph{$\Delta$-sets}, while simplicial sets were called \emph{complete semisimplicial (css) complexes}.}
	For instance, when $n=1$, the functor has type $\ReflGraph \to \Graph$.
	Given a reflexive graph $A$, it forgets the function $A(\refl)$ from the structure and treats reflexive loops as ordinary loops.
	Consider the left Kan extension $(\opcat{\inclusion})_! \colon \ndimsemisSet \to \ndimsSet$, which exists because $\Set$ is cocomplete.
	It appears to be folklore that $(\opcat{\inclusion})_!$ freely adds degeneracies to a semi-simplicial set to obtain a simplicial set, see e.g.~\cite[Theorem 1.7]{Rourke_Sanderson_1971_Delta_sets} or \cite{Kan_1970_Is_an_ss_complex_a_css_complex}.
	Consider the following diagram:
	\[\begin{tikzcd}[ampersand replacement=\&, row sep=tiny]
		{\Delta\leqn} \& {\ndimsSet} \& {} \\
		\&\& \HeytingAlgebra \\
		{\Delta\leqn\semi} \& {\ndimsemisSet}
		\arrow["{y\leqn}", from=1-1, to=1-2]
		\arrow["{\Sub\leqn}", from=1-2, to=2-3]
		\arrow["\inclusion", tail, from=3-1, to=1-1]
		\arrow["{(*)}"{description}, draw=none, from=3-1, to=1-2]
		\arrow["{y\leqn\semi}"', from=3-1, to=3-2]
		\arrow[""{name=0, anchor=center, inner sep=0}, "{(\opcat{\inclusion})_!}"{description}, from=3-2, to=1-2]
		\arrow["{\Sub\leqn\semi}"', from=3-2, to=2-3]
		\arrow["{(**)}"{description, pos=.7}, draw=none, from=0, to=2-3]
	\end{tikzcd}\]	
	If we assume the folklore result, then $(*)$ commutes up to natural isomorphism, i.e., for all $k \in \Delta\leqn\semi$, then $y\leqn(k) \isom (\opcat{\inclusion})_*(y\leqn\semi(k))$.
	Moreover, we expect $(**)$ to commute, because we observe that a degenerate simplex of the form $s_i(x)$ is in a subpresheaf if and only if $x$ itself is in the subpresheaf.
	We formalise this observation later in \cref{lem:sx_in_subobject_iff_x_in_subobject}.
	The following Heyting algebra isomorphism would then follow, because of preservation of isomorphism by functors:
	\begin{align*}
		\Omega\leqn(k) 
		&=			\Sub\leqn(y\leqn(k)) \\
		&\isom		\Sub\leqn((\opcat{\inclusion})_*(y\leqn\semi(k)) 	\tag*{by $(*)$}\\
		&=		 	\Sub\leqn\semi(y\leqn\semi(k)) 						\tag*{by $(**)$}\\
		&= 			\Omega\leqn\semi(k).
	\end{align*}
	If we assume that the Heyting algebra isomorphism $\Omega\leqn\semi \isom \Omega\leqn$ holds, then \cref{lem:F_k_between_semicase_and_generalcase} becomes redundant.
\end{remark}

We return to our own proof of a bounded meet-semilattice isomorphism between $\Omega\leqn\semi$ and $\Omega\leqn$.

We generalise the observation made in \cref{ex:Omega_ReflGraph_isom_Omega_Graph} that all degenerate simplexes $s^k_i(x)$ of a $k$-simplex $x$ belong to a subpresheaf if and only if $x$ itself belongs to the subpresheaf $A'$.

\begin{replemma}{lem:sx_in_subobject_iff_x_in_subobject}
	For all $0 \leq k < n$, and for each category $\cat{D} \in \set{\ndimsSet, \sSet}$, subpresheaf $A' \subseteq A$, and $k$-simplex $x \in A(k)$:
	\begin{align*}
		x \in A'(k) 
		\iff 
		\forall i \in \set{0, \ldots, k}: A'(s^k_i)(x) \in A'(k+1).
        \tag*{\customqed}
	\end{align*}
\end{replemma}

\begin{definition}
    \label{def:degen_set}
    Given a (semi-)simplicial set $x$ in $\Omega\leqn\semi(k)$ or $\Omega\leqn(k)$, i.e., $x \subseteq y\leqn\semi(k)$ or $x \subseteq y\leqn(k)$, we inductively define for each $l \leq n$ the sets $\degen(x,l)$ of degenerate $l$-simplices:
    \begin{align*}
        \degen(x,0) &= \emptyset, \text{ (there is no degenerate vertices)} \\
        \degen(x,l+1) &= \setvbar{s^l_i(f) \in \opcat{(\Delta\leqn)}(k,l+1)}{f \in x(l) \cup \degen(x,l),\ 0 \leq i \leq l}.
    \end{align*}
\end{definition}

The set $\degen(x,l)$ contains all possible degenerate simplexes of dimension $l$ that the presheaf $x$ can have.
For instance, if $x$ contains only vertices, i.e., $x(0) = \set{v_0, \ldots, v_m}$ and $x(k) = \emptyset$ for $k \geq 1$, then $\degen(x,1) = \set{s^0_0(v_0), \ldots, s^0_0(v_n)}$ contains a reflexive loop for each vertex in $x(0)$.

For each $k \leq n$, we define a function $F_k$ that adds all degeneracies, and a function $F_k\inv$ that remove all degeneracies.

\begin{definition}
	\label{def:F_k}
	For each $k \leq n$, we define two functions 
	\[
	\begin{array}{rcl}
		F_k \colon \Omega\leqn\semi(k) &\rightleftarrows& \Omega\leqn(k) \noloc F_k\inv \\[4mm]
		x\semi \subseteq y\leqn\semi(k) 
		\quad &\rightmapsto&
		x\semi(-) \cup \degen(x_+, -), \\[2mm]
		x \setminus \degen(x,-) &\leftmapsto& x \subseteq y\leqn(k).
	\end{array}
	\]
\end{definition}

\begin{replemma}{lem:F_k_between_semicase_and_generalcase}
    For all $k \leq n$, we have a bounded meet-semilattice isomorphism
	\[
		F_k \colon \Omega\leqn(k) \isom \Omega\leqn\semi(k) \noloc F_k\inv.
	\]
	In other words, the functions $F_k$ and $F_k \inv$ are inverses of each other,
	\begin{itemize}[topsep=4pt, noitemsep]
		\item 
		$F_k$ preserves the top element:
		$F_k(y\leqn\semi(k)) = y\leqn(k)$
		
		\item 
		$F_k$ commutes with $\land$:
		 $F_k(x\semi \land y\semi) = F_k(x\semi) \land F_k(y\semi)$, and
		
		\item 
		$F_k$ commutes with all face maps $d_i$:
		 $\Omega\leqn(d_i) F_{k+1} = F_k \Omega\leqn\semi(d_i)$,
	\end{itemize}
	and the same hold for $F_k \inv$.
    \customqed
\end{replemma}

\cref{lem:F_k_between_semicase_and_generalcase} implies that we have an injection of topologies.

\begin{replemma}{cor:injection_of_topologies}
	The following mapping is injective:
	\[
	\begin{array}{rcl}
		\set{j \mid j \text{ is a topology on }\ndimsSet}
		&\xrightarrow{F\inv(-)F}&
		\set{j^w \mid j^w \text{ is a topology on  }\ndimsemisSet}
		\\
		\big(\Omega\leqn \xrightarrow{j} \Omega\leqn\big)
		&\mapsto&
		\big(\Omega\leqn\semi \xrightarrow{F} \Omega\leqn \xrightarrow{j} \Omega\leqn \xrightarrow{F\inv} \Omega\leqn\semi\big).
	\end{array}
	\]
    \customqed
\end{replemma}

Therefore, characterising the topologies on $\ndimsSet$ reduces to identifying which topologies on $\ndimsemisSet$ are in the image of the injection from \cref{cor:injection_of_topologies}.
The next example provides some intuition.

\begin{example}
    \label{ex:topologies_on_ReflGraph}
    Among the four topologies $j^{00}$, $j^{01}$, $j^{10}$, and $j^{11}$ on $\ndimsemisSet[1]=\Graph$ (see \cref{ex:topologies_on_graphs}), only $j^{00}$, $j^{01}$, and $j^{10}$ commute with the degeneracy map $\Omega\leqn[1](\refl) \colon \Omega\leqn[1](0) \to \Omega\leqn[1](1)$ and are therefore topologies on $\ndimsSet[1] = \ReflGraph$.
    It is easy to check that the desired naturality square commutes for $j^{00}, j^{01}$ and $j^{11}$.
    We show that it does not commute for $j^{10}:$
    \begin{align*}
        \begin{tikzcd}[sep=small, ampersand replacement=\&]
            \Omega(1) \& \Omega(1)  \\
            \Omega(0) \& \Omega(0) 
            \ar[from=1-1, to=1-2, "j_1"]
            \ar[from=2-1, to=1-1, "\refl"]
            \ar[from=2-2, to=1-2, "\refl"']
            \ar[from=2-1, to=2-2, "j_0"']
        \end{tikzcd}
        \quad
        \text{where}
        \quad
        \refl \colon \Omega(0) \to \Omega(1) \colon
        \begin{cases}
            \emptyset 
            &\mapsto 
            \emptyset \\
            \begin{tikzcd}[ampersand replacement=\&] \cdot \arrow[refl] \end{tikzcd} 
            &\mapsto 
            \begin{tikzcd}[sep=small, ampersand replacement=\&]
                s \ar[loop, refl] \ar[r] \& t \ar[loop, refl]
            \end{tikzcd}
        \end{cases}
        \\
        j^{10}_1 \refl (\emptyset)
        = j^{10}_1 (\emptyset)
        = (\strefl)
        \neq \stotrefl
        = \refl(\idrefl)
        = \refl j^{10}_0 (\emptyset).
    \end{align*}
\end{example}

The example suggests that once a topology adds elements of a dimension $m$, i.e., $w_m=1$, then all elements of higher dimension must also added, i.e., $w_l = 1$ for all $l \geq m$. 
This observation leads to the following characterisation of topologies on $\ndimsSet$.
Recall from \cref{def:topologies_on_ndimsemisSet,lem:topologies_on_ndimsemisSet_unique_extension} the mappings of the form $j^w$, which we showed in \cref{thm:topologies_on_ndimsemisSet} to be the topologies on $\ndimsemisSet$.

\begin{reptheorem}{thm:topologies_on_ndimsSet}
    A topology $j^w$ on $\ndimsemisSet$ is in the image of $F\inv(-)F$ from \cref{cor:injection_of_topologies} if and only if $w = 0^m 1^{n+1-m}$ for some $0 \leq m \leq n$.
	In other words, we have a bijection
	\begin{align*}
		\set{j \mid j\text{ is a topology on }\ndimsSet}
		\quad\isom\quad
		\{j^w \mid &~j^w \text{ is a topology on }\ndimsemisSet \\
		&\text{and $w=0^m 1^{n+1-m}$ for}\\
		&\text{some $0 \leq m \leq n+1$}\}.
        \tag*{\customqed}
	\end{align*}
\end{reptheorem}

In the remainder of this paper, we identify each $j$ and $F\inv j F = j^w$.

\subsection{Topologies on \texorpdfstring{$\semisSet$}{semi-sSet} and \texorpdfstring{$\sSet$}{sSet}}
\label{subsec:semi_and_general}

In this section, we consider the categories $\semisSet$ and $\sSet$, meaning that we no longer limit ourselves to only objects of dimension $0$ until $n$, but instead allow elements of \textit{all} dimensions.
In both $\semisSet$ and $\sSet$, we extend what is known in the $n$-dimensional case and infer the general case from it.

\begin{reptheorem}{thm:topologies_on_semisSet_and_sSet}
    Let $\cat{D} \in \set{\semisSet, \sSet}$.
	Take $w \in \set{0,1}^\omega$.
	If $\cat{D} = \sSet$, suppose additionally that $w$ is of the form $0^\omega$ or $0^m 1^\omega$.
	Let $j^w\colon \Omega \to \Omega$ be defined by taking $j^w_n \defeq j^{w_0 \ldots w_n}_n$, where $j^{w_0 \ldots w_n}$ is the topology on $\cat{D}\leqn$ (cf.~\cref{thm:topologies_on_ndimsemisSet,thm:topologies_on_ndimsSet}).  
	Then $j^w$ is a topology on $\cat{D}$.
	Moreover, all topologies on $\cat{D}$ are of this form.
    \customqed
\end{reptheorem}

\subsection{Separated elements and sheaves}
\label{subsec:separated_elements_and_sheaves}

In this subsection, let $\cat{D} \in \set{\ndimsemisSet, \ndimsSet, \semisSet, \sSet}$ denote any of the four categories of simplicial sets that we have considered so far.
We have characterised all topologies on $\cat{D}$ in respectively \cref{thm:topologies_on_ndimsemisSet,thm:topologies_on_ndimsSet,thm:topologies_on_semisSet_and_sSet}.
We now describe the corresponding closure operator, the dense elements, the separated elements, and the sheaves.

The closure associated with a topology $j^w$  (see \cref{lem:equivalence_LT-topologies_and_closure_operators}) adds to a subobject at each dimension $k \geq 1$, recursively, the $k$-simplices for which the faces are in the closure of dimension $k-1$.
This generalises \eqref{eq:closure_double_negation_topology_on_Graph} of the topology $j^{01}$ on $\Graph$ given in \cref{ex:topologies_on_graphs}.

\begin{replemma}{lem:closure_of_topologies_on_sSet}
    For all topologies $j^w$ on $\cat{D} \in \set{\ndimsemisSet, \ndimsSet, \semisSet, \sSet}$ and all subobjects $A' \subseteq A$ in $\cat{D}$, the closure of $A'$ associated with $j^w$, which we denote $\topologybar{A'}$, is given by
	\begin{itemize}[leftmargin=1.5cm]
		\item[$(k=0)$] 
		If $w_0 = 0$, then $\topologybar{A'}(0) = A'(0)$. \\
		If $w_0 = 1$, then $\topologybar{A'}(0) = A(0)$.
		
		\item[$(k > 0)$] 
		If $w_k = 0$, then $\topologybar{A'}(k) = A'(k)$. \\
		If $w_k = 1$, then 
		\begin{equation}
			\replabel{eq:IH_closure_j^w}
			\topologybar{A'}(k) =
			\setvbar{x \in A(k)}{\forall i=0, \ldots, k \colon A(d^k_i)(x) \in \topologybar{A'}(k-1)}.
		\end{equation}
        \customqed
	\end{itemize}
\end{replemma}

With the description in \cref{lem:closure_of_topologies_on_sSet} of the closure operator associated with each topology $j^w$, we can now describe the classes of $j^w$-dense subobjects (\cref{def:dense_subobject}).
For a subobject $A' \subseteq A$ to be $j^w$-dense, $A'$ must already have all elements of each dimensions where nothing is added, i.e., each dimension $k$ with $w_k = 0$.

\begin{replemma}{lem:j^w_dense_subpresheaf}
	Let $j^w$ be a topology on $\cat{D} \in \set{\ndimsemisSet, \ndimsSet, \semisSet, \sSet}$ and $A' \subseteq A$ be a subpresheaf.
	Take an arbitrary $n \in \N$.
	The following are equivalent:
	\begin{enumerate}[noitemsep]
		\item 
		$A'$ is $j^w$-dense.
		In other words, for all $k \leq n$: $\topologybar{A'}(k) = A(k)$.
		
		\item 
		For all $k \leq n$: if $w_k = 0$, then $A'(k) = A(k)$.
        \customqed
	\end{enumerate}
\end{replemma}

\begin{notation}
	\label{not:parallel_k_faces}
    For all simplicial sets $B \in \cat{D}$, and $\vec{x}=(x_k, \ldots, x_0) \in B(k-1)^{k+1}$, we introduce the following notation to denote the set of elements in $B(k)$ that have $\vec{x}$ as incidence tuple:
	\[
	d\inv(\vec{x}) \defeq \setvbar{x \in B(k)}{(B(d_k), \ldots, B(d_0))(x) = \vec{x}}.
	\]
\end{notation}

Thus, two distinct elements that belong to the same set $d \inv (\vec{x})$ generalise the concept of two edges being \emph{parallel} in the category $\Graph$.

In order to describe separated elements and sheaves for every topology on $\cat{D}$, we introduce the next definition, generalising the notion of a graph being \emph{simple}.

\begin{definition}
	\label{def:k-simple_k-exact}
	Take $B \in \cat{D}$.
	We say that $B$ is
	\begin{itemize}[topsep=3pt, noitemsep]
		\item 
		\makebox[2cm][l]{\myemph{$0$-simple}} if $B(0)$ has at most one element,
		
		\item 
		\makebox[2cm][l]{\myemph{$0$-complete}} if $B(0)$ has at least one element,
		
		\item
		\makebox[2cm][l]{\myemph{$0$-exact}} if $B(0) = \set{\cdot}$ is exactly one element. 
	\end{itemize}
	Given $k \geq 1$, we say that $B$ is
	\begin{itemize}[topsep=3pt, noitemsep]
		\item 
		\makebox[2cm][l]{\myemph{$k$-simple}} if for all $\vec{x} \in B(k-1)^{k+1}$, $\Card{d\inv(\vec{x})} \leq 1$,
		
		\item 
		\makebox[2cm][l]{\myemph{$k$-complete}} if for all $\vec{x} \in B(k-1)^{k+1}$, $\Card{d\inv(\vec{x})} \geq 1$,
		
		\item 
		\makebox[2cm][l]{\myemph{$k$-exact}} if for all $\vec{x} \in B(k-1)^{k+1}$, $\Card{d\inv(\vec{x})} = 1$,
	\end{itemize}
\end{definition}

\begin{example}
	In $\Graph$, the notions from \cref{def:k-simple_k-exact} mean the following:
	\begin{itemize}[topsep=3pt, noitemsep]
		\item 
		\makebox[20mm][l]{$0$-simple} $=$ having zero or one vertex.
		
		\item 
		\makebox[20mm][l]{$0$-complete} $=$ having at least one vertex, i.e., being a nonempty graph.
		
		\item 
		\makebox[20mm][l]{$0$-exact} $=$ having one vertex.
		
		\item 
		\makebox[20mm][l]{$1$-simple} $=$ no parallel edges, i.e., simple graph in the usual sense.
		
		\item 
		\makebox[20mm][l]{$1$-complete} $=$ all pairs of vertices $(v_0,v_1)$ admit at least one edge $v_0 \to v_1$.
		
		\item 
		\makebox[20mm][l]{$1$-exact} $=$ all pairs of vertices $(v_0,v_1)$ admit exactly one edge $v_0 \to v_1$, i.e., complete graph in the usual sense.
	\end{itemize} 
\end{example}

We can now describe separated objects, complete objects, and sheaves (\cref{def:separated_elements_complete_elements_sheaves}) for every topology $j^w$ on $\cat{D}$.

\begin{reptheorem}{thm:separated_elements_sheaves_ndimsemisSet}
    Let $j^w$ be a topology on $\cat{D} \in \set{\ndimsemisSet, \ndimsSet, \semisSet, \sSet}$ (cf \cref{thm:topologies_on_ndimsemisSet,thm:topologies_on_ndimsSet,thm:topologies_on_semisSet_and_sSet}) and let $B$ be an object in $\cat{D}$.
	\begin{enumerate}[topsep=3pt, noitemsep]
		\item 
		\makebox[35mm][l]{$B$ is $j^w$-separated}  $\iff$ $B$ is $k$-simple for all $k$ with $w_k = 1$.
		
		\item 
		\makebox[35mm][l]{$B$ is $j^w$-complete} $\iff$ $B$ is $k$-complete for all $k$ with $w_k = 1$.
		
		\item 
		\makebox[35mm][l]{$B$ is a $j^w$-sheaf}  $\iff$ $B$ is $k$-exact for all $k$ with $w_k = 1$.
	\end{enumerate}
	If $D \in \set{\semisSet, \sSet}$, then $k$ ranges over $\N$.
	Otherwise, $k$ ranges over $\set{0, \ldots, n}$.
    \customqed
\end{reptheorem}

For the discrete topology, all three right-hand side conditions in \cref{thm:separated_elements_sheaves_ndimsemisSet} are vacuously true, because $w_k = 0$ for all $k \in \N$.
Hence, the following corollary holds.

\begin{corollary}
	If $j^w$ is the discrete topology on $\cat{D}$, i.e., $w_k = 0$ for all $k$, then all objects in $\cat{D}$ are $j^w$-separated, $j^w$-complete, and $j^w$-sheaves.
\end{corollary}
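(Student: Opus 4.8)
The plan is to obtain the statement as an immediate consequence of \cref{thm:separated_elements_sheaves_ndimsemisSet}, exploiting the fact that the three right-hand side conditions in that theorem are universally quantified over the index set $\setvbar{k}{w_k = 1}$, which is empty precisely when $j^w$ is the discrete topology.

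First I would record that, by hypothesis, the discrete topology is the topology $j^w$ with $w_k = 0$ for every $k$ (so $w = 0^{n+1}$ in the $n$-dimensional cases and $w = 0^\omega$ in the cases $\cat{D} \in \set{\semisSet, \sSet}$). In particular $\setvbar{k}{w_k = 1} = \emptyset$. Next I would fix an arbitrary object $B \in \cat{D}$ and invoke \cref{thm:separated_elements_sheaves_ndimsemisSet}. By item $(1)$ of that theorem, $B$ is $j^w$-separated if and only if $B$ is $k$-simple for all $k$ with $w_k = 1$; since there is no such $k$, this condition holds vacuously, so $B$ is $j^w$-separated. Applying the identical reasoning to items $(2)$ and $(3)$ shows that $B$ is $j^w$-complete and a $j^w$-sheaf. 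As $B$ was arbitrary, every object of $\cat{D}$ enjoys all three properties, which is the claim.

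There is no genuine obstacle here: the only point requiring a word of care is recognising that an empty index set renders each \emph{for all} condition vacuously true rather than producing an edge case, and this is immediate. As a sanity check one may also bypass the theorem and argue directly from \cref{def:separated_elements_complete_elements_sheaves} together with \cref{ex:dense_subobjects}: for the discrete topology the only $j^w$-dense subobject of any $A$ is $A \mono A$ itself, so every $f \colon A \to B$ admits exactly one factorisation through this identity (namely $g = f$), yielding simultaneously uniqueness (separatedness), existence (completeness), and hence the sheaf property for every $B$.
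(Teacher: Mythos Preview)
Your proposal is correct and matches the paper's own argument: the paper states that for the discrete topology the three right-hand side conditions in \cref{thm:separated_elements_sheaves_ndimsemisSet} are vacuously true because $w_k = 0$ for all $k$, which is exactly your main line of reasoning. Your additional sanity check via \cref{ex:dense_subobjects} is a nice direct alternative the paper does not spell out here, but it is consistent with the remark made earlier in \cref{ex:separated_elements_and_sheaves_in_Graph_and_discrete_trivial}.
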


Thanks to \cref{lem:separated_elements_and_sheaves_form_quasitoposes}, we immediately have the next result.

\begin{corollary}
    For each topology $j^w$ on $\cat{D} \in \set{\ndimsemisSet, \ndimsSet, \semisSet, \sSet}$, the category $\Separated_{j^w}$ of $j^w$-separated elements and the category $\Sheaf_{j^w}$ of $j^w$-sheaves are quasitoposes.
\end{corollary}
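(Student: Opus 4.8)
The plan is to apply \cref{lem:separated_elements_and_sheaves_form_quasitoposes} directly: that lemma (Johnstone's theorem) asserts that for any topos $\cat{E}$ and any Lawvere-Tierney topology $j$ on $\cat{E}$, the full subcategories $\Separated_j$ and $\Sheaf_j$ are quasitoposes. To invoke it for each of the four categories at hand, two ingredients are needed — that $\cat{D}$ is a topos, and that each $j^w$ is a genuine LT-topology on $\cat{D}$ — and both have already been secured earlier in the paper.

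First I would recall that each of $\ndimsemisSet$, $\ndimsSet$, $\semisSet$, and $\sSet$ is by definition a presheaf category, over the respective index category $\opcat{(\Delta\leqn\semi)}$, $\opcat{(\Delta\leqn)}$, $\opcat{\Delta\semi}$, or $\opcat{\Delta}$. Presheaf categories are toposes (as noted in the introduction and in \cref{subsec:topologies}), so in all four cases $\cat{D}$ is a topos and the hypothesis on $\cat{E}$ in \cref{lem:separated_elements_and_sheaves_form_quasitoposes} is met.

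Second, the assertion that $j^w$ is a Lawvere-Tierney topology in the sense of \cref{def:topology_on_topos} is precisely what the characterisation theorems establish: \cref{thm:topologies_on_ndimsemisSet} for $\ndimsemisSet$, \cref{thm:topologies_on_ndimsSet} for $\ndimsSet$, and \cref{thm:topologies_on_semisSet_and_sSet} for $\semisSet$ and $\sSet$. Thus for every choice of $\cat{D}$ and every admissible word $w$, the morphism $j^w \colon \Omega \to \Omega$ is a topology on the topos $\cat{D}$.

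With both hypotheses verified, \cref{lem:separated_elements_and_sheaves_form_quasitoposes} applies verbatim and yields that $\Separated_{j^w}$ and $\Sheaf_{j^w}$ are quasitoposes. There is no genuine obstacle at this stage: the entire difficulty has been front-loaded into the preceding characterisation theorems, whose purpose is exactly to confirm that the $j^w$ are bona fide topologies so that Johnstone's theorem can be brought to bear. The corollary is therefore an immediate consequence.
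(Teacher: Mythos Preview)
Your proposal is correct and matches the paper's own justification exactly: the paper simply states ``Thanks to \cref{lem:separated_elements_and_sheaves_form_quasitoposes}, we immediately have the next result,'' relying on the fact that each $\cat{D}$ is a presheaf topos and that each $j^w$ has already been shown to be an LT-topology by \cref{thm:topologies_on_ndimsemisSet,thm:topologies_on_ndimsSet,thm:topologies_on_semisSet_and_sSet}.
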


If we want to give a name to the $j^w$-separated elements, we might refer to them  as \emph{partially simple simplicial sets}.
This name suggests a form of \qmarks{simplicity} (generalising the concept of simple graphs), i.e., no distinct simplices with same incidence tuples, as defined in Notation~\ref{not:parallel_k_faces}, at each dimension $k$ where $w_k = 1$.

\section{Topologies on bicolored graphs}
\label{sec:bicolour}

We define \emph{bicoloured graphs} as graphs in which the edges are partitioned into two sets.
Morphisms of bicoloured graphs are graph homomorphisms that respect the edge partition.
Bicoloured graphs are equivalently defined as a presheaf category.
We visualise the edge partition by using two colours: blue and red.

\begin{definition}
    \label{def:bicolored_graph}
    \smash{$\BiColGraph \defeq \PresheafDefault$} for
    \(
        \opcat{\indexcategory} \defeq
        \smash{
        \begin{tikzcd}[ampersand replacement=\&]
            {\color{fcolor} E}
                \ar[r, "\color{fcolor}s"{description}, shift left=3pt]
                \ar[r, "\color{fcolor}t"{description}, shift right=3pt]
            \& V
            \& {\color{scolor} E'}
                \ar[l, "{\color{scolor}t\smash{'}}"{description}, shift left=3pt]
                \ar[l, "{\color{scolor}s\smash{'}}"{description}, shift right=3pt]
        \end{tikzcd}
        }
    \).
\end{definition}

The image of the Yoneda embedding $y$ and the Hasse diagrams of the classifying object $\Omega$ of $\BiColGraph$ are similar to the ones in $\Graph$ (\cref{ex:classifying_object_Omega_in_Set_and_Graph,subsec:topologies}):
\[
    \begin{tikzcd}[ampersand replacement=\&, column sep=large, scale cd=.9]
        \ 0 \ 
        \ar[loop, distance=1em, in=90+25, out=90-25, color=fcolor, "0"']
        \ar[loop, distance=1em, in=270+25, out=270-25, color=scolor, "0'"'] 
        \ar[r, "t"{description}, color=fcolor, bend left=15]
        \ar[r, "t'"{description}, color=scolor, bend right=15, shift right=8pt] 
        \& \ 1 \ 
        \ar[loop, distance=1em, in=90+20, out=90-20, color=fcolor, "{(s,t)}"']
        \ar[loop, distance=4em, in=90+40, out=90-40, color=fcolor, "s \to t"']
        \ar[loop, distance=1em, in=270+20, out=270-20, color=scolor, "{(s',t')}"'] 
        \ar[loop, distance=4em, in=270+40, out=270-40, color=scolor, "s' \to t'"'] 
        \ar[l, "s"{description}, color=fcolor, bend right=15, shift right=8pt]
        \ar[l, "s'"{description}, color=scolor, bend left=15] 
    \end{tikzcd}
    \begin{aligned}
        y(V) &= \cdot 
        & \Omega(V) &= 
        \begin{tikzpicture}[baseline={(0v.base)}]
            \node (0v) at (-5,0) {$\emptyset$};
            \node (1v) at (-4,0) {$\cdot$};
            \draw[-, amber] (0v) to (1v);
        \end{tikzpicture}%
        \\
        y(E) &= \cdot  {\color{fcolor} \to} \cdot
        & \Omega(E) &= 
            \begin{tikzpicture}[baseline={(0.base)}]
                \node (0) at (0,0) {$\emptyset$};
                \node (s) at (1,.2) {$\cdot \phantom{\to \cdot}$};
                \node (t) at (1,-.2) {$\phantom{\cdot \to} \cdot$};
                \node (st) at (2.5,0) {$\cdot \phantom{\to} \cdot$};
                \node (stot) at (4,0) {$\cdot {\color{fcolor} \to} \cdot$};
                \begin{scope}[every path/.append style={amber}]
                \draw[-] (0) to (s);
                \draw[-] (0) to (t);
                \draw[-] (s) to (st);
                \draw[-] (t) to (st);
                \draw[-] (st) to (stot);
                \end{scope}
            \end{tikzpicture}%
        \\
        y(E') &= \cdot {\color{scolor} \to} \cdot
        & \Omega(E') &=
            \begin{tikzpicture}[baseline={(0.base)}]
                \node (0) at (0,0) {$\emptyset$};
                \node (s) at (1,.2) {$\cdot \phantom{\to \cdot}$};
                \node (t) at (1,-.2) {$\phantom{\cdot \to} \cdot$};
                \node (st) at (2.5,0) {$\cdot \phantom{\to} \cdot$};
                \node (stot) at (4,0) {$\cdot {\color{scolor} \to} \cdot$};
                \begin{scope}[every path/.append style={amber}]
                \draw[-] (0) to (s);
                \draw[-] (0) to (t);
                \draw[-] (s) to (st);
                \draw[-] (t) to (st);
                \draw[-] (st) to (stot);
                \end{scope}
            \end{tikzpicture}%
    \end{aligned}
\]

While $\Graph$ has four topologies, it turns out that there are eight in $\BiColGraph$.
We use an extended version of the superscript notation from \cref{def:topologies_on_ndimsemisSet,thm:topologies_on_ndimsemisSet} to exhibit what the closure of each topology adds.

\begin{replemma}{lem:8_topologies_in_BiColGraph}
    There are eight topologies in $\BiColGraph$, whose closure are:
	\begin{multicols}{2}
		\begin{enumerate}[noitemsep]
			\item 
			$j^{00}$ adds nothing (discrete),
			
			\item
			$j^{01}$ adds ${\color{fcolor}E}$,
			
			\item
			$j^{02}$ adds ${\color{scolor}E'}$,
			
			\item
			$j^{03}$ adds ${\color{fcolor}E},{\color{scolor}E'}$,
			
			\item
			$j^{10}$ adds $V$,
			
			\item
			$j^{11}$ adds $V,{\color{fcolor}E}$,
			
			\item
			$j^{12}$ adds $V,{\color{scolor}E'}$,
			
			\item
			$j^{13}$ adds everything (trivial).
            \customqed
		\end{enumerate}
	\end{multicols}
\end{replemma}

With the same reasoning as in $\Graph$ (see, e.g.,~\cite[Theorems~3 \& 4]{Vigna_2003_Simplegraph_quasitopos}),
the $j^{01}$-separated elements are graphs without parallel red edges, and the $j^{02}$-separated elements are graphs without parallel blue edges. 
Thanks to \cref{lem:separated_elements_and_sheaves_form_quasitoposes}, we have the next result.

\begin{corollary}
    Partially simple bicolored graphs, i.e., bicolored graphs where no parallel edges are allowed for only one edge colour, form a quasitopos.
\end{corollary}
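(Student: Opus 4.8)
The plan is to exhibit the partially simple bicoloured graphs as the $j$-separated objects for one of the topologies of \cref{lem:8_topologies_in_BiColGraph} and then to apply \cref{lem:separated_elements_and_sheaves_form_quasitoposes}. Since $\BiColGraph = \PresheafDefault$ is a presheaf category (\cref{def:bicolored_graph}), it is a topos, so the latter lemma is available for each of its eight topologies. Among these, I would single out $j^{01}$ and $j^{02}$, whose closures add, respectively, all edges of one of the two colours while leaving the vertices and the edges of the other colour untouched.

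First I would describe the closure and the dense monos for, say, $j^{01}$. Its closure operator is the bicoloured analogue of \eqref{eq:closure_double_negation_topology_on_Graph}: on a subobject $A' \subseteq A$ it adds exactly those edges of the added colour in $A$ whose source and target already lie in $A'$, and changes nothing else. As in \cref{lem:j^w_dense_subpresheaf}, this forces a mono $A' \mono A$ to be $j^{01}$-dense iff $A'$ already contains all vertices and all edges of the other colour, so that the edges of the added colour are the only data $A'$ may omit. Transporting the classical simple-graph argument (\cite{Vigna_2003_Simplegraph_quasitopos}) across the colour partition, I would then show that $B$ is $j^{01}$-separated iff the edges of the added colour in $B$ are determined by their endpoints, i.e.\ $B$ has no parallel edges of that colour: uniqueness of factorisations through a dense mono is automatic when such edges are endpoint-determined, and it fails as soon as $B$ carries two parallel edges $e_1 \neq e_2$ of that colour, since the representable edge together with its dense sub-bicoloured-graph on the two endpoints then admits the two distinct extensions sending the missing edge to $e_1$ or to $e_2$. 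The symmetric statement holds for $j^{02}$ and the other colour.

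Finally, \cref{lem:separated_elements_and_sheaves_form_quasitoposes} applied to the topos $\BiColGraph$ with the topology $j^{01}$ (resp.\ $j^{02}$) yields that $\Separated_{j^{01}}$ (resp.\ $\Separated_{j^{02}}$) is a quasitopos; by the previous paragraph this is exactly the category of bicoloured graphs in which parallel edges are forbidden for one fixed colour, which proves the corollary. The main obstacle I anticipate is the separatedness characterisation of the middle paragraph: one must pin down the $j^{01}$-dense monos precisely and check that the two-parallel-edges configuration genuinely yields two \emph{distinct} factorisations, that is, faithfully carry the $\Graph$ reasoning through while tracking which colour's edges are the \qmarks{free} ones.
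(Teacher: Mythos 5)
Your proposal is correct and follows essentially the same route as the paper: identify $j^{01}$ and $j^{02}$ from \cref{lem:8_topologies_in_BiColGraph} as the topologies whose closures add one colour of edges, characterise their separated objects by transporting the simple-graph argument of Vigna across the colour partition, and conclude via \cref{lem:separated_elements_and_sheaves_form_quasitoposes}. The paper merely cites this reasoning rather than spelling it out, so your middle paragraph (dense monos and the two-parallel-edges counterexample) is a faithful expansion of the argument the paper intends.
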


For future work, one could imagine defining \emph{bicolored simplicial sets} by considering the following index category:
\[
    \opcat{(\mathsf{BiCol}\Delta)}
    \defeq
    \begin{tikzcd}
	{\color{fcolor} \cdots} & {\color{fcolor} 2} & {\color{fcolor} 1} & 0 & {\color{scolor} {1'}} & {\color{scolor} {2'}} & {\color{scolor} \cdots}
	\arrow["\cdots"{description}, draw=none, from=1-1, to=1-2]
	\arrow["\cdots"{description}, shift right=2, draw=none, from=1-1, to=1-2]
	\arrow["\cdots"{description}, shift left=2, draw=none, from=1-1, to=1-2]
	\arrow[shift right=4, from=1-2, to=1-3]
	\arrow[shift left=4, from=1-2, to=1-3]
	\arrow[from=1-2, to=1-3]
	\arrow[shift right=2, from=1-3, to=1-2]
	\arrow[shift left=2, from=1-3, to=1-2]
	\arrow[shift right=2, from=1-3, to=1-4, "\color{fcolor}t"']
	\arrow[shift left=2, from=1-3, to=1-4, "\color{fcolor}s"]
	\arrow[from=1-4, to=1-3]
	\arrow[from=1-4, to=1-5]
	\arrow[shift right=2, from=1-5, to=1-4, "\color{scolor}s'"']
	\arrow[shift left=2, from=1-5, to=1-4, "\color{scolor}t'"]
	\arrow[shift right=2, from=1-5, to=1-6]
	\arrow[shift left=2, from=1-5, to=1-6]
	\arrow[shift right=4, from=1-6, to=1-5]
	\arrow[shift left=4, from=1-6, to=1-5]
	\arrow[from=1-6, to=1-5]
	\arrow["\cdots"{description}, draw=none, from=1-7, to=1-6]
	\arrow["\cdots"{description}, shift right=2, draw=none, from=1-7, to=1-6]
	\arrow["\cdots"{description}, shift left=2, draw=none, from=1-7, to=1-6]
\end{tikzcd}\]

With this definition, we obtain uniformly colored triangles, tetrahedra, etc., i.e., with all faces having the same colour.
To allow for a triangle to have edges of different colours, we could consider one set of triangle for each possible combination.
In other words, have an object $2_u$ for each word $u=u_0 u_1 u_2 \in \set{0,1}^3$ where $u_i=0$, respectively $u_i=1$, means the $i$\tss{th} edge of the triangle is blue, respectively red.
The object $2_{001}$, for instance, would therefore have face maps $d_2,d_1 \colon 2_{001} \to \color{fcolor}1$ and $d_0 \colon 2_{001} \to \color{scolor}1'$.

\section{Topologies on Fuzzy sets}
\label{sec:fuzzy_case}
\label{sec:topologies_on_FuzzySet}

In the previous sections, we studied specific categories of graphs that are presheaf toposes and identified certain subcategories as quasitoposes through Lawvere-Tierney topologies.
In this section we extend these results to categories of sets endowed with a fuzzy structure.
Assigning a membership value from a poset $(\labels, \leq)$ to each element of a set defines a \emph{fuzzy set} (\cref{def:fuzzy_set}).
Similarly, assigning membership values to the vertices and edges of a graph defines a \emph{fuzzy graph}.
In \cite{Rosset_Overbeek_Endrullis_2023_Fuzzy_presheaves_are_quasitoposes}, we introduced the concept of \emph{fuzzy presheaves}, generalising both fuzzy sets and fuzzy graphs.
Our interest in fuzzy graphs stems from the fact that graph rewriting with \emph{relabelling} can be easily achieved when graphs are labelled from a Heyting algebra—making labelled graphs synonymous with fuzzy graphs—using the \pbpostrong{} graph rewriting formalism.
Moreover, we showed that fuzzy sets, fuzzy graphs, and, more generally fuzzy presheaves form quasitoposes when the membership values are drawn from Heyting algebras~\cite{Rosset_Overbeek_Endrullis_2023_Fuzzy_presheaves_are_quasitoposes}.

The notion of LT-topology, as defined in \cref{def:topology_on_topos} applies only when the underlying category is a topos.
In a topos, LT-topologies correspond to closure operators (\cref{def:closure_operator,lem:equivalence_LT-topologies_and_closure_operators}).
In a quasitopos, the notion of closure operator still applies.

In this section, we characterise all topologies on the category of fuzzy sets, as well as their separated elements and sheaves.

We first recall the definition of fuzzy sets.

\medskip
\noindent\begin{minipage}{.8\linewidth}
\begin{definition}[{\cite{Zadeh_1965_Fuzzy_sets}}]
    \label{def:fuzzy_set}
    Given a poset $(\labels,\leq)$, an \myemph{$\labels$-fuzzy set} is a pair $(A,\alpha)$ consisting of a set $A$ and a \myemph{membership function} $\alpha\colon A \to \labels$.
    A morphism $f\colon(A,\alpha) \to (B,\beta)$ of fuzzy sets is a function $f\colon A \to B$ such that $\alpha \leq \beta f$, i.e., $\alpha(a) \leq \beta f (a)$ for all $a \in A$.
    They form the category $\FuzzySetDefault$.
\end{definition}
\end{minipage}
\hfill
\begin{minipage}{.2\linewidth}
\begin{center}
\begin{tikzcd}[column sep=2mm, row sep=5mm,ampersand replacement=\&]
    A \& {} \& B \\
    \& {\labels} 
    \arrow["\alpha"', from=1-1, to=2-2] 
    \arrow["\beta", from=1-3, to=2-2] 
    \arrow["f", from=1-1, to=1-3]
    \arrow["\leq"{description}, color=gray, draw=none, from=1-2, to=2-2,pos=0.4]
\end{tikzcd}
\end{center}
\end{minipage}
\medskip

\begin{definition}[{\cite[41.1]{Wyler_1991}}]
    \label{def:topology_on_quasitopos}
    A \myemph{topology on a quasitopos} $\cat{E}$ is a closure operator as defined in \cref{def:closure_operator}.
\end{definition}

With the definition of a topology $\topology$ as a closure operator $\topology$ on a quasitopos (\cref{def:topology_on_quasitopos}), the following notions also apply: $\tau$-dense subobjects (\cref{def:dense_subobject}), $\topology$-separated elements, $\topology$-complete elements, and $\topology$-sheaves (\cref{def:separated_elements_complete_elements_sheaves}), see \cite[Definition 41.4 and 42.1]{Wyler_1991}.
Moreover, the analogue of \cref{lem:separated_elements_and_sheaves_form_quasitoposes} holds in this setting:

\begin{lemma}[{\cite[Theorem 43.6]{Wyler_1991}}]
	\label{lem:separated_elements_and_sheaves_in_quasitopos_form_quasitoposes}
	Let $\cat{E}$ be a quasitopos and $\topology$ be a topology on $\cat{E}$.
	Then $\Separated_\topology$ and $\Sheaf_\topology$ are quasitoposes.
\end{lemma}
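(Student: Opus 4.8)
The plan is to follow the reflective-localisation strategy that proves the topos statement \cref{lem:separated_elements_and_sheaves_form_quasitoposes}, carried out this time inside a quasitopos and keeping careful track of the distinction between arbitrary monos and strong monos. First I would observe that a topology $\topology$, being a closure operator, determines a class of $\topology$-dense monomorphisms that is stable under pullback (by \ref{def:topology_on_quasitopos_4_stable_under_PB}) and under composition, and that the separated objects are exactly those $B$ for which the functor $\Hom(-,B)$ sends $\topology$-dense monos to injections, while the sheaves are those $B$ that are right-orthogonal to all $\topology$-dense monos (the factorisation in \cref{def:separated_elements_complete_elements_sheaves} exists and is unique). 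The first substantive step is to construct the two reflections, i.e.\ a separation functor $\cat{E} \to \Separated_\topology$ and a sheafification functor $\cat{E} \to \Sheaf_\topology$, each left adjoint to its inclusion. In a quasitopos these are obtained by the usual two-stage (\qmarks{double plus}) construction: first collapse an object by the relation \qmarks{agreeing on some $\topology$-dense subobject} to land in $\Separated_\topology$, then close the separated object up inside a suitable power object to land in $\Sheaf_\topology$, using that $\cat{E}$ has the required finite colimits and power objects.

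The crucial property, and the main obstacle, is that these reflectors are \emph{left exact}, i.e.\ preserve finite limits; this is exactly what forces the subcategories to inherit the exponentiability structure and not merely the limits. Granting left-exactness, the axioms of \cref{def:quasitopos} follow by reflective bookkeeping. Finite limits exist because $\Separated_\topology$ and $\Sheaf_\topology$ are closed under the finite limits formed in $\cat{E}$ (the defining orthogonality conditions are preserved by limits of the codomain). Finite colimits exist and are computed by applying the left-exact reflector to the colimit taken in $\cat{E}$. Cartesian closure and local cartesian closure reduce to the exponential-ideal property: if $B$ is a sheaf (resp.\ separated) then so is each exponential $B^A$ and each dependent product $\prod_f B$, which one checks by transposing the defining factorisation property across the adjunction $-\times A \dashv (-)^A$ and invoking pullback-stability of $\topology$-dense monos.

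Finally I would restrict the classifier. In $\cat{E}$ strong subobjects are classified by some $\true \colon 1 \mono \Omega$; the candidate inside the subcategory is the object $\Omega_\topology$ of $\topology$-closed strong subobjects (the \qmarks{fixed points} of the closure, which is itself a sheaf). One verifies that for a strong mono $m \colon A \mono B$ with $B$ a sheaf the characteristic map $\charac{m}$ factors through $\Omega_\topology$, and that within $\Sheaf_\topology$ every strong subobject is already $\topology$-closed, so that $\Omega_\topology$ classifies all strong subobjects there; here the strongness-preservation clause \ref{def:topology_on_quasitopos_5_strong} is precisely what guarantees that $\topology$-closed strong subobjects remain strong. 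The $\Separated_\topology$ case runs in parallel, with the separation reflection in place of sheafification. Throughout, the genuine difficulty is concentrated in the left-exactness of the reflectors and the exponential-ideal argument; the remaining axioms are formal. Since the complete verification is exactly \cite[Theorem 43.6]{Wyler_1991}, the quasitopos analogue of \cite[Thm.~10.1]{Johnstone_1979_On_a_topological_topos}, I would appeal to that reference for the technical computations.
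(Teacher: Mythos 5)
The paper does not prove this lemma at all: it is imported verbatim from the literature, cited as Wyler's Theorem~43.6, which is exactly where your proposal also ends up. Your sketch of the underlying argument (closure of $\Separated_\topology$ and $\Sheaf_\topology$ under finite limits, reflectivity giving finite colimits, the exponential-ideal property obtained by transposing across $-\times A \dashv (-)^A$ and pullback-stability of dense monos, and the classifier of $\topology$-closed strong subobjects with clause \ref{def:topology_on_quasitopos_5_strong} guaranteeing strongness) follows the standard route that Wyler's proof takes, so deferring the technical verification to that reference is consistent with the paper's treatment; the one caveat is that full left-exactness of the \emph{separation} reflector is an overstatement of what is needed (and is delicate to establish) --- preservation of finite products, which is equivalent to the exponential-ideal property you already verify directly, suffices for that half of the statement.
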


We fix a Heyting algebra $(\labels, \leq)$ for the remainder of this section.
Therefore, $\FuzzySetDefault$ is a quasitopos~\cite{Stout_1993}, which means that we can consider topologies.

\begin{remark}
    Condition \ref{def:topology_on_quasitopos_5_strong} in the definition of closure operator (\cref{def:closure_operator}) concerns \emph{strong} subobjects. 
    In $\FuzzySetDefault$, a subset $(A_0,\alpha_0) \subseteq (A,\alpha)$ is \emph{strong} if $\alpha_0 = \alpha\restrict{A_0}$, see e.g.~\cite[Ex.~19]{Rosset_Overbeek_Endrullis_2023_Fuzzy_presheaves_are_quasitoposes}.
\end{remark}

We will see in \cref{lem:topologies_on_FuzzySet} that there is a bijection between the topologies on $\FuzzySetDefault$ that do not add elements and the set of \emph{nuclei} on $\labels$.
Thus, we begin by recalling the definition of a nucleus.

\begin{definition}[{\cite[Section 3]{Bezhanishvili_Ghilardi_2007_Nuclei_Heyting_algebra}}]
	\label{def:nucleus}
	In a Heyting algebra $(\labels, \leq)$, a \myemph{nucleus} (plural: \emph{nuclei}) on $\labels$ is a function $\phi \colon \labels \to \labels$ that satisfies for all $a,b \in \labels$:
		\begin{enumerate}[topsep=2pt, noitemsep, series=def_nucleus, label={{\normalfont(\Alph*)}}]
			\item 
			\replabel{it:nucleus_1_meet_preserving}
			\makebox[5cm][l]{$\phi(a \meet b) = \phi(a) \meet \phi(b)$,}  ($\meet$-preservation)
			
			\item 
			\replabel{it:nucleus_2_increasing}
			\makebox[5cm][l]{$a \leq \phi(a)$, and}  (increasing)
			
			\item 
			\replabel{it:nucleus_3_idempotent_one_direction}
			\makebox[5cm][l]{$\phi(\phi(a)) \leq \phi(a)$.} 
		\end{enumerate}
\end{definition}

\begin{replemma}{lem:alternate_def_nucleus}
	A nucleus $\phi \colon \labels \to \labels$ satisfies for all $a,b \in \labels$:
	\begin{enumerate}[topsep=2pt, noitemsep]
		\item[\normalfont(D)]
		\makebox[5cm][l]{$\phi(\top) = \top$,} {\normalfont($\top$-preservation)} 
		
		\item[\normalfont(E)]
		\makebox[5cm][l]{$\phi(a) \leq \phi(b)$ if $a \leq b$,} {\normalfont(monotonicity)} 
		
		\item[\normalfont(F)]
		\makebox[5cm][l]{$\phi(\phi(a)) = \phi(a)$, and} {\normalfont(idempotence)} 
		
		\item[\normalfont(G)]
		\makebox[5cm][l]{$\phi(a) \meet b \leq \phi(a \meet b)$.} 
	\end{enumerate}
	Moreover, a function $\phi \colon \labels \to \labels$ satisfying {\normalfont(B)}, {\normalfont(C)}, {\normalfont(E)}, and {\normalfont(G)} is a nucleus, meaning it satisfies {\normalfont(A)}.
    \customqed
\end{replemma}

\begin{remark}
	Nuclei are usually defined on locale, see e.g.,~\cite[IX.4, p.483]{MacLane_Moerdijk_1994} or \cite[Section 1.5]{Borceux_1994_vol3}.
	A \emph{locale} is the dual structure (i.e.~opposite category) of a \emph{frame}, which is the same structure as a Heyting algebra.
	The key difference lies in the morphisms: frame morphisms and Heyting algebra morphisms are not required to preserve the same operations.
	When viewing the Heyting algebra $\labels$ as a category, a nucleus on $\labels$ is a meet-preserving monad, or in other words a \emph{(commutative) strong} monad (see, e.g.,~\cite[Definition 5.2.9]{Jacobs_2017_book} for the definition).
\end{remark}

Our next result supports MacLane's observation of a \qmarks{formal similarity between the definition of a nucleus on a locale and that of a Lawvere-Tierney topology on a topos}~\cite[p.~484]{MacLane_Moerdijk_1994} and Johnstone's observation that nuclei on the complete Heyting algebra $\Omega$ correspond to Lawvere-Tierney topologies on a topos~\cite[Section C1.1 p.481]{Johnstone_elephant_2002}.

\begin{replemma}{lem:topologies_on_FuzzySet}
    Here is an enumeration of all topologies on $\FuzzySetDefault$:
	\begin{enumerate}[topsep=1pt, itemsep=0pt]
		\item 
		The trivial topology adds all elements and replaces membership values by the ones of the main fuzzy set:
		\[
			\topologybar{(A',\alpha') \subseteq (A,\alpha)}
			\defeq
			(A, \alpha).
		\]
		
		\item 
		The other topologies do not add elements and are in bijection with nuclei $\phi \colon \labels \to \labels$, with the following inverse constructions.
		Given a topology, we define $\phi(x \in \labels)$ as the increase of membership in 
		\begin{align*}
				\topologybar{\set{ \, \cdot^x } \subseteq \set{\cdot^\top}} 
				&\defeq
				\big( \set{ \, \cdot^{\phi(x)}} \subseteq \set{\cdot^\top} \big).
				\\
		\intertext{Given a nucleus $\phi$, define a topology by}
				\topologybar{(A',\alpha') \subseteq (A,\alpha)}
				&\defeq 
				(A', \phi \alpha' \meet \alpha).
                \tag*{\customqed}
		\end{align*}
	\end{enumerate}
\end{replemma}

\begin{example}
	We consider different functions $\phi : \labels \to \labels$.
	\begin{itemize}[topsep=0pt, itemsep=1pt]
		\item 
		For $\phi = \id_{\labels}$, the labels are unchanged, hence giving the \emph{discrete topology}.
		
		\item
		For $\phi = \top$, the constant top function, the corresponding topology discards the membership of the subset and replaces it with membership of the main set, i.e., $\topologybar{(A', \alpha') \subseteq (A,\alpha)} = \big( (A',\alpha\restrict{A'}) \subseteq (A,\alpha) \big)$.
		It is the \emph{double negation topology} as seen in \cite[Lemma 51]{Rosset_Overbeek_Endrullis_2023_Fuzzy_presheaves_are_quasitoposes}. 
		
		\item 
		In a De Morgan Heyting Algebra \textemdash i.e., a Heyting algebra where \textit{both} De Morgan Laws hold \textemdash there exists a topology corresponding to the double negation function $\phi_{\lnot \lnot} \colon \labels \to \labels$.
		We denote it as $\phi_{\lnot \lnot}$ to avoid ambiguity with the double negation topology.  
		We prove in \cref{lem:lnotlnot_monad_plus_inequality_in_DMHeytAlg} that this function is indeed a nucleus on $\labels$.
		Note that the corresponding topology is not the double negation topology. 
	\end{itemize}  
\end{example}

\begin{replemma}{lem:lnotlnot_monad_plus_inequality_in_DMHeytAlg}
    In any Heyting Algebra $(\labels,\leq)$, $\phi_{\lnot \lnot} \colon \labels \to \labels$, defined as $\phi_{\lnot \lnot}(x) = \lnot \lnot x$, is a monad.
	Moreover, if $(\labels,\leq)$ is a De Morgan Heyting algebra, then 
	$\phi_{\lnot \lnot}$ is a nucleus.
	Hence, $\phi_{\lnot \lnot}$ corresponds to a topology on $\FuzzySetDefault$ according to \cref{lem:topologies_on_FuzzySet}.
    \customqed
\end{replemma}

The next lemma gives a necessary and sufficient condition for a fuzzy set to be a sheaf for the topologies in \cref{lem:topologies_on_FuzzySet}.

\begin{replemma}{lem:sheaves_in_FuzzySet}
    Consider a topology on $\FuzzySetDefault$ that corresponds to a nucleus ${\phi \colon \labels \to \labels}$ (cf \cref{lem:topologies_on_FuzzySet}). 
	Every fuzzy set $(B,\beta)$ is separated and $(B,\beta)$ is a sheaf if and only if $\ima(\beta) \subseteq \ima(\phi)$.
    \customqed
\end{replemma}

\noindent \begin{minipage}{.85\linewidth}
\begin{example}
    \label{ex:sheaves_in_fuzzyset_with_unit_interval}
    Take the unit interval $\labels = [0,1]$ and the topology corresponding to $\phi(x) = \max \set{x,0.5} = x \join 0.5$, which is indeed a nucleus as it is increasing, monotone, idempotent, and satisfies $\phi(x) \meet y \leq \phi(x \meet y)$.
	Then $(B,\beta)$ is a sheaf if and only if it has all memberships within $[0.5,1]$.
\end{example}
\end{minipage}
\hfill
\begin{minipage}{.1\linewidth}
\begin{center}
\begin{tikzpicture}[scale=.4]
    \draw[pattern={south west lines}]
        (0,2) rectangle +(1,2);
    \draw (0,0) rectangle + (1,2);
    \node at (1,0) [anchor=west] {$0$};
    \node at (1,2) [anchor=west] {$0.5$};
    \node at (1,4) [anchor=west] {$1$};
    \draw [line width=1pt, -latex] (.5,0) -- (.5,2);
\end{tikzpicture}
\end{center}
\end{minipage}
\medskip

\section{Conclusion}

In this paper, we characterised all Lawvere-Tierney topologies on the categories of simplicial sets, bicoloured graphs, and fuzzy sets.
We also detailed what separated elements and sheaves are for each of those topologies, resulting in the identification of new quasitoposes: partially simple simplicial sets and partially simple graphs.

\subsection{Related Work}
\label{subsec:simplicial_related_work}

Lawvere-Tierney topologies topologies generalise \emph{Grothendieck topologies}, and characterisations of Grothendieck topologies have been made in the literature for several toposes.
For instance, Lindenhovius~\cite{Lindenhovius_2014_Grothendieck_topologies_on_Artinian_poset,Ochs_2021_Missing_formula_Grothendieck_topologies_on_Artinian_poset} showed that for an \emph{Artinian poset}\footnote{A poset is called \emph{Artinian} if every nonempty subset contains a least element} $P$, there exist bijections between the set of subsets of $P$, the set of Grothendieck topologies on the presheaf topos $\cat{E} \defeq \Presheaf{P}$, and the set of nuclei (\cref{def:nucleus}) on the Heyting algebra $\Sub(1_E)$.
There is some similarity with the correspondence we established between LT-topologies on the category of $\labels$-fuzzy set (for a complete Heyting algebra $\labels$) and nuclei on $\labels$ (\cref{lem:topologies_on_FuzzySet}).
Note that Lindenhovius' result does not apply to simplicial sets, as the simplex category $\Delta$ is not a poset due to the presence of parallel morphisms.

Classifications of subtoposes also exist for $\sSet$, the category of simplicial sets.
Kennett et al.~\cite{Kennett_Riehl_Roy_Zaks_2011_Levels_in_topos_simplicial_sets} observed that essential subtoposes of $\sSet$ correspond to the full subcategories of $\Delta$ that are stable under idempotent splitting.
This can be combined with the observation by Kelly and Lawvere~\cite[2, Remark 4.9]{Kelly_Lavwere_1989_Complete_lattice_essential_localizations} that all subtoposes of $\sSet$ are essential.
Note that the subquasitoposes we identified in this paper are not always toposes.
For instance, simple graphs, which arise through separation via the double negation topology on $\Graph = \sSet\leqn[1]\semi$, form a quasitopos but not a topos.

There has also been work on classifying subtoposes of the toposes of sheaves on finite categories, or, more generally, on categories whose slices are essentially finite, such as the category of semi-simplicial sets, see e.g.~\cite[Proposition 4.10]{Kelly_Lavwere_1989_Complete_lattice_essential_localizations} or \cite[Lemma C2.2.21]{Johnstone_elephant_2002}.

\subsection{Future Work}

There are several directions for future work.
Extending our results on simplicial sets to include \emph{symmetric} simplicial sets, which generalise undirected graphs \cite{Grandis_2001_Symmetric_simplicial_sets}, would be a natural next step.
Another direction would be to generalise our results to general presheaf categories.
It would be interesting to see if one can define a notion of \emph{partially simple presheaves}, generalising the notion of \emph{partially simple graphs} and \emph{partially simple simplicial sets} that we have introduced, and to prove that those are the separated elements of LT-topologies on presheaf categories.
An extension to non-presheaf toposes might also be interesting.

Additionally, we could characterise topologies on bicoloured simplicial sets, as mentioned at the end of \cref{sec:bicolour}, or on fuzzy presheaves~\cite{Rosset_Overbeek_Endrullis_2023_Fuzzy_presheaves_are_quasitoposes}, thereby generalising our results on fuzzy sets (\cref{sec:topologies_on_FuzzySet}).

In \cref{sec:top-simplicial,sec:bicolour}, we employed a separation process based on a topology to obtain novel quasitoposes from toposes. 
In \cref{sec:topologies_on_FuzzySet}, we employed the same process, but starting from quasitoposes.
In this respect, an interesting question is whether iterating this processnmight yield interesting new structures.

A complete characterisation of all subquasitoposes for the (quasi)toposes studied in this paper could be another direction for future work.
This would extend existing classifications of subtoposes of $\sSet$ and other toposes, and may reveal connections between the subtopos classifications by Kennett et al.~\cite{Kennett_Riehl_Roy_Zaks_2011_Levels_in_topos_simplicial_sets} and our characterisation of subquasitoposes (see \cref{subsec:simplicial_related_work}).
However, identifying the precise nature of the full subcategories of $\Delta$ that are stable under idempotent splitting remains unclear to us and their identification is a direction for future work.

Finally, one could investigate whether the internal logic of the quasitopos obtained via separation can be deduced from the internal logic of the initial (quasi)topos.

\subsubsection*{Acknowledgments}
Alo\"is Rosset and J\"org Endrullis received funding from the Netherlands Organization for Scientific Research (NWO) under the Innovational Research Incentives Scheme Vidi (project.\ No.\ VI.Vidi.192.004).

{
\bibliographystyle{meta/splncs04}
\sloppy
\bibliography{main}
}

\newpage
\section{Appendix}

\subsection{Proofs of \texorpdfstring{\cref{subsec:leqn_semi}}{Section 3.2}}

The next lemma is a well-known fact that we will need in the proof of \cref{lem:isom_Omega(n)_Omega(n+1)_leq_delta_i}.

\begin{lemma}
    \label{lem:postcomposition_by_monic_is_monic}
    Given a category $\cat{D}$, and a monomorphism $f\colon x \to y$, then postcomposition with $f$ is a monomorphism in $\PresheafDefault$, i.e., for all $c \in \cat{C}$, $f \circ - \colon \cat{C}(c,x) \to \cat{C}(c,y)$ is injective.
\end{lemma}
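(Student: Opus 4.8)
The statement is essentially a reformulation of the definition of a monomorphism, so the plan is simply to unwind the definitions and then invoke the pointwise computation of monomorphisms in presheaf categories. I expect the argument to be immediate once the two equivalent descriptions of monos are placed side by side.

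First I would recall that a morphism $f \colon x \to y$ is by definition a monomorphism precisely when, for every object $c$ and every pair of morphisms $g, h \colon c \to x$, the equation $f \circ g = f \circ h$ forces $g = h$. The map under consideration, $f \circ - \colon \cat{C}(c,x) \to \cat{C}(c,y)$, sends a morphism $g$ to $f \circ g$; its injectivity as a function of sets says exactly that $f \circ g = f \circ h$ implies $g = h$. Thus the injectivity of $f \circ -$ for each $c \in \indexcategory$ is literally the monomorphism condition on $f$, and nothing remains to prove for the \qmarks{i.e.} clause.

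To conclude that postcomposition with $f$ yields a monomorphism in $\PresheafDefault$, i.e.~that the induced natural transformation $y(f)$ is monic, I would invoke the standard fact that monomorphisms in a presheaf category are computed pointwise: a natural transformation is monic if and only if each of its components is an injective function of sets. The component of $y(f)$ at $c$ is exactly the postcomposition map $f \circ - \colon \cat{C}(c,x) \to \cat{C}(c,y)$, which we have just shown to be injective. Hence $y(f)$ is a monomorphism in $\PresheafDefault$.

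There is no genuine obstacle here, as the core of the argument is a direct translation between two equivalent descriptions of monomorphisms; the only external ingredient is the pointwise characterisation of monos in presheaf categories, which is standard (see e.g.~\cite{MacLane_Moerdijk_1994}). If one wishes to avoid even that ingredient, the lemma can be stated and used purely in its $\Set$-valued form, namely that each $f \circ -$ is injective, which is all that is needed in the proof of \cref{lem:isom_Omega(n)_Omega(n+1)_leq_delta_i}.
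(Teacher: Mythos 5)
Your proof is correct. The paper does not actually include a proof of this lemma at all — it is stated as a well-known fact and used in the proof of Lemma~\ref{lem:isom_Omega(n)_Omega(n+1)_leq_delta_i} — and your argument (injectivity of each component $f \circ -$ is literally the definition of $f$ being monic, combined with the easy direction of the pointwise characterisation of monomorphisms in $\PresheafDefault$) is exactly the intended, essentially definitional, justification.
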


\repeatlemma{lem:isom_Omega(n)_Omega(n+1)_leq_delta_i}

\begin{proof}[Proof of \cref{lem:isom_Omega(n)_Omega(n+1)_leq_delta_i}]
	Consider the following simplicial set morphism, which postcomposes with $\opcat{(d^{k+1}_i)}$ in $\Delta\leqn\semi$ (see \cref{def:yoneda_embedding}), or equivalently that precomposes with $d^{k+1}_i$ in $\opcat{(\Delta\leqn\semi)}$.
	\[
		y(d^{k+1}_i) = \big(\opcat{(d^{k+1}_i)} \cdot -\big) \colon y(k) \to y(k+1).
	\]
	By definition of $\Delta\leqn\semi$, the morphism $\opcat{(d^{k+1}_i)}$ is monic (see e.g.~\cite{Friedman_2012_simplicial_sets,Riehl_2011_leisurely_intro_simplicial_sets}).
	It is well-known that postcomposition by a monomorphism is itself a monic operation (see \cref{lem:postcomposition_by_monic_is_monic} for a proof).
	Hence, $y(d^{k+1}_i)$ is monic.
	There is therefore a presheaf isomorphism between $y(k)$ and its image via $y(d^{k+1}_i)$.
	Call this image $x \subseteq y(k+1)$.
	Notice that $x$ is equal to $\hat{i}_{k+1}$ as detailed in \eqref{eq:i_hat}:
	\[
	\left\{
	\begin{array}{lcl}
		y(k)(k+1)		&=& \emptyset  \\
		y(k)(k)			&=& \set{\id_k} \\
		y(k)(k-1)		&=& \setvbar{ d^{k}_{i_{k}} }{i_{k}} \\
		\ldots \\
		y(k)(0)			&=& \setvbar{ d^{1}_{i_{1}} \ldots d^{k}_{i_{k}} }{\ldots}
	\end{array}
	\right.
	\mapsto
	\left\{
	\begin{array}{lcl}
		x(k+1) 	&=& \emptyset \\
		x(k)	&=& \set{d^{k+1}_i} \\
		x(k-1)	&=& \setvbar{ d^{k}_{i_{k}} d^{k+1}_i }{i_{k}} \\
		\ldots \\
		x(0)	&=& \setvbar{ d^{1}_{i_{1}} \ldots d^{k}_{i_{k}} d^{k+1}_i }{\ldots}
	\end{array}
	\right.
	\]
	Hence, there is an isomorphism $y(k) \isom \hat{i}_{k+1}$ as objects of $\ndimsemisSet$.
	Isomorphisms are preserved by functors, hence, the functor $\Sub \colon \ndimsemisSet \to \cat{HeytAlg}$ 
    gives an isomorphism in $\HeytingAlgebra$:
	\[
		\Omega(k) = \Sub(y(k)) \isom \Sub(\ithface_{k+1} \subseteq y(k+1)) = \Omega(k+1)_{\leq \ithface_{k+1}}.
		\qedhere
	\]
\end{proof}

\noindent\begin{minipage}{.52\linewidth}
    \repeatlemma{cor:incidence_Omega(n+1)_becomes_intersection}
\end{minipage}
\hfill
\begin{minipage}{.45\linewidth}
    \[
	\begin{tikzcd}[scale cd=.9, row sep=tiny, column sep=tiny]
	& \Omega(k+1) & \\
	\Omega(k)
	& & \Omega(k+1)_{\leq \ithface_{k+1}}
	\ar[from=1-2, to=2-1, "{\Omega(d^{k+1}_i)}"']
	\ar[from=2-1, to=2-3, leftrightarrow, "\isom"' {name=0}]
	\ar[from=1-2, to=2-3, "{-\meet\ithface_{k+1}}"]
	\ar[from=1-2, to=0, phantom, "\circlearrowleft" description]
	\end{tikzcd}
	\]
\end{minipage}

\begin{proof}[Proof of \cref{cor:incidence_Omega(n+1)_becomes_intersection}]
	Take $x \subseteq y(k+1)$.
	Recall that $\Omega(d^{k+1}_i)$ is defined as follows:
	\[
		\Omega(d^{k+1}_i) =
		\quad
		\begin{tikzcd}[ampersand replacement=\&]
			{\text{output}} \& {\text{input}} \\
			y(k)			\& y(k+1)
			\arrow[dotted, tail, from=1-1, to=1-2]
			\arrow[dotted, tail, from=1-1, to=2-1]
			\arrow["\lrcorner"{anchor=center, pos=0.125}, draw=none, from=1-1, to=2-2]
			\arrow[tail, from=1-2, to=2-2]
			\arrow[from=2-1, to=2-2, tail, "y(d^{k+1}_i)"']
		\end{tikzcd}
	\]
	As noted in the proof of \cref{lem:isom_Omega(n)_Omega(n+1)_leq_delta_i}, the bottom morphism $y(d^{k+1}_i)$ is monic; this allows us to replace $y(k)$ by its image under $y(d^{k+1}_i)$, which is $\hat{i}_{k+1}$.
	Take $x$ as input: 
	\[
	\begin{tikzcd}[ampersand replacement=\&, column sep=huge]
		\mathclap{\Omega(d^{k+1}_i)(x)} 	\& \mathclap{x} \\
		\mathclap{\hat{i}_{k+1}}			\& \mathclap{y(k+1)}
		\arrow[draw=none, "\subseteq" description, from=1-1, to=1-2]
		\arrow[draw=none, "\subseteq" description, sloped, from=1-1, to=2-1]
		\arrow["\lrcorner"{anchor=center, pos=0.25}, draw=none, from=1-1, to=2-2]
		\arrow[draw=none, "\subseteq" description, sloped, from=1-2, to=2-2]
		\arrow[draw=none, "\subseteq" description, from=2-1, to=2-2]
	\end{tikzcd}
	\]
	All objects are now viewed as elements of $\Omega(k+1)$.
	Pullbacks in the Heyting algebra $\Omega(k+1)$ correspond to the meet operator, i.e.,
	\[
		\Omega(d^{k+1}_i)(x) = \hat{i}_{k+1} \meet x.
		\qedhere
	\]
\end{proof}

\repeatlemma{lem:unique_incidence_except_yn_ynhollow}

\begin{proof}[Proof of \cref{lem:unique_incidence_except_yn_ynhollow}]
    To prove surjectivity, take $x_i \in \Omega(k)$, i.e., $x_i \subseteq y(k)$ for $i=0, \ldots, k+1$.
    By \cref{lem:isom_Omega(n)_Omega(n+1)_leq_delta_i}, each $x_i$ can be seen as a subobject of the $i$\tss{th} face $\ithface$ of $y(k+1)$, and therefore as an element of $\Omega(k+1)$.
    Now that we have only element in $\Omega(k+1)$, let $x \defeq x_{k+1} \join \ldots \join x_0$.
    By \cref{cor:incidence_Omega(n+1)_becomes_intersection}, the element $x$ has incidence tuple $(x_{k+1}, \ldots, x_0)$ as desired.

    Regarding injectivity, suppose $x,x' \in \Omega(k+1)$ have the same incidence tuple, i.e., the same image through $(d_{k+1}, \ldots, d_0)$.
    By \cref{cor:incidence_Omega(n+1)_becomes_intersection}, it follows that they have the same intersection with each face $\ithface$ of $y(k+1)$.
    Hence,
    \begin{align*}
        x \meet \ynhollow[k+1]
        &= x \meet \big( \textstyle \bigjoin_{i=0, \ldots, k+1} \ithface \big) 
        \\
        &= \textstyle\bigjoin_{i=0, \ldots, k+1} (x \meet \ithface)
        \tag*{HA distributivity} \\ 
        &= \textstyle\bigjoin_{i=0, \ldots, k+1} (x' \meet \ithface)
        \tag*{assumption}\\
        &= x' \meet \big( \textstyle \bigjoin_{i=0, \ldots, k+1} \ithface \big) 
        \tag*{HA distributivity} \\ 
        &= x' \meet \ynhollow[k+1].
    \end{align*}
    If $x \meet \ynhollow[k+1] = x' \meet \ynhollow[k+1]$ in the Heyting algebra $\Omega(k+1)$, then $x, x' \in \set{\ynhollow[k+1], y(k+1)}$, which proves the lemma.
\end{proof}

\repeatlemma{lem:topologies_on_ndimsemisSet_unique_extension}

\begin{proof}[Proof of \cref{lem:topologies_on_ndimsemisSet_unique_extension}]
	We prove by induction on $n$ that $j^w$ is a well-defined monotone natural transformation.
	In the case $n=0$, there is nothing to prove about $j^0$ and $j^1$.
	Suppose the lemma true for $n$ and let us prove it for $n+1$.
	We write $j \in \set{j^{w0}, j^{w1}}$ to simplify notations. 
	For $j$ to be a natural transformation, the following diagram must commute for each $i=0, \ldots, n+1$:
	\[
	\begin{tikzcd}[ampersand replacement=\&]
	\Omega(n+1) \& \Omega(n+1) \\
	\Omega(n)   \& \Omega(n)
	\ar[from=1-1, to=1-2, "j_{n+1}", dotted]
	\ar[from=1-1, to=2-1, "\Omega(d_i)"']
	\ar[from=1-2, to=2-2, "\Omega(d_i)"]
	\ar[from=2-1, to=2-2, "j_n"']
	\end{tikzcd}
	\]
	By IH, $j_n$ is already well-defined, and thus the incidence tuple of $j_{n+1}(x)$ must be equal to $\vec{z}$ as defined in \eqref{eq:lem_topologies_on_ndimsemisSet_unique_extension_vec_z}.
	By \cref{lem:unique_incidence_except_yn_ynhollow} there are two possible scenarios.
	\begin{itemize}
	\item 
	In the first scenario, suppose $\vec{z} \neq (y(n), \ldots, y(n))$.
	By \cref{lem:unique_incidence_except_yn_ynhollow}, there is only one element $x' \in \Omega(n+1)$ with this incidence tuple, forcing $j_{n+1}(x) = x'$.
	
	\item 
	The second scenario is when $\vec{z} = (y(n), \ldots, y(n))$, in which case $j_{n+1}(x)$ can be either $\ynhollow[n+1]$ or $y(n+1)$ by \cref{lem:unique_incidence_except_yn_ynhollow}.
	Thus, 
	$\ynhollow[n+1] \leq j_{n+1}(x) \leq y(n+1)$.
	To have monotonicity, we must have:
	\[
	x < \ynhollow[n+1] \implies j_{n+1} (x) \leq j_{n+1} (\ynhollow[n+1]).
	\]
	\begin{itemize}
		\item 
		If $j=j^{w0}$, then by definition $j^{w0}_{n+1} (\ynhollow[n+1]) = \ynhollow[n+1]$, which forces 
		\[
			j^{w0}_{n+1} (x) = \ynhollow[n+1].
		\]
		
		\item 
		If $j=j^{w1}$, then by definition $j^{w1}_{n+1} (\ynhollow[n+1]) = y(n+1)$.
		In order to have idempotence, one cannot have $j^{w1}_{n+1}(x) = \ynhollow[n+1]$.
		Therefore, 
		\[
			j^{w1}_{n+1}(x) = y(n+1).
		\]
	\end{itemize}
	\end{itemize}
	In each scenario, there was only one candidate for the image $j_{n+1}(x)$, which proves that the extension is unique.
\end{proof}

\repeattheorem{thm:topologies_on_ndimsemisSet}

\begin{proof}[Proof of \cref{thm:topologies_on_ndimsemisSet}]
	We prove by induction on $n$ that $j^w$ with $w \in \set{0,1}^{n+1}$ is a topology on $\ndimsemisSet$, i.e., that it satisfies \ref{it:topology_true}, \ref{it:topology_idempotent}, and \ref{it:topology_meet} from \cref{def:topology_on_topos}.
	For the case $n=0$, we already know that $j^0$ and $j^1$ are topologies on $\ndimsemisSet[0] = \Set$, and that these are the only topologies on $\Set$ (cf.~\cref{ex:topologies_on_set}).
	
	Suppose the theorem holds for $n \in \N$.
	We prove it for $n+1$.
	For $j \in \set{j^{w0}, j^{w1}}$, we have by IH that  $j_0, \ldots, j_n$ satisfy the three axioms.
	We check that $j_{n+1}$ satisfies the three axioms too.

	The morphism $j_{n+1}$ has been defined on $y(n+1)$ above in \eqref{eq:topologies_on_ndimsSet_on_yn} in a way that guarantees the first topology axiom \ref{it:topology_true} $j(y(n+1)) = y(n+1))$.
	
	We check axiom \ref{it:topology_idempotent}, i.e., that $j_{n+1}$ is idempotent. 
	Take $x \in \Omega(n+1)$.
	\begin{itemize}[topsep=0pt]
		\item 
		If $x=y(n+1)$, then 
		\[
			j_{n+1}(j_{n+1}(y(n+1)) \stackrel{\text{axiom \ref{it:topology_true}}}{=} y(n+1) \stackrel{\text{axiom \ref{it:topology_true}}}{=} j_{n+1}(y(n+1).
		\]
		
		\item
		If $x=\ynhollow[n+1]$, then:
		\[\begin{array}{rclcl}
			j^{w0}_{n+1} j^{w0}_{n+1} \big( \ynhollow[n+1] \big)
			&\stackrel{\text{def.~\eqref{eq:topologies_on_ndimsSet_on_yn}}}{=} 
			&\ynhollow[n+1]
			&\stackrel{\text{def.~\eqref{eq:topologies_on_ndimsSet_on_yn}}}{=} 
			&j^{w0}_{n+1} \big( \ynhollow[n+1] \big). \\
			j^{w1}_{n+1} j^{w1}_{n+1} \big( \ynhollow[n+1] \big)
			&\stackrel{\text{def.~\eqref{eq:topologies_on_ndimsSet_on_yn}}}{=} 
			&y(n+1)
			&\stackrel{\text{def.~\eqref{eq:topologies_on_ndimsSet_on_yn}}}{=} 
			&j^{w1}_{n+1} \big( \ynhollow[n+1] \big).
		\end{array}\]
		
		\item 
		For $x < \ynhollow[n+1]$, we distinguish cases based on the incidence tuple $\vec{z}$ of $j_{n+1}(x)$.
		
		In the case $\vec{z} \neq (y(n), \ldots, y(n))$, $j_{n+1}(x)$ is uniquely determined by $\vec{z}$.
		Notice that $j_{n+1}(j_{n+1}(x))$ and $j_{n+1}$ have the same incidence tuple: for $0 \leq i \leq n+1$,
		\begin{align*}
			(\Omega(d_i) j_{n+1} j_{n+1}) (x)
			&= (j_n j_n \Omega(d_i))(x)
			\tag*{naturality $jj$} \\
			&= (j_n \Omega(d_i)) (x)
			\tag*{IH: $j_n$ idempotent} \\
			&= (\Omega(d_i)j_n)
			\tag*{naturality $j$}
		\end{align*}
		\[\begin{tikzcd}[ampersand replacement=\&]
			{\Omega(n+1)} \& {\Omega(n+1)} \& {\Omega(n+1)} \\
			{\Omega(n)} \& {\Omega(n)} \& {\Omega(n)}
			\arrow["{j_{n+1}}", from=1-1, to=1-2]
			\arrow["{\Omega(d_i)}"', from=1-1, to=2-1]
			\arrow["{j_{n+1}}", from=1-2, to=1-3]
			\arrow["{\Omega(d_i)}"{description}, from=1-2, to=2-2]
			\arrow["{\Omega(d_i)}", from=1-3, to=2-3]
			\arrow["{j_n}"', from=2-1, to=2-2]
			\arrow["{j_n}"', from=2-2, to=2-3]
		\end{tikzcd}\]
		Therefore, by \cref{lem:unique_incidence_except_yn_ynhollow} we have $j_{n+1} j_{n+1} (x) = j_{n+1}$ .

		In the case $\vec{z} = (y(n), \ldots, y(n))$, then $j_{n+1}(x)$ is either $\ynhollow[n+1]$ or $y(n+1)$.
		In each respective case, we have:
		\[\begin{array}{rclclcl}
			j^{w0}_{n+1} j^{w0}_{n+1} (x)
			&= 
			& j^{w0}_{n+1} (\ynhollow[n+1])
			&= 
			&\ynhollow[n+1]
			&= 
			&j^{w0}_{n+1} (x). 
			\\[1mm]
			j^{w1}_{n+1} j^{w1}_{n+1} (x)
			&= 
			&j^{w1}_{n+1}(y(n+1))
			&= 
			&y(n+1)
			&=
			&j^{w1}_{n+1}(x).
		\end{array}\]
	\end{itemize}
	
	Finally, we check axiom \ref{it:topology_meet}, i.e., that $j$ commutes with $\meet$.
	Take $x,x' \in \Omega(n+1)$.
	We want to prove that $j_{n+1}(x \meet x') = j_{n+1}(x) \meet j_{n+1}(x')$.
	We distinguish cases based on $x$ and $x'$:
	\begin{itemize}
		\item
		If one (or both) of $x$ and $x'$ is $y(n+1)$ which is the greatest element of $\Omega(n+1)$, the desired equality trivially holds.
		
		\item 
		If $x=x'=\ynhollow[n+1]$, then the desired equality also trivially holds.
		
		\item 
		If $x < \ynhollow[n+1]$ and $x'=\ynhollow[n+1]$, then $j_{n+1}(x \meet x') = j_{n+1}(x)$, and
		\[
		j_{n+1}(x) \meet j_{n+1}(x') =
		\begin{cases}
			j_{n+1}(x) \meet \ynhollow[n+1]
			\stackrel{(\ast)}{=} j_{n+1}(x)
			&\text{if } j=j^{w0}, \\
			j_{n+1}(x) \meet y(n+1) = j_{n+1}(x)
			&\text{if } j=j^{w1}.
		\end{cases}
		\]
		Where $(\ast)$ holds because by definition of $j^{w0}$, we have $j^{w0}_{n+1}(x) \leq \ynhollow[n+1]$.
		
		\item 
		If $x,x' < \ynhollow[n+1]$, then we have to look at the incidence tuples.
		We notice that both $j_{n+1}(x \meet x')$ and $j_{n+1}(x) \meet j_{n+1}(x')$ have the same incidence tuple:
		for $0 \leq i \leq n+1$, 
		\begin{equation}
			\label{eq:j(x_meet_z)_and_j(x)_and_j(x)_meet_j(z)_same_incidence}
			\left.
			\begin{array}{rcl}
				\Omega(d_i) (j_{n+1} (x \meet x'))
				& \stackrel{j \text{ nat.}}{=}
				& j_n (\Omega(d_i) (x \meet x'))
				\\
				& \stackrel{\text{Cor.~}\ref{lem:face_map_di_commutes_with_meet}}{=}
				& j_n (\Omega(d_i)(x) \meet \Omega(d_i)(x'))
				\\
				& \stackrel{\text{IH}}{=}
				& j_n \Omega(d_i) (x) \meet j_n \Omega(d_i) (x')
				\\
				& \stackrel{j \text{ nat.}}{=}
				& \Omega(d_i) j_{n+1} (x) \meet \Omega(d_i) j_{n+1} (x')
				\\
				& \stackrel{\text{Cor.~}\ref{lem:face_map_di_commutes_with_meet}}{=}
				& \Omega(d_i) (j_{n+1}(x) \meet j_{n+1}(x'))
			\end{array}
			\right\}
		\end{equation}
		To finish the proof, we distinguish cases based on the incidence tuples of $j_{n+1}(x)$ and $j_{n+1}(x')$.
		\begin{itemize}
			\item 
			Suppose one of them, say w.l.o.g.~$j_{n+1}(x)$, has an incidence tuple that is not $(y(n), \ldots, y(n))$.
			It means there exists an index $i$ such that $\Omega(d_i) j_{n+1}(x) < y(n)$.
			Hence,
			\[
			\begin{array}{rcl}
				\Omega(d_i) (j_{n+1}(x) \meet j_{n+1}(x'))
				&\stackrel{\text{Cor.~}\ref{lem:face_map_di_commutes_with_meet}}{=}&
				\Omega(d_i) j_{n+1} (x) \meet \Omega(d_i) j_{n+1} (x') \\
				&\leq& \Omega(d_i) j_{n+1} (x) \\
				&<& y(n).
			\end{array}
			\]
			Therefore, by \eqref{eq:j(x_meet_z)_and_j(x)_and_j(x)_meet_j(z)_same_incidence} both $j_{n+1}(x \meet x')$ and $j_{n+1}(x) \meet j_{n+1}(x')$ have incidence tuple not equal to $(y(n), \ldots, y(n))$.
			By \cref{lem:unique_incidence_except_yn_ynhollow}, both $j_{n+1}(x \meet x')$ and $j_{n+1}(x) \meet j_{n+1}(x')$ are uniquely determined by their incidence tuple, and because both have the same incidence tuple,
			$j_{n+1}(x \meet x') = j_{n+1}(x) \meet j_{n+1}(x')$.
			
			\item 
			Suppose both $j_{n+1}(x)$ and $j_{n+1}(x')$ have their incidence tuple equal to $(y(n), \ldots, y(n))$, which means they both must be either $\ynhollow[n+1]$ or $y(n+1)$.
			Hence,
			\[
			j_{n+1}(x) \meet j_{n+1}(x') =
			\begin{cases}
				\ynhollow[n+1] \meet \ynhollow[n+1] &\text{if } j=j^{w0} \\
				y(n+1) \meet y(n+1)                 &\text{if } j=j^{w1}
			\end{cases}
			\]
			We now consider $j_{n+1}(x \meet x')$.
			Notice that $x \meet x' \leq x < \ynhollow[n+1]$, thus its image through $j_{n+1}$ is also determined by its incidence tuple.
			Moreover, for $0 \leq i \leq n+1$
			\begin{align*}
				\Omega(d_i)(j_{n+1}(x \meet x'))
				&=
				\Omega(d_i)j_{n+1}(x) \meet \Omega(d_i) j_{n+1}(x)
				\tag*{by \eqref{eq:j(x_meet_z)_and_j(x)_and_j(x)_meet_j(z)_same_incidence}} \\
				&= y(n) \meet y(n)
				\tag*{by assumption} \\
				&= y(n)
			\end{align*}
			In other words, the incidence tuple of $j_{n+1}(x \meet x')$ is also $(y(n), \ldots, y(n))$.
			Hence, 
			\[
			j_{n+1}(x \meet x') =
			\begin{cases}
				\ynhollow[n+1] &\text{if } j=j^{w0} \\
				y(n+1)         &\text{if } j=j^{w1}
			\end{cases}
			\]
			Therefore, $j_{n+1}(x \meet x') = j_{n+1}(x) \meet j_{n+1}(x')$.
		\end{itemize}
	\end{itemize}
	We have proven all three axioms, and $j^{w0}$ and $j^{w1}$ are therefore topologies.
	
	To complete the induction step, it remains to prove that all topologies $j'$ on $\ndimsemisSet[n+1]$ are of the form $j^w$ for some bit string $w \in \set{0,1}^{n+2}$.
	Let $j'$ be a topology on $\ndimsemisSet[n+1]$.
	Then, $(j'_k)_{0 \leq k \leq n}$ is a topology on $\ndimsemisSet$.
	Hence, by IH, $(j'_k)_{0 \leq k \leq n}$ must be of the form $j^w$ for some bit string $w \in \set{0,1}^{n+1}$.
	We now show that $j'$ is equal to either $j^{w0}$ or $j^{w1}$ by showing that $j'_{n+1} = j^{w0}_{n+1}$ or $j'_{n+1} = j^{w1}_{n+1}$.
	We look at the image $j'_{n+1}(\ynhollow[n+1])$.
	By naturality of $j'$, the following square commutes:
	\[\begin{tikzcd}[ampersand replacement=\&, sep=tiny]
		{\Omega(n+1)} \&\&\& {\Omega(n+1)} \\
		\& {\ynhollow[n+1 ]} \& {j'_{n+1}(\ynhollow[n+1])} \\
		\& {(y(n), \ldots, y(n))} \& {(y(n), \ldots, y(n))} \\
		{\Omega(n)^{n+1}} \&\&\& {\Omega(n)^{n+1}}
		\arrow["{j'_{n+1}}", from=1-1, to=1-4]
		\arrow["{(\Omega(d_n), \ldots, \Omega(d_0))}"', from=1-1, to=4-1]
		\arrow["{(\Omega(d_n), \ldots, \Omega(d_0))}", from=1-4, to=4-4]
		\arrow[maps to, from=2-2, to=2-3]
		\arrow[maps to, from=2-2, to=3-2]
		\arrow[maps to, from=2-3, to=3-3]
		\arrow[maps to, from=3-2, to=3-3]
		\arrow["{(j'_n)^{n+1} = (j^w_n)^{n+1}}"', from=4-1, to=4-4]
	\end{tikzcd}\]
	In other words, the incidence tuple of $j'_{n+1}(\ynhollow[n+1])$ is $(y(n), \ldots, y(n))$.
	By \cref{lem:unique_incidence_except_yn_ynhollow}, only two elements of $\Omega(n+1)$ have this incidence tuple: $\ynhollow[n+1]$ and $y(n+1)$.
	Hence, $j'_{n+1}(\ynhollow[n+1])$ can only be either $\ynhollow[n+1]$ or $y(n+1)$, which gives the definitions of $j_{n+1}^{w0}$ and $j_{n+1}^{w1}$, respectively (\cref{def:topologies_on_ndimsemisSet}).
	We moreover showed in \cref{lem:topologies_on_ndimsemisSet_unique_extension} that there is a unique way of extending the definition of $j^{w0}$ and $j^{w1}$ into a monotone idempotent natural transformation.
	This concludes the proof.
\end{proof}

\subsection{Proofs of \texorpdfstring{\cref{subsec:leqn}}{Section 3.3}}

\repeatlemma{lem:sx_in_subobject_iff_x_in_subobject}

\begin{proof}[Proof of \cref{lem:sx_in_subobject_iff_x_in_subobject}]
	This is simply a naturality square for the inclusion presheaf morphism:
	\[
	\begin{tikzcd}
		A'(k+1) & A(k+1) \\
		A'(k)   & A(k)
		\ar[from=1-1, to=1-2, tail, "\inclusion"]
		\ar[from=2-1, to=2-2, tail, "\inclusion"']
		\ar[from=2-1, to=1-1, "A'(s^k_i)"]
		\ar[from=2-2, to=1-2, "A(s^k_i)"']
	\end{tikzcd}
	\qedhere
	\]
\end{proof}

Next is a technical lemma needed to prove \cref{thm:topologies_on_ndimsSet}, the main result of \cref{subsec:leqn}.

\begin{lemma}
	\label{lem:s_i_on_true_and_on_emptyset}
	For all $k \leq n$, and $0 \leq i \leq k$, the function 
	\[
		\Omega\leqn(s_i) : \Omega\leqn(k) \to \Omega\leqn(k+1)
	\]
	\begin{itemize}[topsep=1pt, noitemsep]
		\item preserves the top element: $\Omega\leqn(s_i)(y\leqn(k)) = y\leqn(k+1)$, and
		\item preserves the bottom element: $\Omega\leqn(s_i)(\emptyset) = \emptyset$.
	\end{itemize}
\end{lemma}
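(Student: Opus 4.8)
The plan is to recognise that $\Omega\leqn(s_i)$ is nothing but a restriction map of the presheaf $\Omega = \Sub(y\leqn(-))$, and that such restriction maps are given by pullback of subobjects, exactly as in the treatment of the face maps in \cref{lem:isom_Omega(n)_Omega(n+1)_leq_delta_i,cor:incidence_Omega(n+1)_becomes_intersection}. Concretely, the degeneracy $s_i \colon k \to k+1$ in $\opcat{(\Delta\leqn)}$ is the opposite of the codegeneracy $\sigma_i \colon k+1 \to k$ in $\Delta\leqn$, so that $\Omega\leqn(s_i) \colon \Sub(y\leqn(k)) \to \Sub(y\leqn(k+1))$ is the pullback of subobjects along $y\leqn(\sigma_i) \colon y\leqn(k+1) \to y\leqn(k)$. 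This is the degeneracy analogue of the fact that $\Omega(d^{k+1}_i)$ is pullback along $y(\delta_i)$, which \cref{cor:incidence_Omega(n+1)_becomes_intersection} further identifies with $- \meet \ithface_{k+1}$; here I only need the bare "pullback" description.

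Once this identification is in place, both claims reduce to the elementary fact that pullback preserves the top and bottom of a subobject lattice. Since we work in a presheaf category, subobjects are subpresheaves, computed componentwise, and the pullback along $y\leqn(\sigma_i)$ acts in each dimension $l \leq n$ by taking the preimage under the function $(y\leqn(\sigma_i))_l$. I would therefore argue dimensionwise.

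For the top element, note that $y\leqn(k)$ is the maximal subobject of itself (the identity mono), whose preimage under any morphism is the whole codomain; thus in each dimension the preimage of $y\leqn(k)(l)$ is all of $y\leqn(k+1)(l)$, giving $\Omega\leqn(s_i)(y\leqn(k)) = y\leqn(k+1)$. For the bottom element, $\emptyset$ is the empty subpresheaf, whose $l$-component is the empty set, and the preimage of the empty set is empty in every dimension, so $\Omega\leqn(s_i)(\emptyset) = \emptyset$.

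There is essentially no hard step here: the only thing that needs care is the bookkeeping that pins down $\Omega\leqn(s_i)$ as pullback along $y\leqn(\sigma_i)$ (the correct direction and dimension of the Yoneda image of the codegeneracy), after which preservation of top and bottom is immediate from "the preimage of everything is everything, and the preimage of nothing is nothing". One could alternatively phrase the entire argument as the general observation that any pullback functor $\Sub(B) \to \Sub(A)$ preserves the top and the bottom element, specialised to $B = y\leqn(k)$, $A = y\leqn(k+1)$, and $f = y\leqn(\sigma_i)$.
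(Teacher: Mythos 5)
Your proposal is correct and follows essentially the same route as the paper's proof: identify $\Omega\leqn(s_i)$ as pullback of subobjects along the Yoneda image $y\leqn(\opcat{(s_i)}) \colon y\leqn(k+1) \to y\leqn(k)$ of the codegeneracy, then observe that this pullback (computed pointwise in $\Set$) sends the identity/top subobject to the identity/top and the empty subpresheaf to the empty subpresheaf. The paper phrases the top case as "the pullback of an identity mono is an identity mono" while you phrase it as "the preimage of everything is everything," but these are the same observation.
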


\begin{proof}
	Recall 
    that $\Omega\leqn(s_i)$ is defined as taking the pullback along $y(\opcat{s_i})$ (in the category $\ndimsSet$): given some subobject $m \colon x \mono y\leqn(k)$,
	\[
		\begin{tikzcd}
			\cdot 			& x \\
			y\leqn(k+1)		& y\leqn(k)
			\ar[from=1-1, to=1-2, dotted]
			\ar[from=1-1, to=2-1, dotted, tail, "\Omega\leqn(s_i)(m)"']
			\ar[from=1-2, to=2-2, tail, "m"]
			\ar[from=2-1, to=2-2, "y\leqn(\opcat{s_i})"']
			\ar[from=1-1, to=2-2, "\lrcorner"{anchor=center, pos=0.125}, draw=none]
		\end{tikzcd}
	\]
	The pullback of an identity morphism $\id: y\leqn(k) \mono y\leqn(k)$ is also an identity morphism $\id: y\leqn(k+1) \mono y\leqn(k+1)$.
	Moreover, the pullback of $\emptyset \mono y\leqn(k)$ can be computed pointwise in $\Set$, and gives $\emptyset \mono y\leqn(k+1)$.
\end{proof}

The next lemma says that thanks to the simplicial identities \eqref{eq:first_simplicial_identity} and \eqref{eq:simplicial_identities}, any morphism $f \in \opcat{(\Delta\leqn)}$ can be reordered by swapping the face maps at the front of the composition and the degeneracies at the back.

\begin{lemma}[{\cite[Lemma p.177, VII.5]{MacLane_1971}}]
	\label{lem:Delta_morphism_unique_representation}
	In $\opcat{(\Delta\leqn)}$, any morphism $f : k \to l$ has a unique representation
	\begin{equation}
		\label{eq:Delta_morphism_unique_representation}
		f = s^{l-1}_{i_1} 
		\,\mycirc\, 
		\ldots 
		\,\mycirc\, 
		s^{l-m}_{i_m} 
		\,\mycirc\, 
		d^{(l-m)+1}_{i'_1} 
		\,\mycirc\,
		\ldots 
		\,\mycirc\,
		d^{(l-m)+m'=k}_{i'_{m'}}.
	\end{equation}
	where the strings of subscripts $i$ and $i'$ satisfy
	$0 \leq i'_1 < \ldots < i'_{m'} \leq k$ and 
	$l-1 \geq i_1 > \ldots > i_m \geq 0$.
	When $m=0$, $f$ is represented by face maps only and is therefore a $\opcat{(\Delta\leqn\semi)}$-morphism.
\end{lemma}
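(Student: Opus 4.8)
The plan is to recognise \eqref{eq:Delta_morphism_unique_representation} as the dual, inside $\opcat{(\Delta\leqn)}$, of the classical image factorisation of monotone maps between finite ordinals, and to transport existence and uniqueness across this duality. Recall that a morphism $f \colon k \to l$ of $\opcat{(\Delta\leqn)}$ is by definition a morphism $\hat{f} \colon [l] \to [k]$ of $\Delta\leqn$, i.e.\ a monotone map of the ordinals $[l]=\set{0,\ldots,l}$ and $[k]$; under this correspondence each face $d_i$ is dual to the injective coface $\delta_i$ (the monotone injection omitting $i$) and each degeneracy $s_i$ is dual to the surjective codegeneracy $\sigma_i$ (the monotone surjection repeating $i$). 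Since passing to $\opcat{\Delta}$ reverses the order of composition, reading \eqref{eq:Delta_morphism_unique_representation} through this dictionary says exactly that $\hat{f}$ factors as a composite of cofaces after a composite of codegeneracies, i.e.\ $\hat{f}=\iota\,\mycirc\,\pi$ with $\pi\colon[l]\twoheadrightarrow[r]$ a monotone surjection and $\iota\colon[r]\mono[k]$ a monotone injection, where $r=l-m$ (so that the intermediate dimension is $l-m$, the number of degeneracies is $m=l-r$, and the number of faces is $m'=k-r$, whence $(l-m)+m'=k$).

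First I would establish existence. Any monotone $\hat{f}$ factors through its image as a surjection followed by an injection, $\hat{f}=\iota\,\mycirc\,\pi$ with $[r]=\ima(\hat{f})$. A monotone surjection $[l]\twoheadrightarrow[r]$ is a composite of codegeneracies, one for each index $j$ at which two consecutive elements are identified; dualising and ordering the indices decreasingly yields the degeneracy block $s_{i_1}\cdots s_{i_m}$ with $l-1\geq i_1>\cdots>i_m\geq 0$. Symmetrically, a monotone injection $[r]\mono[k]$ is a composite of cofaces, one for each element of $[k]$ missed by the image; dualising and ordering the missed elements increasingly yields the face block $d_{i'_1}\cdots d_{i'_{m'}}$ with $0\leq i'_1<\cdots<i'_{m'}\leq k$. (Alternatively, existence can be shown intrinsically: every morphism of $\opcat{(\Delta\leqn)}$ is a word in the generators, and the identities \eqref{eq:simplicial_identities} rewrite any subword $d_i s_j$ either to $\id$ or to a word of the shape $s\,d$, pushing all degeneracies left and all faces right, after which \eqref{eq:first_simplicial_identity} and the first line of \eqref{eq:simplicial_identities} sort the two blocks into the required strict orders.)

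For uniqueness I would argue that every piece of the data is forced by $\hat{f}$ itself. The image factorisation is unique because $\Delta$ is skeletal (its only isomorphisms are identities), so $[r]=\ima(\hat{f})$, $\pi$, and $\iota$ are pinned down. The injection $\iota$ is determined by $\ima(\hat{f})$, equivalently by the $m'$ elements of $[k]$ it omits, and is a composite of cofaces in a unique way once the indices are required to increase; dually this fixes the face block. The surjection $\pi$ is determined by its fibres, equivalently by the $m$ indices $j$ with $\hat{f}(j)=\hat{f}(j+1)$, and is a composite of codegeneracies in a unique way once the indices are required to decrease; this I would prove by induction on $m$, the largest admissible index being forced at each step. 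Dualising fixes the degeneracy block, so \eqref{eq:Delta_morphism_unique_representation} is unique.

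The only place requiring genuine care, and hence the main obstacle, is the combined bookkeeping of the duality with the combinatorial uniqueness of the codegeneracy decomposition: one must track that passing to $\opcat{\Delta}$ both reverses composition and interchanges the roles of injections and surjections, and must verify that monotonicity of $\pi$ together with the strict ordering $i_1>\cdots>i_m$ determines the indices uniquely. A final routine check concerns the truncation: since the intermediate object has dimension $r=l-m\leq l\leq n$, the whole factorisation stays inside $\Delta\leqn$, so the statement for the truncated category follows verbatim from the one for $\Delta$. The last assertion of the lemma is then immediate, since $m=0$ means the codegeneracy block is empty and $\hat{f}$ is injective, i.e.\ $f$ is a composite of faces alone and hence a morphism of $\opcat{(\Delta\leqn\semi)}$.
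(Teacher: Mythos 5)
Your proposal is correct, and it is essentially the same argument as the paper's source: the paper does not prove this lemma itself but quotes it from MacLane (CWM, VII.5), whose proof is exactly the one you give — dualise to $\Delta$, take the image factorisation of the monotone map into a surjection followed by an injection, decompose the surjection into codegeneracies and the injection into cofaces with the canonically ordered indices, and note uniqueness from skeletality plus the forced index sets. Your additional observation that the factorisation stays inside the truncation, since the intermediate ordinal has dimension $l-m \leq \min(k,l) \leq n$, is the only point specific to $\Delta^{\leq n}$, and you handle it correctly.
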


Similarly to \cref{def:ith_face}, we have the following definition.

\begin{definition}
	For all $k, i \in \N$ such that $0 < k$ and $0 \leq i \leq k$, the \myemph{$i$\tss{th} face} of $y\leqn(k)$ is the least subpresheaf $x$ of $y\leqn(k)$ such that
	\[
		d^k_i \in x(k-1)
		\subseteq
		y\leqn(k)(k-1)
		\opcat{(\Delta\leqn)}(k,k-1)
	\] 
\end{definition}

\begin{remark}
	\label{rem:faces_yleqnsemi}
	Note that in $\ndimsemisSet$, the $i$\tss{th} face of $y\leqn\semi(k)$ (\cref{def:ith_face}) consisted of compositions of face maps.
	However, in $\ndimsSet$, the faces of $y\leqn(k)$ also contain degeneracy maps.
	In particular, the face $\hat{i}_k$ of $y\leqn(k)$ is equal to the $i$\tss{th} face of $y\leqn\semi(k)$ with all degeneracies added.
	For example:
	\begin{itemize}[topsep=4pt, noitemsep]
		\item 
		\makebox[2cm][l]{In $\ndimsemisSet$:}
		\(
		\hat{1}_1 =
		{%
			\newcommand{\va}{\node (va) at (-2mm,0mm) [vertex] {};}%
			\newcommand{\vb}{\node (vb) at (2mm,0mm) [vertex] {};}%
			\begin{tikzpicture}[node distance=15mm,baseline=-1mm,loop/.style={->,densely dotted,distance=3mm}]
				\begin{scope}[local bounding box=a]
					\graphnode[]{eab}{6mm}{2mm}{}{\va}
				\end{scope}
			\end{tikzpicture}%
		} 
		\subseteq
		{%
			\newcommand{\va}{\node (va) at (-2mm,0mm) [vertex] {};}%
			\newcommand{\vb}{\node (vb) at (2mm,0mm) [vertex] {};}%
			\begin{tikzpicture}[node distance=15mm,baseline=-1mm,loop/.style={->,densely dotted,distance=3mm}]
				\begin{scope}[local bounding box=a]
					\graphnode[]{eab}{6mm}{2mm}{}{\va\vb\edge[->]{va}{vb}}
				\end{scope}
			\end{tikzpicture}%
		}
		\)
		
		\item 
		\makebox[2cm][l]{In $\ndimsSet$:}
		\(
		\hat{1}_1 =
		{%
			\newcommand{\vx}{\node (vx) at (-2mm,0mm) [vertex] {};}%
			\begin{tikzpicture}[node distance=15mm,baseline=-1mm,loop/.style={->,densely dotted,distance=3mm}]
				\graphnode[]{gx}{10mm}{2mm}{}{\vx \draw [loop] (vx) to[out=180-35,in=180+35] (vx);}
			\end{tikzpicture}%
		}
		\subseteq
		{%
			\newcommand{\va}{\node (va) at (-2mm,0mm) [vertex] {};}%
			\newcommand{\vb}{\node (vb) at (2mm,0mm) [vertex] {};}%
			\begin{tikzpicture}[node distance=15mm,baseline=-1mm,loop/.style={->,densely dotted,distance=3mm}]
				\begin{scope}[local bounding box=a]
					\graphnode[]{eab}{10mm}{2mm}{}{\va\vb\edge[->]{va}{vb}
						\draw [loop] (va) to[out=180-35,in=180+35] (va);
						\draw [loop] (vb) to[out=  0-35,in=  0+35] (vb);}
				\end{scope}
			\end{tikzpicture}%
		}
		\)
	\end{itemize}
	Therefore, a morphism $f \colon k \to l$ in $\opcat{(\Delta\leqn)}$ is in $\hat{i}_k(l) \subseteq y\leqn(k)(l)$ if and only if its unique representation given in \eqref{eq:Delta_morphism_unique_representation} satisfies $i'_m = i$.
\end{remark}

\repeatlemma{lem:F_k_between_semicase_and_generalcase}

\begin{proof}[Proof of \cref{lem:F_k_between_semicase_and_generalcase}]
    We show the claims one by one for $F_k$. The proofs for $F_k \inv$ are similar.
	\begin{itemize}
		\item
		For $F_k(x\semi)$ to be well-defined, it means $x\semi \cup \degen(x\semi,-) \subseteq y\leqn(k)$, which holds by \cref{lem:sx_in_subobject_iff_x_in_subobject}.
		
		\item
		We show $F_k$ injective.
		Take $x\semi$ and $x'\semi$ in $\Omega\leqn\semi(k)$ with 
		\begin{equation}
			\label{eq:F_k_injective_hypothesis}
			x\semi \cup \degen(x\semi,-) = x'\semi \cup \degen(x'\semi,-).
		\end{equation}
		We prove that $x\semi(l) = x'\semi(l)$ by induction on $l \leq n$:
		\begin{itemize}[topsep=0pt]
			\item For $l=0$, there are no degenerate vertices, so
			\[
			x\semi(0) 
			\stackrel{\text{Def.~\ref{def:degen_set}}}{=} x\semi(0) \cup \degen(x\semi,0)
			\stackrel{\text{\eqref{eq:F_k_injective_hypothesis}}}{=} x'\semi(0) \cup \degen(x'\semi,0)
			\stackrel{\text{Def.~\ref{def:degen_set}}}{=} x'\semi(0).
			\]
			
			\item 
			Suppose true for $l$.
			Given $z \in y\leqn(k)(l+1)$, we have
			\begin{align}
				\begin{split}
					\label{eq:F_k_injective_intermediary_equation}
					z \in &~\degen(x\semi,l+1) \\
					&\stackrel{\mathclap{\text{Def.~\ref{def:degen_set}}}}{\iff}
					z = s^l_i \cdot f, \text{ some } f \in x\semi(l)\cup\degen(x\semi,l) \\
					&\stackrel{\mathclap{\text{IH \eqref{eq:F_k_injective_hypothesis}}}}{\iff}
					z = s^l_i \cdot f, \text{ some } f \in x'\semi(l)\cup\degen(x'\semi,l) \\
					&\stackrel{\mathclap{\text{Def.~\ref{def:degen_set}}}}{\iff}
					z \in \degen(x'\semi,l+1)
				\end{split}
			\end{align}
			Thus:
			\begin{align*}
				&x\semi(l+1) \\
				&= \big( x\semi(l+1) \cup \degen(x\semi,l+1) \big) \setminus \degen(x\semi, l+1)
				\tag*{$F,F\inv$ inv.}\\
				&= \big( x'\semi(l+1) \cup \degen(x'\semi,l+1) \big) \setminus \degen(x\semi, l+1)
				\tag*{by \eqref{eq:F_k_injective_hypothesis}} \\
				&= \big( x'\semi(l+1) \cup \degen(x'\semi,l+1) \big) \setminus \degen(x'\semi, l+1)
				\tag*{by \eqref{eq:F_k_injective_intermediary_equation}} \\
				&= x'\semi(l+1) 
			\end{align*}    
		\end{itemize}
		
		\item
		To see that $F_k$ is surjective, simply observe that given $x \in \Omega\leqn(k)$, i.e., $x \subseteq y\leqn(k)$, then $x = F_k(x \setminus \degen(x,-))$, which is the definition of $F\inv$.
		
		\item
		We show that $F_k$ preserves truth by proving by induction on $l$ that
		\[
		y\leqn\semi(k)(l) \cup \degen(y\leqn\semi(k),l) = y\leqn(k)(l).
		\]
		\begin{itemize}
			\item 
			For $l=0$, there are no degenerate vertices $\degen(y\leqn\semi(k),0) = \emptyset$, hence
			\begin{align*}
				y\leqn\semi(k)(0) 
				&= \opcat{(\Delta\leqn\semi)}(k,0) \\
				&\stackrel{(*)}{=} \opcat{(\Delta\leqn)}(k,0) \\
				&= y\leqn(k)(0).
			\end{align*}
			To see why $(*)$ holds, we show both inclusions.\\
			$(\subseteq)$ Every $\opcat{(\Delta\leqn\semi)}$-morphism is also a $\opcat{(\Delta\leqn)}$-morphism. \\
			$(\supseteq)$ All $\opcat{(\Delta\leqn)}$-morphisms with codomain $0$, when written in their unique representation of \cref{lem:Delta_morphism_unique_representation}, cannot be made of degeneracy maps.
			
			\item 
			Suppose true for $l$.
			We show the desired equality for $l+1$ by double inclusion.\\
			$(\subseteq)$
			Take $f \in y\leqn\semi(k)(l+1)$, i.e., $f \colon k \to l+1$.
			Every $\opcat{(\Delta\leqn\semi)}$-morphism is also a $\opcat{(\Delta\leqn)}$-morphism.
			Hence $f \in y\leqn(k)(l+1)$.\\
			Take $f \in \degen(y\leqn\semi,l+1)$, i.e., $f = s^l_i \cdot f'$ for some $f' \in y\leqn\semi(k)(l) \cup \degen(y\leqn\semi(k),l)$.
			By IH, $f' \in y\leqn(k)(l)$.
			Hence, $f = s^l_i \cdot f' \in y\leqn(k)(l+1)$.
			\\
			$(\supseteq)$
			Take $f \in y\leqn(k)(l+1)$, i.e., $f : k \to l+1$ in $\opcat{(\Delta\leqn)}$.
			By \cref{lem:Delta_morphism_unique_representation}, write $f$ in its unique representation 
			\[
			f = s^l_{i_1} \ldots s^{l-m+1}_{i_{m}} d^{l-m+2}_{i'_1} \ldots d^{l-m+m'-1=k}_{i'_{m'}}.
			\]
			In the case where $m=0$, there is no degeneracy in the composition above, hence $f \in y\leqn\semi(k)(l+1)$.
			Otherwise, $m \geq 1$ and $f=s^l_{i_1} (g)$ for
			\[
			g \defeq s^{l-1}_{i_2} \ldots s^{l-m+1}_{i_{m}} d^{l-m+2}_{i'_1} \ldots d^{l-m+m'-1=k}_{i'_{m'}}.
			\]
			Hence, $g \in y\leqn(k)(l) \stackrel{\text{IH}}{=} y\leqn\semi(k)(l) \cup \degen(y\leqn\semi(k),l)$.
			By \cref{def:degen_set}, this effectively means $f \in \degen(y\leqn\semi(k),l+1)$.
		\end{itemize}
		
		\item 
		We show that $F_k$ commutes with $\land$.
		Given $x\semi$ and $x'\semi$ in $\Omega\leqn\semi(k)$, we have
		\[
		\begin{array}{rcl}
			F(x\semi \cap x'\semi)
			&=& (x\semi \cap x'\semi) \cup \degen(x\semi \cap x'\semi, -) \\
			F(x\semi) \cap F(x'\semi)
			&=& \big(x\semi \cup \degen(x\semi,-)\big) \cap \big(x'\semi \cup \degen(x'\semi,-)\big) \\
			&=& (x\semi \cap x'\semi) 
			\cup \big(\degen(x\semi,-) \cap \degen(x'\semi,-)\big)
			\\
			&& \cup \underbrace{\big(x'\semi \cap \degen(x\semi,-)\big)}_{=\emptyset} 
			\cup \underbrace{\big(x\semi \cap \degen(x'\semi,-)\big)}_{=\emptyset}
			\\
			&=& (x\semi \cap x'\semi) \cup \big(\degen(x\semi,-) \cap \degen(x'\semi,-)\big)
		\end{array}
		\]
		It suffices to prove by induction that for all $l \in \N$: 
		\[
			\degen(x\semi \cap x'\semi, l) = \degen(x\semi,l) \cap \degen(x'\semi,l).
		\]
		\begin{itemize}
			\item
			For $l=0$, we indeed have $\emptyset = \emptyset \cap \emptyset$.
			
			\item 
			Suppose true for $l$.
			For $l+1$, we show equality via a double-inclusion:
			\begin{itemize}
				\item[$(\subseteq)$]
				Take $s^l_i \cdot f \in \degen(x\semi \cap x'\semi,l+1)$, for some $0 \leq i \leq l+1$ and 
				\[
				f \in \big( x\semi(l) \cap x'\semi(l) \big) \cup \degen(x\semi \cap x'\semi,l).
				\]
				If $f \in x\semi(l) \cap x'\semi(l)$, then $s^l_i \cdot f \in \degen(x\semi,l+1) \cap \degen(x'\semi,l+1)$.
				Otherwise, $f \in \degen(x\semi \cap x'\semi,l) \stackrel{\text{IH}}{=} \degen(x\semi,l) \cap \degen(x'\semi,l)$, which also implies that $s^l_i \cdot ) \in \degen(x\semi,l+1) \cap \degen(x'\semi,l+1)$.
				
				\item[$(\supseteq)$]
				Take $f \in \degen(x\semi,l+1) \cap \degen(x'\semi,l+1)$, 
				and consider the unique representation (\cref{lem:Delta_morphism_unique_representation}) of $f$:
				\[
				f = s_{i_1} \underbrace{s_{i_2} \ldots s_{i_m} d_{i'_1} \ldots d_{i'_{m'}}}_{\eqdef g}.
				\]
				By hypothesis, 
				$g  \in x\semi(l) \cup \degen(x\semi,l)$ and $g \in x'\semi(l) \cup \degen(x'\semi,l)$

				When $g \in x\semi(l) \cap x'\semi(l)$, then $f \in \degen(x\semi \cap x'\semi, l+1)$.
				
				The case where $g \in x\semi(l) \cap \degen(x', l)$, or vice-versa, is impossible, because $g$ cannot be both non-degenerate and degenerate.
				
				Lastly, when $g \in \degen(x\semi,l) \cap \degen(x'\semi,l) \smash{\stackrel{\text{IH}}{=}} \degen(x\semi \cap x'\semi,l)$, then $f \in \degen(x\semi \cap x'\semi, l+1)$.
			\end{itemize}
		\end{itemize}
		
		\item 
		Lastly, we show $F$ commutes with face maps. 
		Using \cref{cor:incidence_Omega(n+1)_becomes_intersection}, we reformulate the desired equality as follows, where $\hat{i}_k$ is the $i$\tss{th} face of $y\leqn\semi(k)$: for each $l \in \N$
		\[
		(x\semi(l) \cap \hat{i}_k(l)) \cup \degen(x\semi \cap \hat{i}_k,l)
		= 
		\big(x\semi(l) \cup \degen(x\semi,l)\big) \meet \hat{i}_k(l).
		\]
		For a $\opcat{(\Delta\leqn\semi)}$-morphism $f \colon k \to l$ to be in $\hat{i}_k(l)$, we must have the unique representation (\cref{lem:Delta_morphism_unique_representation}) of $f$ is of the form $f = g \cdot d^{k}_i$ (cf.~\cref{rem:faces_yleqnsemi}).
		We prove the desired equality by induction on $l$.
		
		For $l=0$, we indeed have $x\semi(0) \cap \hat{i}_k(0) = x\semi(0) \cap \hat{i}_k(0)$.
		
		Suppose true for $l$ and let us prove the desired equality with $l+1$.
		First note that
		\begin{align}
				&\degen(x\semi \meet \hat{i}_k, l+1) \nonumber \\
				&= \setvbar{s^l_{i'} \cdot h}{h \in \big(x\semi(l) \meet \hat{i}_k(l) \big) \cup \degen(x\semi \meet \hat{i}_k, l),\ i' \leq l+1} \nonumber \\
				&\stackrel{\text{IH}}{=}  \setvbar{s^l_{i'} \cdot h}{h \in \big(x\semi(l) \cup \degen(x\semi, l) \big) \meet \hat{i}_k(l),\ i' \leq l+1} 
				\label{eq:F_k_commutes_di_IH}
		\end{align}
		We prove the desired equality by double inclusion.
		Take $f \colon k \to l+1$ in $\opcat{(\Delta\leqn)}$.
		\begin{itemize}
			\item[$(\subseteq)$]
			Assume $f \in (x\semi \meet \hat{i}_k)(l+1) \cup \degen(x\semi \meet \hat{i}_k, l+1)$.
			If $f \in (x\semi \meet \hat{i}_k)(l+1) $, then $f \in \big(x\semi(l+1) \cup \degen(x\semi,l+1)\big) \cap \hat{i}_k(l+1)$ directly.
			Otherwise, suppose $f \in \degen(x\semi \meet \hat{i}_k, l+1)$.
			By \eqref{eq:F_k_commutes_di_IH}, we can write $f = s^l_{i'} \cdot h$ for some $h \in \big(x\semi(l) \cup \degen(x\semi, l) \big) \cap \hat{i}_k(l)$.
			The fact that $h \in \hat{i}_k(l)$ means it can be written as a composition in $\opcat{(\Delta\leqn\semi)}$ of the form $g \cdot d^{k}_i$ for some $g \colon k-1 \to l$.
			Therefore, $f = s^l_{i'} \cdot g \cdot d^{k}_i$	and thus belongs to $\hat{i}_k(l+1)$.
			Moreover, having $f=s^l_{i'} \cdot h$ with $h \in \big(x\semi(l) \cup \degen(x\semi, l) \big)$ implies that $f \in \degen(x\semi, l+1)$.
			
			\item[$(\supseteq)$]
			Assume $f \in x\semi(l+1) \cup \degen(x\semi,l+1)$ and $f \in \hat{i}_k(l+1)$.
			If $f \in x\semi(l+1)$, then $f \in x\semi(l+1)\meet \hat{i}_k(l+1)$, as desired.
			Otherwise, assume $f \in \degen(x\semi,l+1)$, which means that $f= s^l_{i'} \cdot h$ for some $h \in x\semi(l) \cup \degen(x\semi,l)$.
			Since $f=s^l_{i'} \cdot h \in \hat{i}_k(l+1)$, the morphism $h$ can be written as a composition of the form $g \cdot d^{k}_i$ for some $g \colon k-1 \to l$, and thus $h \in \hat{i}_k(l)$.
			Hence, by \eqref{eq:F_k_commutes_di_IH}, $f \in \degen(x\semi \meet \hat{i}_k, l+1)$.
			\qedhere
		\end{itemize}
	\end{itemize}
\end{proof}

\repeatlemma{cor:injection_of_topologies}

\begin{proof}[Proof of \cref{cor:injection_of_topologies}]
	Recall that topologies in presheaf categories are bounded meet-semilattice morphisms (cf.~\cref{lem:topology_monotone}).
	Therefore, by the bounded meet-semilattice isomorphism in \cref{lem:F_k_between_semicase_and_generalcase}, we may, without loss of generality, identify $\Omega\leqn$ and $\Omega\leqn\semi$.
	The only distinction between a topology $j$ on $\ndimsSet$ and a topology $j^w$ on $\ndimsemisSet$ is that $j$ must satisfy additional naturality squares for the degeneracy maps.
	Specifically, for each $0 \leq k < k+1 \leq n$, and $0 \leq i \leq k$, the following diagram must commute:
	\[\begin{tikzcd}[sep=scriptsize]
		\Omega\leqn(k+1) 	& \Omega\leqn(k+1) \\
		\Omega\leqn(k) 		& \Omega\leqn(k)
		\ar[from=1-1, to=1-2, "j^w_{k+1}"]
		\ar[from=2-1, to=1-1, "\Omega\leqn(s_i)"]
		\ar[from=2-2, to=1-2, "\Omega\leqn(s_i)"']
		\ar[from=2-1, to=2-2, "j^w_k"']
	\end{tikzcd}\]
	Therefore, each topology on $\ndimsSet$ correspond to a topology on $\ndimsemisSet$ but the converse does not necessarily hold, which gives the desired injection between the set of topologies.
\end{proof}

\repeattheorem{thm:topologies_on_ndimsSet}

\begin{proof}[Proof of \cref{thm:topologies_on_ndimsSet}]
    In this proof, we simplify notations as follows: we identify each $j$ and $F\inv j F = j^w$, we denote $y\leqn$ by $y$, $\Omega\leqn$ by $\Omega$, $\Omega\leqn(s^k_i)$ by $s_i$, and $\Omega\leqn(d^k_i)$ by $d_i$.
	
	Note that a string $w \in \set{0,1}^{n+1}$ is of the form $w=0^m1^{n+1-m}$ for some $0 \leq m \leq n$ if and only if 10 is not a substring of $w$.
	As mentioned in the proof of \cref{cor:injection_of_topologies},
	for a topology $j^w$ on $\ndimsemisSet$ to be a topology on $\ndimsSet$, it must fulfil an extra requirement: the naturality square \eqref{eq:naturality_square_degeneracy_maps} must commute for each degeneracy map $s_i$.
	\begin{equation}
		\label{eq:naturality_square_degeneracy_maps}
		\begin{tikzcd}[sep=scriptsize]
			\Omega(k+1) 	& \Omega(k+1) \\
			\Omega(k) 		& \Omega(k)
			\ar[from=1-1, to=1-2, "j^w_{k+1}"]
			\ar[from=2-1, to=1-1, "s_i"]
			\ar[from=2-2, to=1-2, "s_i"']
			\ar[from=2-1, to=2-2, "j^w_k"']
		\end{tikzcd}
	\end{equation}
	
	To prove the theorem, it suffices to prove the following: for all topologies $j^w$ on $\ndimsemisSet$ with $w = w_0 \ldots w_n \in \set{0,1}^{n+1}$, the following are equivalent:
	\begin{enumerate}[(a), noitemsep]
		\item 
		\label{it:proof_topologies_on_ndimsSet_1}
		For all $k \in \N$ such that $0 \leq k < k+1 \leq n$: $(j^w_k)_{0 \leq k \leq n}$ is a topology on $\ndimsSet$, i.e., \eqref{eq:naturality_square_degeneracy_maps} commutes.
		
		\item 
		\label{it:proof_topologies_on_ndimsSet_2}
		For all $k \in \N$ such that $0 \leq k < k+1 \leq n$: $w_k w_{k+1} \in \set{00, 01, 11}$.
	\end{enumerate}
	We prove that \ref{it:proof_topologies_on_ndimsSet_1} $\Leftrightarrow$ \ref{it:proof_topologies_on_ndimsSet_2} by induction on $n$.
	
	For $n=0$, we have $\ndimsemisSet[0] = \ndimsSet[0] = \Set$ (cf.~\cref{ex:simplicial_sets}).
	The two topologies $j^0$ and $j^1$ on $\Set$ (\cref{ex:topologies_on_set}) vacuously satisfy \ref{it:proof_topologies_on_ndimsSet_1} and \ref{it:proof_topologies_on_ndimsSet_2}, as there exists no $k \in \N$ such that $0 \leq k < k+1 \leq 0$.
	This concludes the base case.
	
	For the induction step, suppose \ref{it:proof_topologies_on_ndimsSet_1} $\Leftrightarrow$ \ref{it:proof_topologies_on_ndimsSet_2} holds for $n$ and let us prove this equivalence for $n+1$.
	Take a topology $j^w$ on $\ndimsemisSet[n+1]$ with $w = w_0 \ldots w_{n+1} \in \set{0,1}^{n+2}$.

	\ref{it:proof_topologies_on_ndimsSet_1} $\Leftarrow$ \ref{it:proof_topologies_on_ndimsSet_2}.
	Suppose $w_k w_{k+1} \in \set{00, 01, 11}$ for all $k \in \N$ such that $0 \leq k < k+1 \leq n+1$.
	In particular, this holds for all $k \in \N$ such that $0 \leq k < k+1 \leq n$, hence by IH, $(j^w_k)_{0 \leq k \leq n}$ is a topology on $\ndimsSet$ and \eqref{eq:naturality_square_degeneracy_maps} commutes for $k = 0, \ldots, n-1$.
	We prove \eqref{eq:naturality_square_degeneracy_maps} commutes for $k=n$.
	It trivially commutes on input $y(n) \in \Omega(n)$:
	\[\begin{tikzcd}[ampersand replacement=\&, column sep=7em]
		{y(n+1)} 	\& {y(n+1)} \\
		{y(n)} 		\& {y(n)}
		\arrow[maps to, from=1-1, to=1-2, "j^w_{n+1}", "\text{by \cref{def:topology_on_topos} \ref{it:topology_true}}"']
		\arrow[maps to, from=2-1, to=1-1, "s_i"', "\text{by \cref{lem:s_i_on_true_and_on_emptyset}}"]
		\arrow[maps to, from=2-1, to=2-2, "j^w_{n}"', "\text{by \cref{def:topology_on_topos} \ref{it:topology_true}}"]
		\arrow[maps to, from=2-2, to=1-2, "s_i", "\text{by \cref{lem:s_i_on_true_and_on_emptyset}}"']
	\end{tikzcd}\]
	Let us prove \eqref{eq:naturality_square_degeneracy_maps} commutes for $x \in \Omega(n)$ with $x \leq \ynhollow$.
	For an arbitrary $0 \leq i \leq n$, consider the element $s_i(x) \in \Omega(n+1)$.
	By definition of $j^w$ (\cref{def:topologies_on_ndimsemisSet,lem:topologies_on_ndimsemisSet_unique_extension}), there are two possible values for $j^w_{n+1} (s_i(x))$, depending on its incidence tuple 
	\begin{align*}
		\vec{z} 
		&\defeq (d_{n+1}, \ldots, d_0)(j^w_{n+1}(s_i(x))) \\
		&= ((j^w_n d_{n+1} s_i) (x), \ldots, (j^w_n d_0 s_i) (x)).
		\tag*{$j^w$ commutes with $d$'s}
	\end{align*}
	\begin{itemize}
		\item 
		Suppose first that $\vec{z} \neq (y(n), \ldots, y(n))$.
		Then, there is a unique element in $\Omega(n+1)$ with $\vec{z}$ as incidence tuple, and $j^w_{n+1}(s_i(x))$ is defined as this unique element.
		To prove that $j^w_{n+1}(s_i(x)) = s_i(j^w_n(x))$, we show that the right-hand side $s_i(j^w_n(x))$ also has $\vec{z}$ as incidence tuple: for $0 \leq i' \leq n+1$,
		\begin{align*}
			&(d_{i'} s_i j^w_n)(x) \\
			&= 
			\begin{cases}
				(s_i d_{i'-1} j^w_n)(x) 	&\text{if } i' > i+1 \\
				j^w_n(x)					&\text{if } i' \in \set{i+1, i} \\
				(s_{i-1} d_{i'} j^w_n)(x)	&\text{if } i' < i
			\end{cases}
			\tag*{simplicial identities \eqref{eq:simplicial_identities}} \\
			&= 
			\begin{cases}
				(s_i j^w_n d_{i'-1})(x) 	&\text{if } i' > i+1 \\
				j^w_n(x)					&\text{if } i' \in \set{i+1, i} \\
				(s_{i-1} j^w_n d_{i'})(x)	&\text{if } i' < i
			\end{cases}
			\tag*{$j^w$ commutes with $d$'s} \\
			&= 
			\begin{cases}
				(j^w_n s_i d_{i'-1})(x) 	&\text{if } i' > i+1 \\
				j^w_n(x)					&\text{if } i' \in \set{i+1, i} \\
				(j^w_n s_{i-1} d_{i'})(x)	&\text{if } i' < i
			\end{cases}
			\tag*{IH: $j^w_n$ commutes with $s$'s} \\
			&= 
			(j^w_n d_{i'} s_i)(x)
			\tag*{simplicial identities \eqref{eq:simplicial_identities}} 
		\end{align*}
		Hence, $(d_{n+1}, \ldots, d_0)(s_i(j^w_n(x))) = \vec{z}$.
		By uniqueness of the element in $\Omega(n+1)$ having $\vec{z}$ as incidence tuple, we must have $j^w_{n+1}(s_i(x)) = s_i(j^w_n(x))$, which concludes the case.
		
		\item 
		Suppose now that $\vec{z} = (y(n), \ldots, y(n))$.
		By definition of $j^w$ (\cref{def:topologies_on_ndimsemisSet,lem:topologies_on_ndimsemisSet_unique_extension}), this can only happen if $w_n = 1$.
		Because we assumed $w_n w_{n+1} \in \set{00, 01, 11}$, we are necessarily in the case $w_n w_{n+1} = 11$.
		Notice that
		\begin{equation}
			\label{eq:j^w_x(x)=y(n)}
			\begin{array}{rcccl}
				j^w_n(x) 
				&\stackrel{\text{identity \eqref{eq:simplicial_identities}}}{=}&
				(j^w_n d_i s_i) (x)
				&\stackrel{\text{Assumption}}{=}&
				y(n).
			\end{array}
		\end{equation}
		Moreover, by definition of $j^w$, the fact that $j^w_{n+1} (s_i (x))$ has incidence tuple $(y(n), \ldots, y(n))$ and the fact $w_{n+1} = 1$ together imply that 
		\begin{equation}
			\label{eq:j^w_n+1(s_i(x))=y(n+1)}
			j^w_{n+1} (s_i(x)) = y(n+1).
		\end{equation}
		Together with the fact that $s_i(y(n)) = y(n+1)$ by \cref{lem:s_i_on_true_and_on_emptyset}, the diagram \eqref{eq:naturality_square_degeneracy_maps} thus commutes:
		\[\begin{tikzcd}[ampersand replacement=\&]
			{s_i(x)} \& {y(n+1)} \\
			x \& {y(n)}
			\arrow[maps to, from=1-1, to=1-2, "j^w_{n+1}", "\text{by \eqref{eq:j^w_n+1(s_i(x))=y(n+1)}}"']
			\arrow[maps to, from=2-1, to=1-1, "s_i"']
			\arrow[maps to, from=2-1, to=2-2, "j^w_n", "\text{by \eqref{eq:j^w_x(x)=y(n)}}"']
			\arrow[maps to, from=2-2, to=1-2, "s_i", "\text{by \cref{lem:s_i_on_true_and_on_emptyset}}"']
		\end{tikzcd}\]
	\end{itemize}
	
	\ref{it:proof_topologies_on_ndimsSet_1} $\Rightarrow$ \ref{it:proof_topologies_on_ndimsSet_2}.
	We show this direction by contraposition.
	Suppose there exists $k \in \N$ such that $0 \leq k < k+1 \leq n+1$ and $w_k w_{k+1} = 10$.
	We show that there exists an element $x \in \Omega(k)$ for which \eqref{eq:naturality_square_degeneracy_maps} does \textbf{not} commute, similarly to what we showed in \cref{ex:topologies_on_ReflGraph}.
	Take $x \defeq \ynhollow[k]$.
	To determine $j^w_{k+1} (s_i (\ynhollow[k]))$, we look at its incidence tuple: for $0 \leq i' \leq k$
	\begin{align*}
		&(j^w_k d_{i'} s_i)(\ynhollow) \\
		&=
		\begin{cases}
			(j^w_k s_i d_{i'-1})(\ynhollow[k]) 	&\text{if } i' > i+1 \\
			j^w_k(\ynhollow[k])					&\text{if } i' \in \set{i+1, i} \\
			(j^w_k s_{i-1} d_{i'})(\ynhollow[k])	&\text{if } i' < i
		\end{cases}
		\tag*{simplicial identities \eqref{eq:simplicial_identities}} \\
		&=
		\begin{cases}
			(j^w_k s_i (y(k-1)) 	&\text{if } i' > i+1 \\
			y(k)					&\text{if } i' \in \set{i+1, i} \\
			(j^w_k s_{i-1} (y(k-1))	&\text{if } i' < i
		\end{cases}
		\tag*{\makecell{by \cref{lem:unique_incidence_except_yn_ynhollow} and\\def.~$j^w$ with $w_k=1$}} \\
		&=
		\begin{cases}
			(j^w_k (y(k-1)) 	&\text{if } i' > i+1 \\
			y(k)				&\text{if } i' \in \set{i+1, i} \\
			(j^w_k (y(k-1))		&\text{if } i' < i
		\end{cases}
		\tag*{by \cref{lem:s_i_on_true_and_on_emptyset}} \\
		&= y(k) 
		\tag*{by definition of $j^w$}
	\end{align*}
	Thus, by definition of $j^w$ and since $w_{k+1} = 0$, we have 
	\begin{equation}
		\label{eq:j^w_k+1(s_i(ynhollow_k))=ynhollow_k+1}
		j^w_{k+1} (s_i (\ynhollow[k])) = \ynhollow[k+1].
	\end{equation}
	Hence:
	\[\begin{tikzcd}[ampersand replacement=\&, column sep=huge]
		{s_i(\ynhollow[k])} \& {\ynhollow[k+1] \neq y(k+1)} \\
		\ynhollow[k] \& {y(k)}
		\arrow[maps to, from=1-1, to=1-2, "j^w_{k+1}", "\text{by \eqref{eq:j^w_k+1(s_i(ynhollow_k))=ynhollow_k+1}}"']
		\arrow[maps to, from=2-1, to=1-1, "s_i"']
		\arrow[maps to, from=2-1, to=2-2, "j^w_k", "\text{def.~$j^w$ with $w_k=1$}"']
		\arrow[maps to, from=2-2, to=1-2, "s_i", "\text{by \cref{lem:s_i_on_true_and_on_emptyset}}"']
	\end{tikzcd}
	\qedhere
	\]
\end{proof}

\subsection{Proofs of \texorpdfstring{\cref{subsec:semi_and_general}}{Section 3.4}}

\repeattheorem{thm:topologies_on_semisSet_and_sSet}

\begin{proof}[Proofs of \cref{thm:topologies_on_semisSet_and_sSet}]
		Let $w \in \set{0,1}^\omega$ be an infinite bit string.
		We proceed with the proof for $D = \semisSet$; the case $D = \sSet$ is analogous with the additional assumption that we take $w$ of the form $0^\omega$ or $0^m1^\omega$.
		
		We define a family of mappings $j^w = (j^w_n)_{n \in \N}$ as follows: on component $n \in \N$, let $j^w_n \defeq j^{w_0 \ldots w_n}_n$. 
		By \cref{thm:topologies_on_ndimsemisSet}, $(j^w_k)_{0 \leq k \leq n} = (j^{w_0 \ldots w_n}_k)_{0 \leq k \leq n}$ is a topology on $\ndimsemisSet$, i.e., the following hold for all $n \in \N$:
		\begin{enumerate}[(a), noitemsep]
			\item 
			\label{it:proof_sSet_1}
			For all $0 \leq k < k+1 \leq n$, the naturality squares for all $\Omega\leqn\semi(d_i)$ 
			commute:
			\[
			\begin{tikzcd}
				\Omega\leqn\semi(k+1) 	& \Omega\leqn\semi(k+1) \\
				\Omega\leqn\semi(k)   	& \Omega\leqn\semi(k)
				\ar[from=1-1, to=1-2, "j^w_{k+1}"]
				\ar[from=1-1, to=2-1, "\Omega\leqn\semi(d_i)"']
				\ar[from=1-2, to=2-2, "\Omega\leqn\semi(d_i)"]
				\ar[from=2-1, to=2-2, "j^w_k"']
			\end{tikzcd}
			\]
			
			\item 
			\label{it:proof_sSet_2}
			For all $0 \leq k \leq n$, $j^w_k \cdot \true_k = \true_k$.
			
			\item 
			\label{it:proof_sSet_3}
			For all $0 \leq k \leq n$, $j^w_k \cdot j^w_k = j^w_k$.
			
			\item 
			\label{it:proof_sSet_4}
			For all $0 \leq k \leq n$, $j^w_k \cdot \land = \land \cdot (j^w_k \times j^w_k)$.
		\end{enumerate}
		By definition, $j^w$ is a topology on $\sSet$ if and only if \ref{it:proof_sSet_1}-\ref{it:proof_sSet_4} hold for all $k \in \N$.
		This is the case, because \ref{it:proof_sSet_1}-\ref{it:proof_sSet_4} hold for arbitrary $n \in \N$.

		We now prove that all topologies on $\semisSet$ are of the form $j^w$ for some infinite bit string $w \in \set{0,1}^\omega$.
		Take a topology $j = (j_k)_{k \in \N}$ on $\semisSet$.
		We inductively define $w \in \set{0,1}^\omega$, such that, for all $n \in \N$, the topologies $(j_k)_{0 \leq k \leq n}= j^{w_0 \ldots w_n}$ are equal.
		\begin{itemize}
			\item 
			For $n=0$, $j_0$ is a topology on $\cat{D}\leqn[0] = \Set$ (\cref{ex:topologies_on_set}).
			If $j_0 = j^0$, the discrete topology on $\Set$, let $w_0 \defeq 0$, and if $j_0 = j^1$, the trivial topology on $\Set$, let $w_0 \defeq 1$.
			
			\item 
			Suppose true for $n$, i.e., $w_0 \ldots w_n$ has been defined and $(j_k)_{0 \leq k \leq n}=j^{w_0 \ldots w_n}$.
			Notice that $(j_k)_{0 \leq k \leq n+1}$ is a topology on $\ndimsemisSet[n+1]$ that extends the topology $(j_k)_{0 \leq k \leq n}$ on $\ndimsemisSet$.
			By \cref{thm:topologies_on_ndimsemisSet}, the topology $(j_k)_{k \leq n+1}$ must be of the form $j^{w_0 \ldots w_n w_{n+1}}$ for some bit $w_{n+1} \in \set{0,1}$. 
		\end{itemize}
		The natural transformations $j$ and $j^w$ agree on each component and are therefore equal, which concludes the proof.
\end{proof}

\subsection{Proofs of \texorpdfstring{\cref{subsec:separated_elements_and_sheaves}}{Section 3.5}}

\repeatlemma{lem:closure_of_topologies_on_sSet}

\begin{proof}[Proof of \cref{lem:closure_of_topologies_on_sSet}]
    Recall the definition of characteristic functions (\cref{def:subobject_classifier}).
	In particular, for all $k \in \N$:
	\begin{equation}
		\label{eq:charac_function_recall}
		a \in A'(k) \iff \charac{A'}_k(a) = \true_k(\cdot) = y(k)
		\qquad\qquad
		\begin{tikzcd}[column sep=5mm, row sep=5mm,ampersand replacement=\&]
			A' \& 1 \\
			A \& \Omega
			\ar[from=1-1, to=1-2, "!"]
			\ar[from=1-1, to=2-1, tail]
			\ar[from=1-2, to=2-2, "\true"]
			\ar[from=2-1, to=2-2, "\charac{A'}"']
			\ar[from=1-1, to=2-2, "\lrcorner"{anchor=center, pos=0.125}, draw=none]
		\end{tikzcd}
	\end{equation}
	Recall also that $\charac{\topologybar{A'}} = j^w \mycirc \charac{A'}$ by \cref{lem:equivalence_LT-topologies_and_closure_operators}, i.e., the pullback squares of the subobjects $A' \subseteq A$ and $\topologybar{A'} \subseteq$ can be visualised as follows~\cite[p.~220]{MacLane_Moerdijk_1994}:
	\begin{equation}
		\label{eq:charac_function_closure_recall}
		\begin{tikzcd}[ampersand replacement=\&]
		{\topologybar{A'}} \&\&\& 1 \\
		\& {A'} \& 1 \\
		A 
		\arrow[drrru, to path= { 
			-- ([yshift=-2ex]\tikztostart.south) 
			|- ([yshift=-2ex]\tikztotarget.south) node[near end, below, font=\scriptsize] {${\charac{\topologybar{A'}}}$}
			-- (\tikztotarget)}]
		\& A \& \Omega \& \Omega
		\arrow[from=1-1, to=1-4]
		\arrow[tail, from=1-1, to=3-1]
		\arrow["\lrcorner"{anchor=center, pos=0.125}, draw=none, from=1-1, to=3-4]
		\arrow["\true", from=1-4, to=3-4]
		\arrow[from=2-2, to=2-3]
		\arrow[tail, from=2-2, to=3-2]
		\arrow["\lrcorner"{anchor=center, pos=0.125}, draw=none, from=2-2, to=3-3]
		\arrow[from=2-3, to=1-4]
		\arrow["\true", from=2-3, to=3-3]
		\arrow[equal, from=3-1, to=3-2]
		\arrow["{\charac{A'}}", from=3-2, to=3-3]
		\arrow["j", from=3-3, to=3-4]
	\end{tikzcd}
	\end{equation}
	We prove the lemma by a (strong) induction on $k$.
	For $k=0$, we have two cases
	\begin{itemize}
		\item 
		Suppose $w_0 = 0$, which means that $j^w$ extends the topology $j^0$ on $\Set$, i.e,. $j^w_0 = \id_{\Omega(0)}\colon \Omega(0) \to \Omega(0)$.
		Recall that $\Omega(0) = \set{\emptyset, y(0)=\id_0}$.
		We show that $\topologybar{A'}(0) = A'(0)$: given a vertex $v \in A(0)$, then
		\begin{align*}
			v \in \topologybar{A'}(0)
			&\iff j^w_0 \charac{A'}_0 (v) = \id_0 = y(0) \in \Omega(0)
			\tag*{by \eqref{eq:charac_function_recall}, \eqref{eq:charac_function_closure_recall}} \\
			&\iff \charac{A'}_0 (v) = \id_0
			\tag*{def.~$j^w_0$} \\
			&\iff v \in A'(0) 
			\tag*{by \eqref{eq:charac_function_recall}}
		\end{align*}
		
		\item 
		Suppose $w_0 = 1$, which means that $j^w$ extends the topology $j^1$ on $\Set$, i.e,. $j^w_0 \colon \Omega(0) \to \Omega(0) \colon \emptyset, \id_0 \mapsto \id_0$.
		We show that $\topologybar{A'}(0) = A(0)$: given a vertex $v \in A(0)$, then
		\begin{align*}
			v \in \topologybar{A'}(0)
			&\iff j^w_0 \charac{A'}_0 (v) = \id_0 = y(0) \in \Omega(0)
			\tag*{by \eqref{eq:charac_function_recall}, \eqref{eq:charac_function_closure_recall}} \\
			&\iff \charac{A'}_0 (v) \in \set{\emptyset, \id_0}
			\tag*{def.~$j^w_0$} \\
			&\iff v \in A(0) 
			\tag*{by \eqref{eq:charac_function_recall}}
		\end{align*}
	\end{itemize}

	For the induction step, take $k \geq 1$.
	Suppose true for all $m < k$, and let us prove the statement of the lemma for $k$.
	We again distinguish cases:
	\begin{itemize}
		\item 
		Suppose $w_k = 0$.
		Then, by \cref{def:topologies_on_ndimsemisSet} $j^w$ sends only $y(k)$ to $y(k)$.
		Thus for $x \in A(k)$, we have
		\begin{align*}
			x \in \topologybar{A'}(k)
			&\iff j^w_k \charac{A'}_k (x) = \true_k(\cdot) = y(k) \in \Omega(k)
			\tag*{by \eqref{eq:charac_function_recall}, \eqref{eq:charac_function_closure_recall}} \\
			&\iff \charac{A'}_k (x) = y(k)
			\tag*{def.~$j^w_k$ with $w_k=0$} \\
			&\iff x \in A'(k) 
			\tag*{by \eqref{eq:charac_function_recall}}
		\end{align*}
		
		\item 
		Suppose $w_k = 1$.
		There are two scenarios that we consider, depending on whether all bits $w_m$ with $m < k$ are also $1$ or not.
		
		The first scenario is when $w \in \set{1}^{k+1}$.
		We want to show that $j^w$ is the trivial topology, i.e., that the closure associated with $j^w$ replaces every subobject by its main object.
		In other words, we make the intermediate claim that \eqref{eq:proof_closure_topologies_sSet_intermediary_claim} holds for all $m$ such that $0 \leq m < k$, and we prove it by induction on $m$:
		\begin{equation}
			\label{eq:proof_closure_topologies_sSet_intermediary_claim}
			\topologybar{A'}(m) = A(m)
		\end{equation}
		\begin{itemize}[leftmargin=8mm]
			\item
			For $m=0$: $\topologybar{A'}(0) = A(0)$ was proven in the base case of the main induction.
			
			\item 
			Suppose \eqref{eq:proof_closure_topologies_sSet_intermediary_claim} holds for $m$.
			We prove it for $m+1 < k$.
			Note that because $m+1 < k$, the main IH \eqref{eq:IH_closure_j^w} applies to $m+1$:
			\begin{align*}
				&\topologybar{A'}(m+1) \hspace*{120mm} \\
				&= \setvbar{x \in A(m+1)}{\forall 0 \leq i \leq m+1 \colon A(d_i^{m+1})(x) \in \topologybar{A'}(m)} 
				\tag*{main IH \eqref{eq:IH_closure_j^w}} \\
				&= \setvbar{x \in A(m+1)}{\smash{\underbrace{\forall 0 \leq i \leq m+1 \colon A(d_i^{m+1})(x) \in A(m)}_{\mathclap{\text{vacuous statement}}}}}
				\tag*{IH claim \eqref{eq:proof_closure_topologies_sSet_intermediary_claim}} \\
				&= A(m+1).
			\end{align*}
		\end{itemize}
		We have proven the intermediate claim \eqref{eq:proof_closure_topologies_sSet_intermediary_claim} and can go back to proving \eqref{eq:IH_closure_j^w} for $k$:
		\begin{align*}
			\topologybar{A'}(k)
			&= \set{x \in \topologybar{A'}(k)} 
			\\
			&= \setvbar{x \in \topologybar{A'}(k)}{\smash{
				\overbrace{\forall 0 \leq i \leq k \colon A(d_i)(x) \in A(k-1)}^{\mathclap{\text{vacuous statement}}}
			}}
			\\
			&= \setvbar{x \in \topologybar{A'}(k)}{\forall 0 \leq i \leq k \colon A(d_i)(x) \in \topologybar{A'}(k-1)}
			\tag*{by \eqref{eq:proof_closure_topologies_sSet_intermediary_claim}}
		\end{align*}
		as desired.
		
		In the second scenario to consider, $j^w$ is not the trivial topology and there exists $m \geq 1$ such that $w_{k-m} = 0$.
		Suppose w.l.o.g.~that $m \geq 1$ is the smallest natural number such that $w_{k-m} = 0$; in other words, $w_l = 1$ for all $k-m+1 \leq l \leq k$.
		In the next part of the proof, we frequently consider the morphism $d_{i_{k-m+1}, \ldots, i_k}$ in the index category, defined below for arbitrary indices $0 \leq i_k \leq k$, $\ldots$, $0\leq i_{k-m+1} \leq k-m+1$.
		This morphism is defined as a composition of $m$ face maps:
		\[
			d_{i_{k-m+1},\ldots,i_k}
			\defeq
			\qquad
			\big(
			\begin{tikzcd}[ampersand replacement=\&]
				{k} \& {k-1} \& \cdots \& {k-m}
				\arrow["{d^k_{i_k}}", from=1-1, to=1-2]
				\arrow[from=1-2, to=1-3]
				\arrow["{d^{k-m+1}_{i_{k-m+1}}}", from=1-3, to=1-4]
			\end{tikzcd}
			\big)
		\]
		We now prove \eqref{eq:IH_closure_j^w} for $k$.
		Take $x \in A(k)$.
		We use the composition $d_{i_{k-m+1},\ldots,i_k}$ of face maps to go to the dimension $k-m$ in order to use the IH:
		\begin{align*}
			&x \in \topologybar{A'}(k) \\
			&\iff j^w_k \charac{A'}_k (x) = \true_k(\cdot) = y(k) \in \Omega(k)
			\hspace*{15mm}
			\tag*{by \eqref{eq:charac_function_recall}, \eqref{eq:charac_function_closure_recall}} \\
			&\iff
			\forall i_k \colon (j^w_{k-1} \Omega(d^k_{i_k}) \charac{A'}_k) (x) = y(k-1)
			\tag*{$(*)$}
		\intertext{
		The reason for $(*)$ is due to the definition of $j^w_k$ with $w_k = 1$, as described in \eqref{eq:j_n+1_on_x_<_ynhollow} of \cref{lem:topologies_on_ndimsemisSet_unique_extension}.
		It says that an element of $\Omega(k)$ is sent by $j^w_k$ to $y(k)$ if and only if all its faces are sent by $j^w_{k-1}$ to $y(k-1)$. 
		We repeat the same reasoning $m-1$ more times and continue the equivalence:}
			&\iff
			\forall i_{k-m+1} \ldots i_k \colon (j^w_{k-m} \Omega(d_{i_{k-m+1},\ldots,i_k}) \charac{A'}_k) (x) = y(k-m)
			\tag*{$(*)$}
			\\
			&\iff \forall \dittotikz \colon (\Omega(d_{i_{k-m+1},\ldots,i_k}) \charac{A'}_k) (x) = y(k-m)
			\tag*{\eqref{eq:topologies_on_ndimsSet_on_yn}, $w_{k-m}=0$}
			\\
			&\iff \forall \dittotikz \colon (\charac{A'}_{k-m} A(d_{i_{k-m+1},\ldots,i_k})) (x) = y(k-m)
			\tag*{$\charac{A'}$ naturality} \\
			&\iff \forall \dittotikz \colon  A(d_{i_{k-m+1},\ldots,i_k}) (x) \in A'(k-m)
			\tag*{by \eqref{eq:charac_function_recall}} \\
			&\iff \forall \dittotikz \colon  A(d_{i_{k-m+1},\ldots,i_k}) (x) \in \topologybar{A'}(k-m)
			\tag*{IH $w_{k-m}=0$ \eqref{eq:IH_closure_j^w}} \\
			&\iff \forall i_k \colon A(d_{i_k}) (x) \in \topologybar{A'}(k-1).
			\tag*{IH \eqref{eq:IH_closure_j^w} $(m-1)$ times}
		\end{align*}
      This concludes the induction step and thus the proof.
      \qedhere
	\end{itemize}
\end{proof}

\repeatlemma{lem:j^w_dense_subpresheaf}

\begin{proof}[Proof of \cref{lem:j^w_dense_subpresheaf}]
    We prove the lemma by induction on $n \in \N$.
	\begin{itemize}
		\item 
		For $n=0$,
		\[
		\begin{array}{rcl}
			\topologybar{A'}(0) = A(0)
			&\stackrel{\text{\cref{lem:closure_of_topologies_on_sSet}}}{\iff}&
			\begin{cases}
				\text{If } w_0=0:~ A'(0) = A(0) \\
				\text{If } w_0=1:~ A (0) = A(0)
			\end{cases}
			\\
			&\iff&
			\text{If } w_0=0: A'(0) = A(0).
		\end{array}
		\]
		
		\item 
		Suppose the lemma holds for $n$ and let us prove it for $n+1$.
		\begin{itemize}
			\item[($1 \Rightarrow 2$)]
			Suppose $\topologybar{A'}(n+1) = A(n+1)$ and $w_{n+1} = 0$.
			By \cref{lem:closure_of_topologies_on_sSet}, $\topologybar{A'}(n+1) = A'(n+1)$.
			Hence, $A'(n+1) = A(n+1)$, as desired.
			
			\item[(2 $\Rightarrow$ 1)]
			Suppose that for all $k \leq n+1$, if $w_k = 0$, then $A'(k) = A(k)$.
			We want to prove that for all $k \leq n+1$: $\topologybar{A'}(k) = A(k)$.
			By IH, this holds already for all $k \leq n$, hence what needs to be proven is $\topologybar{A'}(n+1) = A(n+1)$.
			
			Consider the case where $w_{n+1} = 0$.
			We have the desired equality:
			\[
			\begin{array}{rcl}
				\topologybar{A'}(n+1)
				\stackrel{\text{\cref{lem:closure_of_topologies_on_sSet}}}{=} 
				A'(n+1)
				\stackrel{\text{Assumption}}{=}
				A(n+1).
			\end{array}
			\]
			
			Consider the case where $w_{n+1} = 1$.
			Take $x \in A(n+1)$.
			\[
			\begin{array}{rcl}
				&\topologybar{A'}(n+1) \\
				&\stackrel{\text{\cref{lem:closure_of_topologies_on_sSet}}}{=}&
				\setvbar{x \in A(n+1)}{\forall i=0, \ldots, n+1 \colon A(d_i)(x) \in \topologybar{A'}(n)}
				\\
				&\stackrel{\text{IH}}{=}&
				\setvbar{x \in A(n+1)}{\smash{\underbrace{\forall i=0, \ldots, n+1 \colon A(d_i)(x) \in A(n)}_{\text{vacuous statement}}}}
				\\
				&=&
				A(n+1) 
			\end{array}
			\]
		\end{itemize}
	\end{itemize}
\end{proof}

\repeattheorem{thm:separated_elements_sheaves_ndimsemisSet}

\begin{proof}[Proof of \cref{thm:separated_elements_sheaves_ndimsemisSet}]
    Observe that Items 1 and 2 imply Item 3
	Take $B \in \cat{D}$.
	We first prove Item 1.
	Recall from \cref{lem:j^w_dense_subpresheaf} that a subpresheaf $A' \subseteq A$ is $j^w$-dense if $A'(k) = A(k)$ for every $k$ such that $w_k = 0$.

	\begin{itemize}
		\item[($\Rightarrow$)]
		Suppose $B$ is $j^w$-separated.
		Take $k \in \N$ with $w_k = 1$.
		We prove that $B$ is $k$-simple.
		Consider the subpresheaf $\ynhollow[k] \subseteq y(k)$.
		By \cref{lem:j^w_dense_subpresheaf}, $\ynhollow[k]$ is $j^w$-dense, because it is different from $y(k)$ only on component $k$, $y(k)(k) \setminus \ynhollow[k](k) = \set{\id_k}$, and $w_k=1$.
		We distinguish cases based on $k$.
		
		\begin{itemize}
			\item 
			Suppose $k=0$.
			Take two vertices $x, x' \in B(0)$.
			To prove that $B$ is $0$-simple, we show that $x$ and $x'$ must necessarily be equal.
			For $k=0$, the subpresheaf $\ynhollow[0] \subseteq y(0)$ is the inclusion of the empty graph into the graph with one vertex: $\emptyset \subseteq \set{\cdot}$.
			Both $x,x' \in B(0)$ define a factorisation of $\emptyset \colon \emptyset \to B$ through $\emptyset \subseteq \set{\cdot}$.
			Because $B$ is $j^w$-separated, there is at most one such factorisation, hence $x=x'$, i.e., $B$ is $0$-simple.
			\[
				\begin{tikzcd}[row sep=scriptsize]
					\emptyset 
					\ar[dr, "\emptyset"]
					\ar[d, "\subseteq" {description, rotate=270}, phantom] \\
					\set{\cdot}
					\ar[r, dotted, shift left=1, "x"]
					\ar[r, dotted, shift right=1, "x'"']
					& B
				\end{tikzcd}
			\]
			
			\item 
			Suppose $1 \leq k$.
			Take two parallel $k$-faces $x,x' \in B(k)$, i.e.~$x,x' \in d\inv(\vec{z})$ for some $z \in B(k-1)^{k+1}$ (cf.~Notation~\ref{not:parallel_k_faces}).
			To prove that $B$ is $k$-simple, we show that $x$ and $x'$ must necessarily be equal.
			We define three $D$-morphisms $f \colon \ynhollow[k] \to B$, $g \colon y(k) \to B$, and $g' \colon y(k) \to B$ as follows.
			For $f$, it suffices to define its component $f_{k-1}$, and for $g$ and $g'$, their components $g_k$ and $g'_k$, as the remaining components are then uniquely determined in order for the naturality squares with face maps to commute.
			For $0 \leq i \leq k$, let
			\[
			\begin{array}{rclrcl}
				f_{k-1} \colon \ynhollow[k](k-1)	&\to& B(k-1) \colon
				& \opcat{(d^k_i)}					&\mapsto& z_i. \\
				g_k \colon y(k)(k)	&\to& B(k) \colon
				& \id_k				&\mapsto& x. \\
				g'_k \colon y(k)(k)	&\to& B(k) \colon
				& \id_k				&\mapsto& x'. 
			\end{array}
			\]
			\[
			\begin{tikzcd}[row sep=scriptsize]
				\ynhollow[k]
				\ar[dr, "f"]
				\ar[d, "\subseteq" {description, rotate=270}, phantom] \\
				y(k)
				\ar[r, dotted, shift left=1, "g"]
				\ar[r, dotted, shift right=1, "g'"']
				& B
			\end{tikzcd}
			\]
			Both $g$ and $g'$ are factorisations of $f$ through $\ynhollow[k] \subseteq y(k)$: for $0 \leq i \leq k$ and $\opcat{(d^k_i)} \in \ynhollow[k](k-1)$,
			\begin{align*}
				g_{k-1}(\opcat{(d^k_i)}) 
				&= (g_{k-1} \cdot y(k)(d_i)) (\id_k)
				\tag*{def.~$y$ \cref{def:yoneda_embedding}} \\
				&= (B(d_i) \cdot g_k) (\id_k)
				\tag*{naturality $g$} \\
				&= B(d_i) (x) 
				\tag*{definition $g$} \\
				&= z_i
				\tag*{definition $x \in d\inv(\vec{z})$} \\
				&= f_{k-1} (\opcat{(d^k_i)}),
				\tag*{definition $f$}
			\end{align*}
			Because $B$ is $j^w$-separated, the factorisations are equal: $g=g'$.
			Hence, $x=x'$ in $B(k)$, i.e., $B$ is $k$-simple.
		\end{itemize}

		\item[$(\Leftarrow)$]
		Suppose $B$ is $k$-simple whenever $w_k = 1$.
		Take a $j^w$-dense subobject $A' \subseteq A$, a morphism $f \colon A' \to B$, and two factorisations $g,g' \colon A \to B$ of $f$ through $A$.
		To demonstrate $g=g'$, we prove $g_k = g'_k$ by induction on $k$.
		\begin{itemize}
			\item
			For $k=0$, let us distinguish cases based on $w_0$.
			\begin{itemize}
				\item 
				If $w_0 = 0$, then $A'(0) = A(0)$, and thus $g_0 = f_0 = g'_0$.
				
				\item 
				If $w_0 = 1$, then we know $B$ is $0$-simple.
				In case $A(0) = \emptyset$, then $g_0 = \emptyset = g'_0 \colon \emptyset \to B(0)$.
				Otherwise, take $x \in A(0)$.
				By $0$-simplicity, $g_0 (x) = g'_0(x)$ in $B(0)$.
			\end{itemize}
			
			\item 
			Suppose true until $k$ and let us prove $g_{k+1} = g'_{k+1}$.
			We distinguish cases based on $w_{k+1}$.
			\begin{itemize}
				\item 
				If $w_{k+1} = 0$, then $A'(k+1) = A(k+1)$, and thus $g_{k+1} = f_{k+1} = g'_{k+1}$.
				
				\item 
				If $w_{k+1} = 1$, then we know $B$ is $(k+1)$-simple.
				If $A(k+1) = \emptyset$, then $g_{k+1} = \emptyset = g'_{k+1} \colon \emptyset \to B(0)$.
				Otherwise, take $x \in A(k+1)$.
				Observe that $g_{k+1}(x), g'_{k+1}(x) \in B(k+1)$ share the same faces: for $0 \leq i \leq k+1$,
				\begin{align*}
					B(d^{k+1}_i) (g_{k+1} (x)) 
					&= g_k (A(d^{k+1}_i) (x))
					\tag*{$g$ naturality} \\
					&= g'_k (A(d^{k+1}_i) (x))
					\tag*{IH} \\
					&= B(d^{k+1}_i) (g'_{k+1} (x)).
					\tag*{$g$ naturality}
				\end{align*}
				Hence, if we define $z_i \defeq  B(d^{k+1}_i) (g_{k+1} (x))$ for $0 \leq i \leq k+1$, then
				\[
					g_{k+1}(x), g'_{k+1}(x) \in d\inv(\vec{z}).
				\]
				Because $B$ is $(k+1)$-simple, there is at most one element in the set $d\inv(\vec{z})$, hence $g_{k+1}(x) = g'_{k+1}(x)$.
			\end{itemize}
		\end{itemize}
	\end{itemize}
	We now prove Item 2.
	\begin{itemize}
		\item[($\Rightarrow$)]
		Suppose $B$ is $j^w$-complete.
		Take $k \in \N$ with $w_k = 1$.
		We prove that $B$ is $k$-complete.
		Consider the subpresheaf $\ynhollow[k] \subseteq y(k)$.
		By \cref{lem:j^w_dense_subpresheaf}, $\ynhollow[k]$ is $j^w$-dense, because it is different from $y(k)$ only on component $k$, $y(k)(k) \setminus \ynhollow[k](k) = \set{\id_k}$, and $w_k=1$.
		We distinguish cases based on $k$.
		
		\smallskip
		\begin{minipage}{.74\linewidth}
			\begin{itemize}
				\item 
				Suppose $k=0$.
				To prove that $B$ is $0$-complete, we show that $B(0)$ has at least one element.
				For $k=0$, the subpresheaf $\ynhollow[0] \subseteq y(0)$ is the inclusion of the empty graph into the graph with one vertex: $\emptyset \subseteq \set{\cdot}$.
				Because $B$ is $j^w$-complete, there exists a factorisation $g$ of $\emptyset \colon \emptyset \to B$ through $\emptyset \subseteq \set{\cdot}$.
				Hence, $g_0(\cdot) \in B(0)$ and $B(0)$ is nonempty, as desired.
			\end{itemize}
		\end{minipage}
		\hfill
		\begin{minipage}{.2\linewidth}
			\vspace*{-3mm}
			\begin{tikzcd}[row sep=scriptsize]
				\emptyset 
				\ar[dr, "\emptyset"]
				\ar[d, "\subseteq" {description, rotate=270}, phantom] \\
				\set{\cdot}
				\ar[r, dotted, "g"'] 
				& B
			\end{tikzcd}
		\end{minipage}   
		\begin{itemize}
			\item 
			Suppose $1 \leq k$.
			To prove that $B$ is $k$-complete, we assume the existence of a $k+1$-tuple $\vec{x} = (x_0, \ldots, x_k) \in B(k-1)^{k+1}$ and show that the set $d\inv(\vec{x})$ is nonempty.
			We define a $\cat{D}$-morphism $f\colon\ynhollow[k] \to B$ as follows.
			It suffices to define its component $f_{k-1}$, as the remaining components are then uniquely determined in order for the naturality squares with face maps to commute.
			\[
			\begin{array}{rcl}
				f_{k-1} \colon \ynhollow[k](k-1) &\to& B(k-1) \\
				\text{for all $0 \leq i \leq k$:}\quad \opcat{(d^k_i)} &\mapsto& x_i.
			\end{array}
			\]
			Because $B$ is $j^w$-complete, there exists at least one factorisation $g$ of $f$ through $\ynhollow[k] \subseteq y(k)$.
			\begin{equation}
				\label{eq:g_factorisation_of_f}
			\begin{tikzcd}[row sep=scriptsize]
				\ynhollow[k]
				\ar[dr, "f"]
				\ar[d, "\subseteq" {description, rotate=270}, phantom] \\
				y(k)
				\ar[r, dotted, "\exists g"']
				& B
			\end{tikzcd}
			\end{equation}
			We check that $g_k(\id_k) \in B(k)$ belongs to the set $d\inv(\vec{x})$:
			for $0 \leq i \leq k$,
			\begin{align*}
				B(d^k_i) (g_k(\id_k))
				&= g_{k-1} (y(k)(d^k_i)(\id_k))
				\tag*{$g$ naturality} \\
				&= g_{k-1} (\opcat{(d^k_i)})
				\tag*{def.~$y(k)(d^k_i)$ \cref{def:yoneda_embedding}} \\
				&= f_{k-1} (\opcat{(d^k_i)})
				\tag*{by \eqref{eq:g_factorisation_of_f}} \\
				&= x_i
				\tag*{definition $f$}
			\end{align*}
			Therefore, the set $d\inv(\vec{x})$ is nonempty, which proves that $B$ is $k$-complete.
		\end{itemize}

		\item[$(\Leftarrow)$]
		Suppose $B$ is $k$-complete whenever $w_k = 1$.
		Take a $j^w$-dense subobject $A' \subseteq A$, a $\cat{D}$-morphism $f \colon A' \to B$.
		We want to prove existence of a factorisation $g \colon A \to B$ of $f$ through $A' \subseteq A$.
		We demonstrate that such a $g$ exists, by constructing inductively $g_k$ for every $k \in \N$.
		\begin{itemize}
			\item
			For $k=0$, let us distinguish cases based on $w_0$.
			\begin{itemize}
				\item 
				If $w_0 = 0$, then $A'(0) = A(0)$ by \cref{lem:j^w_dense_subpresheaf}.
				Hence, define
				\[
					g_0 \defeq f_0 \colon A(0) = A'(0) \xrightarrow{} B(0).
				\]
				
				\item 
				If $w_0 = 1$, then by assumption $B$ is $0$-complete, i.e., has at least one vertex.
				In case $A(0) = \emptyset$, let $g_0 \defeq \emptyset \colon \emptyset \to B(0)$.
				Otherwise, given $x \in A(0)$, define $g_0 (x)$ to be any vertex in $B(0)$.
			\end{itemize}
			
			\item 
			Suppose $g_0, \ldots, g_k$ are constructed so that the naturality square \eqref{eq:proof_j^w_complete_1} commutes when $D \in \set{\ndimsemisSet,\semisSet}$ and both \eqref{eq:proof_j^w_complete_1} and \eqref{eq:proof_j^w_complete_2} commute when $D \in \set{\ndimsSet, \sSet}$, for all $0 \leq l < l+1 \leq k$, $0 \leq i \leq l+1$ and $0 \leq i' \leq l$.
			\\ 
			\noindent\begin{tabularx}{.8\textwidth}{@{}XX@{}}
				\begin{equation}
					\label{eq:proof_j^w_complete_1}
					\begin{tikzcd}[ampersand replacement=\&, scale cd=.9]
						{A(l+1)} \& {B(l+1)} \\
						{A(l)} \& {B(l)}
						\arrow["{g_{l+1}}", from=1-1, to=1-2]
						\arrow["{A(d^{l+1}_i)}", from=1-1, to=2-1]
						\arrow["{B(d^{l+1}_i)}"', from=1-2, to=2-2]
						\arrow["{g_l}"', from=2-1, to=2-2]
					\end{tikzcd}
				\end{equation} &
				\begin{equation}
					\label{eq:proof_j^w_complete_2}
					\begin{tikzcd}[ampersand replacement=\&, scale cd=.9]
						{A(l+1)} \& {B(l+1)} \\
						{A(l)} \& {B(l)}
						\arrow["{g_{l+1}}", from=1-1, to=1-2]
						\arrow["{A(s^l_{i'})}"', from=2-1, to=1-1]
						\arrow["{g_l}"', from=2-1, to=2-2]
						\arrow["{B(s^l_{i'})}", from=2-2, to=1-2]
					\end{tikzcd}
				\end{equation}
			\end{tabularx}\\
			We construct $g_{k+1} \colon A(k+1) \to B(k+1)$ so that naturality still holds, i.e., for $l=k$ either \eqref{eq:proof_j^w_complete_1} commutes or both \eqref{eq:proof_j^w_complete_1} and \eqref{eq:proof_j^w_complete_2} commute, depending on the category $\cat{D}$.
			We distinguish cases based on $w_{k+1}$.
			\begin{itemize}
				\item 
				If $w_{k+1} = 0$, then $A'(k+1) = A(k+1)$ by \cref{lem:j^w_dense_subpresheaf}.
				It suffices to take $g_{k+1} \defeq f_{k+1}$, which satisfies all required naturality squares because $f$ is a $\cat{D}$-morphism.
				
				\item 
				If $w_{k+1} = 1$, then $B$ is $(k+1)$-complete by assumption.
				In case $A(k+1) = \emptyset$, then $g_{k+1} \defeq \emptyset \colon \emptyset \to B(0)$.
				Otherwise, take an arbitrary $x \in A(k+1)$.
				Take all its faces $A(d^{k+1}_{k+1})(x), \ldots, A(d^{k+1}_{0})(x) \in A(k)$.
				By applying $g_k$, we obtain a $(k+2)$-tuple 
				\[
					\vec{z} \defeq (g_k( A(d^{k+1}_{k+1})(x), \ldots, g_k(A(d^{k+1}_{0})(x)) \in B(k)^{k+2}.
				\] 
				By $k$-completeness,
				the set $d\inv(\vec{z})$ is nonempty.
				We distinguish cases based on $\cat{D}$.
				
				Suppose we are in the case where $\cat{D} \in \set{\ndimsemisSet, \semisSet}$, take an arbitrary $x' \in d\inv(\vec{z})$.
				Let $g_{k+1}(x) \defeq x'$.
				This defines a function $g_{k+1} \colon A(k+1) \to B(k+1)$ which, by construction, makes \eqref{eq:proof_j^w_complete_1} commute for $l=k$.
				
				Now suppose we are in the case where $\cat{D} \in \set{\ndimsSet, \sSet}$.
				If $x$ is not a degenerate $k$-simplex, proceed as described just above, by taking an arbitrary $x' \in d\inv(\vec{z})$.
				In the case where $x = A(s^k_{i'})(x'')$ is a degenerate $k$-simplex, for some $0 \leq i' \leq k$ and $x'' \in A(k-1)$, we claim that $B(s^k_{i'})(g_k(x''))$ is in the set $d\inv(\vec{z})$.
				If that holds, then we can define $g_{k+1}(x) \defeq B(s^k_{i'})(g_k(x''))$ and obtain a function $g_{k+1} \colon A(k+1) \to B(k+1)$ which, by construction, makes \eqref{eq:proof_j^w_complete_1} and \eqref{eq:proof_j^w_complete_2} commute for $l=k$.
				We check our claim: for $0 \leq i \leq k+1$,
				\begin{align*}
					&B(d^{k+1}_i (B(s^k_{i'})  g_k(x'') )) 
					\hspace*{80mm}
					\\
					&=
					\begin{cases}
						(B(s^{k-1}_{i'} d^k_{i-1}) g_k) (x'')
						& \text{ if } i > i'+1\\
						g_k (x'')
						& \text{ if } i \in \set{i'+1, i'} \\
						(B(s^{k-1}_{i'-1} d^k_{i}) g_k) (x'')
						& \text{ if } i > i'+1
					\end{cases}
					\tag*{\makecell[r]{simplicial\\identities \eqref{eq:simplicial_identities}}} \\
					&=
					\begin{cases}
						(B(s^{k-1}_{i'}) g_{k-1} A(d^k_{i-1})) (x'')
						& \text{ if } i > i'+1\\
						g_k (x'')
						& \text{ if } i \in \set{i'+1, i'} \\
						(B(s^{k-1}_{i'-1}) g_{k-1} A(d^k_{i})) (x'')
						& \text{ if } i > i'+1
					\end{cases}
					\tag*{\makecell[r]{IH: $g_k, g_{k-1}$\\commute\\with $d$'s}} \\
					&=
					\begin{cases}
						(g_{k} A(s^{k-1}_{i'} d^k_{i-1})) (x'')
						& \text{ if } i > i'+1\\
						g_k (x'')
						& \text{ if } i \in \set{i'+1, i'} \\
						(g_{k} A(s^{k-1}_{i'-1} d^k_{i})) (x'')
						& \text{ if } i > i'+1
					\end{cases}
					\tag*{\makecell[r]{IH: $g_k, g_{k-1}$\\commute with $s$'s}} \\
					&= (g_k A( d^{k+1}_i s^k_{i'} )) (x'')
					\tag*{simplicial identities \eqref{eq:simplicial_identities}}
				\end{align*}
				which is the $(k+1-i)$\tss{th} component of $\vec{z}$, as desired.
				This concludes the proof.
				\qedhere
			\end{itemize}
		\end{itemize}
	\end{itemize}
\end{proof}

\subsection{Proofs of \texorpdfstring{\cref{sec:bicolour}}{Section 4}}

\repeatlemma{lem:8_topologies_in_BiColGraph}

\begin{proof}[Proof of \cref{lem:8_topologies_in_BiColGraph}]
    In $\BiColGraph$, the terminal object $1$ is 
    \smash{\(
        \begin{tikzcd}
            \cdot 
                \ar[loop, distance=.8em, color=fcolor, in=180+15, out=180-15]
                \ar[loop, distance=.8em, color=scolor, dashed , in=0+15, out=0-15]
        \end{tikzcd}
    \)},
    and the image of $\true \colon 1 \to \Omega$ is 
    \smash{\(
        \begin{tikzcd}
            1
            \ar[loop, "{\stot}"', distance=.8em, color=fcolor, in=180+15, out=180-15]
            \ar[loop, "{\stotp}"', distance=.8em, color=scolor, dashed , in=0+15, out=0-15]
        \end{tikzcd}
    \)}.
    Because of \cref{def:topology_on_topos} \ref{it:topology_true}, a topology $j \colon \Omega \to \Omega$ on $\BiColGraph$ must leave the image of $\true$ untouched, i.e., $j_V$ sends the vertex $1$ to itself, $j_E$ sends the edge $s \to t$ to itself, and $j_{E'}$ sends the edge $s' \to t'$ to itself.
	For the vertex $0 \in \Omega(V)$, there are two choices.
	
	The first choice is $j_V(0) = 0$.
	As in \cref{lem:unique_incidence_except_yn_ynhollow}, only $j_E((s,t))$ and $j_{E'}((s', t'))$ have two possible choices: respectively either $(s',t')$ or $s \to t$, and either $(s', t')$ or $s' \to t'$.
	That gives us the following four topologies:
	\begin{table}[ht]
		\centering
		\begin{tabular}{c|cc}
			$j$ 		& $j_E((s,t))$	& $j_{E'}((s',t'))$	\\ \hline
			$j^{00}$  	& $(s,t)$  		& $(s',t')$         \\
			$j^{01}$  	& $s \to t$     & $(s',t')$         \\
			$j^{02}$  	& $(s,t)$       & $s' \to t'$       \\
			$j^{03}$	& $s \to t$		& $s' \to t'$
		\end{tabular}
	\end{table}

	The second choice is $j_V(0) = 1$.
	To respect source and target, $j_E(0)$ must be a loop on the vertex $1$, i.e., it can be either $(s,t)$ or $s \to t$ in $\Omega(E)$.
	We observe that $j_E(0)$ determines the mapping of all edges in $\Omega(E)$:
	\begin{itemize}
		\item 
		Suppose $j_E(0) = (s,t)$.
		By idempotence \ref{it:topology_idempotent} of topologies: $j_E((s,t)) = (s,t)$.
		By monotonicity \ref{it:topology_monotone} of topologies and the fact that $0 \leq s, t \leq (s,t)$ in $\Omega(E)$: $j_E(s) = j_E(t) = (s,t)$.
		
		\item 
		Suppose $j_E(0) = s \to t$, which is the greatest element.
		By monotonicity, all other edges are also mapped onto the greatest element: $j_E(s) = j_E(t) = j_E((s,t)) = s \to t$. 
	\end{itemize}
	Similarly, $j_{E'}(0')$ can be either $(s',t')$ or $s' \to t'$, and this choice determines the mapping of all edges in $\Omega(E')$.
	That gives us the remaining four topologies:\\
	\begin{table}[H]
		\centering
		\begin{tabular}{c|cc}
			$j$ 		& $j_E(0)$		& $j_{E'}(0')$	\\ \hline
			$j^{10}$  	& $(s,t)$  		& $(s',t')$         \\
			$j^{11}$  	& $s \to t$     & $(s',t')$         \\
			$j^{12}$  	& $(s,t)$       & $s' \to t'$       \\
			$j^{13}$	& $s \to t$		& $s' \to t'$
		\end{tabular}
	\end{table}
    This concludes the proof. \qedhere
\end{proof}

\subsection{Proofs of \texorpdfstring{\cref{sec:fuzzy_case}}{Section 5}}

\repeatlemma{lem:alternate_def_nucleus}

\begin{proof}[Proof of \cref{lem:alternate_def_nucleus}]
	For the first claim, we prove (D)-(G):
	\begin{enumerate}[topsep=4pt, noitemsep]
		\item[(D)]
		Every element is smaller than the greatest element, thus $\phi(\top) \leq \top$.
		Moreover, because $\phi$ is increasing {\normalfont(B)}, we have $\top \leq \phi(\top)$. 
		Hence, equality follows: $\top = \phi(\top)$.
		
		\item[(E)]
		By $\meet$-preservation {\normalfont(A)}, we have $\phi(a) = \phi(a \meet b) = \phi(a) \meet \phi(b) \leq \phi(b)$ for any $a \leq b$. 
		
		\item[(F)]
		The first inequality $\phi(\phi(a)) \leq \phi(a)$ is given by {\normalfont(C)}.
		The second inequality $\phi(a) \leq \phi(\phi(a))$ follows from $\phi$ being increasing {\normalfont(B)}.
		
		\item[(G)]
		We have $\phi(a) \meet b \stackrel{{\normalfont(B)}}{\leq} \phi(a) \meet \phi(b) \stackrel{{\normalfont(A)}}{=} \phi(a \meet b)$.
	\end{enumerate}
	For the second claim, we suppose that a function $\phi \colon \labels \to \labels$ satisfies (B), (C), (E), and (G).
	We prove that $\phi$ then also satisfies (A), i.e., that $\phi(a \meet b) = \phi(a) \meet \phi(b)$ for all $a,b \in \labels$.
	\begin{enumerate}[topsep=4pt, noitemsep]
		\item[$(\leq)$]
		One direction follows from monotonicity~(E):
		\[
			\left.
			\begin{aligned}
				a \meet b \leq a &\stackrel{\text{(E)}}{\implies} \phi(a \meet b) \leq \phi(a) \\
				a \meet b \leq b &\stackrel{\text{(E)}}{\implies} \phi(a \meet b) \leq \phi(b)
			\end{aligned}
			\right\}
			\implies
			\phi(a \meet b) \leq \phi(a) \meet \phi(b).
		\]
		
		\item[$(\geq)$]
		The other direction follows from (G), (E), and (C):
		\[
			\phi(a) \meet \phi(b)
			\stackrel{\text{(G)}}{\leq}
			\phi(a \meet \phi(b))
			\stackrel{\text{(G), (E)}}{\leq}
			\phi(\phi(a \meet b))
			\stackrel{\text{(C)}}{\leq}
			\phi(a \meet b).
			\qedhere
		\]
	\end{enumerate}
\end{proof}

\repeatlemma{lem:topologies_on_FuzzySet}

\begin{proof}[Proof of \cref{lem:topologies_on_FuzzySet}]
    For simplicity, the memberships will sometimes be denoted in superscripts instead of using membership functions.
	Recall that $0$ denotes the empty (fuzzy) set and $1 = \set{ \, \cdot^\top }$ denotes the terminal fuzzy set.
	Take a topology $\topology$ on $\FuzzySetDefault$.
	Because $0 \subseteq 1$ is (vacuously) a \textit{strong} fuzzy subset, by \ref{def:topology_on_quasitopos_5_strong} its closure must also be a strong subset of $1$, i.e., $0$ or $1$.
	
	The first option is $\topologybar{0 \subseteq 1} = (1 \subseteq 1)$.
	Let $A$ be an arbitrary fuzzy set.
	As shown in the diagram below, by stability under pullbacks \ref{def:topology_on_quasitopos_4_stable_under_PB} the closure $\topologybar{0 \subseteq A}$ is the pullback of $A \to 1$ and of $(1 \subseteq 1)$ and hence is necessarily the whole of $A$.
	By monotonicity \ref{def:topology_on_quasitopos_3_monotone}, this forces every closure to be $\topologybar{A' \subseteq A} = (A \subseteq A)$, i.e., we have the \emph{trivial} topology.
	\[
	\begin{tikzcd}[column sep=5mm, row sep=5mm,ampersand replacement=\&]
		0 \& 0 \\
		A \& 1
		\ar[from=1-1, to=1-2, equal]
		\ar[from=1-1, to=2-1, tail]
		\ar[from=1-2, to=2-2, tail]
		\ar[from=2-1, to=2-2, "!"']
		\ar[from=1-1, to=2-2, "\lrcorner"{anchor=center, pos=0.125}, draw=none]
	\end{tikzcd}
	\qquad 
	\stackrel{\text{\ref{def:topology_on_quasitopos_4_stable_under_PB}}}{\implies} \qquad
	\begin{tikzcd}[column sep=5mm, row sep=5mm,ampersand replacement=\&]
		\topologybar{0 \subseteq A} \& 1 \\
		A \& 1
		\ar[from=1-1, to=1-2]
		\ar[from=1-1, to=2-1, tail]
		\ar[from=1-2, to=2-2, tail, "{\topologybar{0 \subseteq 1} = (1 \subseteq 1)}"]
		\ar[from=2-1, to=2-2, "!"']
		\ar[from=1-1, to=2-2, "\lrcorner"{anchor=center, pos=0.125}, draw=none]
	\end{tikzcd}
	\]
	
	The second option is $\topologybar{0 \subseteq 1} = (0 \subseteq 1)$.
	Take an arbitrary fuzzy set $A$.
	With the same reasoning as just above, we have that $\topologybar{0 \subseteq A}$ is the pullback of $A \to 1$ and $\topologybar{0 \subseteq 1} = (0 \subseteq 1)$, hence it is $(0 \subseteq A)$.
	For an arbitrary fuzzy subset $A' \subseteq A$, the fuzzy subsets $A' \subseteq A$ and $A \setminus A' \subseteq A$ are complements and have thus an empty intersection, i.e., an empty pullback, as depicted in the diagram below on the left.
	Therefore, by \ref{def:topology_on_quasitopos_4_stable_under_PB}, the pullback of $\topologybar{A' \subseteq A}$ and $A \setminus A' \subseteq A$ must be the empty set, meaning that the subsets have an empty intersection. 
	The complements $A'$ and $A \setminus A'$ are already maximal in order to have an empty intersection.
	Hence the closure $\topologybar{A' \subseteq A}$ has added no element to $A'$.
	\[
	\begin{tikzcd}[column sep=5mm, row sep=5mm,ampersand replacement=\&]
		0 \& A' \\
		A \setminus A' \& A
		\ar[from=1-1, to=1-2, equal]
		\ar[from=1-1, to=2-1, tail]
		\ar[from=1-2, to=2-2, tail]
		\ar[from=2-1, to=2-2, tail]
		\ar[from=1-1, to=2-2, "\lrcorner"{anchor=center, pos=0.125}, draw=none]
	\end{tikzcd}
	\qquad
	\stackrel{\text{\ref{def:topology_on_quasitopos_4_stable_under_PB}}}{\implies}
	\qquad
	\begin{tikzcd}[column sep=5mm, row sep=5mm,ampersand replacement=\&]
		0 \& \topologybar{A' \subseteq A} \\
		A \setminus A' \& A
		\ar[from=1-1, to=1-2]
		\ar[from=1-1, to=2-1, tail, "{\makecell[r]{
			\topologybar{(0 \subseteq A \setminus A')}\\
			= (0 \subseteq A \setminus A')
			}}"{swap, xshift=-.3mm}]
		\ar[from=1-2, to=2-2, tail]
		\ar[from=2-1, to=2-2, tail]
		\ar[from=1-1, to=2-2, "\lrcorner"{anchor=center, pos=0.125}, draw=none]
	\end{tikzcd}
	\]
	
	In the rest of the proof, we show the bijection between topologies that do not add elements, and nuclei $\phi \colon \labels \to \labels$.
	
	(Topology $\implies$ Nucleus)
	Start with a topology that does not add elements, i.e., only the memberships can change.
	Define $\phi \colon \labels \to \labels$ as the change that happens to the membership of a single element as a subset of $1$, i.e.,
	\[
		\topologybar{\set{ \, \cdot^x } \subseteq 1} = \big( \set{ \, \cdot^{\phi(x)}} \subseteq 1 \big).
	\]
	By \ref{def:topology_on_quasitopos_4_stable_under_PB}, 
	$\topologybar{\set{\,\cdot^x} \subseteq \set{\,\cdot^y}} = \big(\set{\,\cdot^{\phi(x) \meet y}} \subseteq \set{\,\cdot^y} \big)$:
	\[
	\begin{tikzcd}[column sep=5mm, row sep=5mm,ampersand replacement=\&]
		\cdot^{x \meet y} = \cdot^x \& \cdot^x \\
		\cdot^y \& \cdot^\top
		\ar[from=1-1, to=1-2, equal]
		\ar[from=1-1, to=2-1, tail]
		\ar[from=1-2, to=2-2, tail]
		\ar[from=2-1, to=2-2, tail, "\inclusion"']
		\ar[from=1-1, to=2-2, "\lrcorner"{anchor=center, pos=0.125}, draw=none]
	\end{tikzcd}
	\qquad
	\stackrel{\text{\ref{def:topology_on_quasitopos_4_stable_under_PB}}}{\implies}
	\qquad
	\begin{tikzcd}[column sep=5mm, row sep=5mm,ampersand replacement=\&]
		\topologybar{\cdot^x \subseteq \cdot^y} = \cdot^{\phi(x) \meet y} \& \cdot^{\phi(x)} \\
		\cdot^y \& \cdot^\top
		\ar[from=1-1, to=1-2]
		\ar[from=1-1, to=2-1, tail]
		\ar[from=1-2, to=2-2, tail, "\topologybar{\cdot^x \subseteq 1}"]
		\ar[from=2-1, to=2-2, tail, "\inclusion"']
		\ar[from=1-1, to=2-2, "\lrcorner"{anchor=center, pos=0.125}, draw=none]
	\end{tikzcd}
	\]
	This stays true for arbitrary $A' \subseteq A$, when $a^x \in A'$ and $a^y \in A$: then $a$ has membership $\phi(x) \meet y$ in $\topologybar{A' \subseteq A}$.
	This follows from the fact that pullback of $\topologybar{A' \subseteq A}$ and $\set{a^y} \subseteq A$ is $\set{a^{\phi(x) \meet y}}$.
	
	\[
	\begin{tikzcd}[column sep=5mm, row sep=5mm,ampersand replacement=\&]
		a^x \& A' \\
		a^y \& A
		\ar[from=1-1, to=1-2, tail]
		\ar[from=1-1, to=2-1, tail]
		\ar[from=1-2, to=2-2, tail]
		\ar[from=2-1, to=2-2, tail]
		\ar[from=1-1, to=2-2, "\lrcorner"{anchor=center, pos=0.125}, draw=none]
	\end{tikzcd}
	\qquad
	\stackrel{\text{\ref{def:topology_on_quasitopos_4_stable_under_PB}}}{\implies}
	\qquad
	\begin{tikzcd}[column sep=5mm, row sep=5mm,ampersand replacement=\&]
		a^{\phi(x) \meet y} \& \topologybar{A' \subseteq A} \\
		a^y \& A
		\ar[from=1-1, to=1-2, tail]
		\ar[from=1-1, to=2-1, tail, "\topologybar{a^x \subseteq a^y}"']
		\ar[from=1-2, to=2-2, tail]
		\ar[from=2-1, to=2-2, tail] 
		\ar[from=1-1, to=2-2, "\lrcorner"{anchor=center, pos=0.125}, draw=none]
	\end{tikzcd}
	\]
	To prove that $\phi$ is a nucleus, by \cref{lem:alternate_def_nucleus} it suffices to prove that it is 
	increasing~{\normalfont(B)}, satisfies one direction of idempotency~{\normalfont(C)}, is monotone~{\normalfont(E)}, and that it satisfies $\phi(x) \meet y \leq \phi(x \meet y)$~{\normalfont(G)}.
	These properties hold:
	\ref{def:topology_on_quasitopos_1_increasing} implies $\phi$ is increasing.
	\ref{def:topology_on_quasitopos_2_idempotent} implies $\phi$ is idempotent.
	\ref{def:topology_on_quasitopos_3_monotone} implies $\phi$ is monotone.
	By \ref{def:topology_on_quasitopos_4_stable_under_PB}, we have
	$\phi(x \meet y) \meet y = \phi(x) \meet y$, which implies $\phi(x) \meet y \leq \phi(x \meet y)$.
	\begin{align*}
		\begin{tikzcd}[column sep=5mm, row sep=5mm,ampersand replacement=\&]
			\cdot^{x \meet y} \& \cdot^x \\
			\cdot^y \& \cdot^\top
			\ar[from=1-1, to=1-2, tail]
			\ar[from=1-1, to=2-1, tail]
			\ar[from=1-2, to=2-2, tail]
			\ar[from=2-1, to=2-2, tail]
			\ar[from=1-1, to=2-2, "\lrcorner"{anchor=center, pos=0.125}, draw=none]
		\end{tikzcd}
		\qquad
		&\stackrel{\text{\ref{def:topology_on_quasitopos_4_stable_under_PB}}}{\implies}
		\qquad
		\begin{tikzcd}[column sep=5mm, row sep=5mm,ampersand replacement=\&]
			\topologybar{\cdot^{x \meet y} \subseteq \cdot^y} = \cdot^{\phi(x \meet y) \meet y} \& \cdot^{\phi(x)} \\
			\cdot^y \& \cdot^\top
			\ar[from=1-1, to=1-2, tail]
			\ar[from=1-1, to=2-1, tail]
			\ar[from=1-2, to=2-2, tail, "\topologybar{\cdot^x \subseteq 1}"]
			\ar[from=2-1, to=2-2, tail]
			\ar[from=1-1, to=2-2, "\lrcorner"{anchor=center, pos=0.125}, draw=none]
		\end{tikzcd} 
	\end{align*}

	(Nucleus $\implies$ Topology)
	Suppose a nucleus $\phi \colon \labels \to \labels$ is given.
	Define 
	\begin{align*}
		\topology_{(A,\alpha)} \colon \MonoClass(A,\alpha) &\to \MonoClass(A,\alpha) \\
		(A',\alpha') &\mapsto (A', \phi \alpha' \meet \alpha).
	\end{align*}
	We check the axioms of \cref{def:closure_operator}.
	\begin{enumerate}[topsep=2pt]
		\item[\ref{def:topology_on_quasitopos_1_increasing}] 
		Increasing: 
		Because $\phi$ is increasing {\normalfont(B)}, we have $\alpha' \leq \phi \alpha'$.
		And by hypothesis $\alpha' \leq \alpha$.
		Hence $\alpha' \leq \phi \alpha' \meet \alpha$.
		
		\item[\ref{def:topology_on_quasitopos_2_idempotent}] 
		Idempotent:
		We check that $\phi \alpha' \meet \alpha = \phi(\phi\alpha' \meet \alpha) \meet \alpha$.
		
		$(\leq)$
		As said in item (i), $\alpha' \leq \phi \alpha' \meet \alpha$.
		Hence, by monotonicity of $\phi$, we have $\phi \alpha' \leq \phi(\phi\alpha' \meet \alpha)$.
		We take the conjunction with $\alpha$ on both sides to obtain what we want.
		
		$(\geq)$
		By property of a conjunction, $\phi \alpha' \meet \alpha \leq \phi \alpha'$.
		By monotonicity and idempotence of $\phi$, we have $\phi(\phi \alpha' \meet \alpha) \leq \phi \phi \alpha' = \phi \alpha'$.
		We take the conjunction with $\alpha$ on both sides to obtain what we want.
		
		\item[\ref{def:topology_on_quasitopos_3_monotone}]  
		Monotone:
		Suppose $(A',\alpha') \subseteq (A'',\alpha'') \subseteq (A,\alpha)$.
		We check that $\phi \alpha' \meet \alpha \leq \phi \alpha'' \meet \alpha$.
		This follows from $\alpha' \leq \alpha''$, and monotonicity of $\phi$ and of $- \meet \alpha$.
		
		\item[\ref{def:topology_on_quasitopos_4_stable_under_PB}]
		Stable under pullbacks:
		Consider the situation of the diagram.
		\[
		\begin{tikzcd}[column sep=5mm, row sep=5mm,ampersand replacement=\&]
			(A',\alpha') \& (B',\beta') \\
			(A,\alpha) \& (B,\beta)
			\ar[from=1-1, to=1-2, "g\restrict{A'}"]
			\ar[from=1-1, to=2-1, tail, "\inclusion_1"']
			\ar[from=1-2, to=2-2, tail, "\inclusion_2"]
			\ar[from=2-1, to=2-2, "g"']
			\ar[from=1-1, to=2-2, "\lrcorner"{anchor=center, pos=0.125}, draw=none]
		\end{tikzcd}
		\]
		For $(A',\alpha')$ to be the pullback means that $\alpha' = \alpha \meet \beta' g$.
		We check that $\topologybar{(A',\alpha')}$ is the pullback of $g$ and $\topologybar{\inclusion_2}$.
		Since the closure only changes the memberships, it means checking that $\phi\alpha' \meet \alpha = \alpha \meet \big( \phi\beta' \meet \beta \big)g$.
		
		$(\leq)$ 
		We reason as follows
		\begin{align*}
			&& \alpha' = \alpha \meet \beta' g &\leq \beta' g
			\tag*{def.~$\alpha'$ and property of $\meet$} \\
			&\Rightarrow & \phi \alpha' &\leq \phi \beta' g
			\tag*{$\phi$ monotone} \\
			&\Rightarrow & \phi \alpha' \meet \alpha &\leq \alpha \meet \phi \beta' g
			\tag*{$-\meet \alpha$ monotone} \\
			&\Rightarrow & \phi \alpha' \meet \alpha &\leq \alpha \meet \beta g \meet \phi \beta' g
			\tag*{$\alpha \leq \beta g$ by def.~of $g$} \\
			&\Rightarrow & \phi \alpha' \meet \alpha &\leq \alpha \meet (\beta \meet \phi \beta')g
			\tag*{rewriting}
		\intertext{
			$(\geq)$
			By using the hypothesis $\phi(x) \meet y \leq \phi(x \meet y)$ with $x \defeq \beta' g$ and $y \defeq \alpha$, we obtain 
		}
			&&\alpha \meet \phi \beta' g &\leq \phi(\alpha \meet \beta' g)
			\\
			&\Rightarrow & \alpha \meet \phi \beta' g &\leq \phi \alpha'
			\tag*{def.~$\alpha'$} \\
			&\Rightarrow & \alpha \meet \phi \beta' g \meet \beta g &\leq \phi \alpha'
			\tag*{identical, since $\alpha \leq \beta g$ by def.~of $g$} \\
			&\Rightarrow & \alpha \meet (\phi \beta' \meet \beta) g &\leq \phi \alpha' \meet \alpha
			\tag*{$\alpha$ already in the meet of the LHS}
		\end{align*}
		
		\item[\ref{def:topology_on_quasitopos_5_strong}]  
		Preserve strongness:
		Take $(A', \alpha') \subseteq (A,\alpha)$ strong, i.e., with $\alpha' = \alpha\smash{\restrict{A'}}$.
		To see that its closure $(A', \phi\alpha \meet \alpha) \subseteq (A,\alpha)$ is also strong, we check that $\phi\alpha \meet \alpha = \alpha$ on $A'$.
		This follows from $\alpha \leq \phi \alpha$, which holds because $\phi$ is increasing.
	\end{enumerate}
	
	Notice that the two constructions are inverses of each other, because in each case, $\phi(x)$ represents the change of membership of a singleton $\set{ \, \cdot^x } \subseteq 1$.
\end{proof}

\repeatlemma{lem:lnotlnot_monad_plus_inequality_in_DMHeytAlg}

\begin{proof}[Proof of \cref{lem:lnotlnot_monad_plus_inequality_in_DMHeytAlg}]
    The fact that $\phi_{\lnot \lnot}$ is a monad, i.e., increasing, monotone and idempotent, is true in any Heyting algebra \cite[Prop.~1.2.8 (1),(2),(4)]{Borceux_1994_vol3}.
	Suppose now additionally that $(\labels, \leq)$ is a de Morgan Heyting algebra.
	By \cref{lem:alternate_def_nucleus}, in order to prove that $\phi_{\lnot \lnot}$ is a nucleus, it suffices to prove that $({\lnot \lnot} x) \meet y \leq {\lnot \lnot} (x \meet y)$.
	By the definition of Heyting algebra , we have an equivalence:
	\[
		\lnot \lnot x \meet y \leq \lnot \lnot (x \meet y) = (\lnot(x \meet y) \Rightarrow \bot)
		\quad\stackrel{\text{HA def.}}{\iff}\quad
		(\lnot \lnot x \meet y) \meet \lnot (x \meet y) \leq \bot.
	\]
	The right-hand side of this equivalence holds:
	\begin{align*}
		\lnot \lnot x \meet y \meet \lnot (x \meet y)
		&\leq \lnot \lnot x \meet \lnot \lnot y \meet \lnot (x \meet y)
		\tag*{$\lnot \lnot$ increasing} \\
		&= \lnot (\lnot x \join \lnot y) \meet \lnot (x \meet y)
		\tag*{$1$st De Morgan law} \\
		&= \lnot \lnot (x \meet y) \meet \lnot (x \meet y)
		\tag*{$2$nd De Morgan law} \\
		&= \big(\lnot (x \meet y) \to \bot \big) \meet \lnot (x \meet y)
		\tag*{def.~$\lnot$} \\
		&\leq \bot
		\tag*{Modus ponens}
	\end{align*}
	Therefore $({\lnot \lnot} x) \meet y \leq {\lnot \lnot} (x \meet y)$ holds and $\phi_{\lnot\lnot}$ is a nucleus by \cref{lem:alternate_def_nucleus}, which concludes the proof.
\end{proof}

\repeatlemma{lem:sheaves_in_FuzzySet}

\begin{proof}[Proof of \cref{lem:sheaves_in_FuzzySet}]
    Take a topology on $\FuzzySetDefault$ that corresponds to a nucleus $\phi \colon \labels \to \labels$ as described in \cref{lem:topologies_on_FuzzySet}.
	A fuzzy subset $(A',\alpha') \subseteq (A,\alpha)$ is dense if and only if $A' = A$, because the closure does not add elements, and $\phi\alpha' \meet \alpha = \alpha$, which is equivalent to $\alpha \leq \phi \alpha'$.   
	Fix a fuzzy set $(B,\beta)$ and take an arbitrary dense fuzzy subset $(A,\alpha') \subseteq (A,\alpha)$ and arbitrary $f \colon (A,\alpha') \to (B, \beta)$, meaning $\alpha' \leq \beta f$.
	Hence,
	\begin{equation}
		\label{eq:sheaf_requirement}
		\alpha \stackrel{\text{dense}}{\leq} \phi \alpha' \stackrel[\text{+ $\phi$ mon.}]{\text{hyp.~$f$}}{\leq} \phi \beta f.
	\end{equation}
	If a factorisation $g$ of $f$ through $(A,\alpha)$ exists, it is necessarily equal to $f$ and hence unique. 
	This means that every $(B, \beta)$ is separated.
	Meanwhile, for $(B,\beta)$ to be a sheaf requires also the \textit{existence} of a factorisation $g=f \colon (A,\alpha) \to (B,\beta)$, which must be a fuzzy set morphism, i.e., satisfy $\alpha \leq \beta f$.
	We prove the correspondence of the lemma.
	
	\medskip
	\noindent\begin{minipage}{.8\linewidth}
		\begin{itemize}
			\item[$(\Leftarrow)$]
			Suppose $\ima(\beta) \subseteq \ima(\phi)$.
			Then $\phi \beta f = \beta f$ by the assumption and idempotence of $\phi$.
			Combined with \eqref{eq:sheaf_requirement}, we have that $\alpha \leq \beta f$, proving that $(B,\beta)$ is a sheaf (since $\alpha$ and $f$ are arbitrary).
			
			\item[$(\Rightarrow)$]
			Suppose $(B, \beta)$ is a sheaf.
			To prove $\ima(\beta) \subseteq \ima(\phi)$, take $b^x \in B$ with membership $x = \beta(b) \in \ima(\beta)$ and let us prove that $x \in \ima(\phi)$.
			Take $A' \defeq \set{b^x}$ and $A \defeq \set{b^{\phi(x)}}$, and $f = \inclusion$.
			Because $(B,\beta)$ is a sheaf, the factorisation exists, implying that the memberships satisfy $x \leq \phi(x) \leq x$.
			Therefore, $x= \phi(x) \in \ima(\phi)$ as desired.
		\end{itemize}
	\end{minipage}
	\begin{minipage}{.2\linewidth}
		\begin{center}
			\begin{tikzcd}[ampersand replacement=\&]
				b^x \& \\
				b^{\phi(x)} \& B
				\ar[from=1-1, to=2-1, tail, "\inclusion"']
				\ar[from=1-1, to=2-2, tail, "\inclusion"]
				\ar[from=2-1, to=2-2, dotted, tail]
			\end{tikzcd}
		\end{center}
		\qedhere
	\end{minipage}
\end{proof}

\end{document}